\documentclass[11pt]{article}
\usepackage[utf8]{inputenc}
\usepackage[left=1in, right=1in, bottom=1.15in, top=1.1in]{geometry}

\def\GPF{{\texttt{Generate-PI-Function}}}
\def\Alg{{\texttt{Point-to-Query}}}
\usepackage{MG}

\def\Sol{\texttt{Sol}}
\def\ALG{\textsf{ALG}}

\title{%
The Mystery Deepens: 
\\ On the Query Complexity of Tarski Fixed Points}

\author{
Xi Chen\footnote{Supported by NSF grants IIS-1838154, CCF-2106429 and CCF-2107187.}\\ Columbia University\\\url{xichen@cs.columbia.edu}\hspace{-0.2cm}
\and Yuhao Li\footnote{Supported by NSF grants IIS-1838154, CCF-2106429 and CCF-2107187.}\\Columbia University\\\url{yuhaoli@cs.columbia.edu}\hspace{-0.2cm}
\and Mihalis Yannakakis\footnote{Supported by NSF grants CCF-2107187 and CCF-2212233.}\\Columbia University\\\url{mihalis@cs.columbia.edu}\vspace{0.1cm}
}

\date{}
\begin{document}
\maketitle
\begin{abstract}
We give an $O(\log^2 n)$-query algorithm for finding a Tarski fixed point over the $4$-dimensional lattice $[n]^4$,
matching the $\Omega(\log^2 n)$ lower bound of  \cite{EPRY20}. 
Additionally, our algorithm yields an ${O(\log^{\lceil (k-1)/3\rceil+1} n)}$-query algorithm %
over 
$[n]^k$ for any constant $k$, improving the previous best upper bound ${O(\log^{\lceil (k-1)/2\rceil+1} n)}$ of~\cite{CL22}.

Our algorithm uses a new framework based on \emph{safe partial-information} functions. The latter were introduced in~\cite{CLY23} to give a reduction from the Tarski problem to its promised version with a unique fixed point.
This is the first time they are directly used to design new algorithms for Tarski fixed points.

\end{abstract}

\thispagestyle{empty}
\newpage 
\setcounter{page}{1}
\newpage

\section{Introduction}

In 1955, Tarski~\cite{tarski1955lattice} proved the seminal theorem that every monotone function over a complete lattice $(L,\preceq)$ has a \emph{fixed point} $x^*\in L$ with $f(x^*)=x^*$, where $f$ is said to be \emph{monotone} if $f(a)\preceq f(b)$ whenever $a\preceq b$.
Tarski's fixed point theorem has since found extensive applications across many fields, including verification, semantics, game theory, and economics. For example, in game theory, it has been applied to establish the existence of pure equilibria in an important class of games known as supermodular games~\cite{topkis1979equilibrium,topkis1998supermodularity,milgrom1990rationalizability}; in economics, it  provides an elegant foundation for stability analysis in financial networks~\cite{eisenberg2001systemic}.

In computer science,  finding a Tarski fixed point of a monotone function is known to \mbox{subsume} several longstanding open problems  \cite{EPRY20}, including parity games \cite{emerson1991tree}, mean-payoff games \cite{zwick1996complexity}, Condon's simple stochastic games~\cite{condon1992complexity}, and Shapley's stochastic games~\cite{shapley1953stochastic}. These problems have important applications in verification and semantics\hspace{0.06cm}---\hspace{0.06cm}for example,   parity games~are linear-time equivalent to $\mu$-calculus model checking~\cite{emerson_modelchecking_1993}\hspace{0.06cm}---\hspace{0.06cm}and have also captivated complexity theorists due to their distinctive complexity status: they are among the few natural problems known to lie in $\NP \cap \coNP$, yet whether {they} admit polynomial-time algorithms remains a notorious open problem.

Despite this strong motivation, our understanding of the complexity of the Tarski fixed point problem remains rather limited. In this paper, we study the query complexity of finding a fixed point of a monotone function over  the complete lattice $([n]^k,\preceq)$, where $[n]^k:=\{1,\ldots,n\}^k$ denotes the $k$-dimensional grid and $a\preceq b$ if $a_i\le b_i$ for all $i\in [k]$. An algorithm under this model is given $n$ and $k$ and has query access to an unknown monotone function $f: [n]^k\rightarrow [n]^k$. Each round the algorithm can send a query $x\in [n]^k$ to reveal $f(x)$; the goal is to find a fixed point $\smash{x }$ of $f$ satisfying $f(x)=x$ using as few queries as possible. We will refer to this problem as $\textsc{Tarski}(n,k)$. To put the two parameters in context, in the applications, $n$ is typically exponential in the input size and $k$ is polynomial. Thus, polynomial complexity means polynomial in $\log n$ and $k$.

\subsection{Prior Work}
The classic Tarski (or Kleene) iteration starts from the bottom element $\mathbf{1}_k$ of the lattice (or the top element $\mathbf{n}_k$), and applies $f$ repeatedly until it reaches a fixed point; the query complexity can be shown to be $\Theta (nk)$ in the worst~case.~On~the other hand, Dang, Qi and   Ye \cite{dang2011computational} obtained a recursive binary search algorithm for $\Tarski(n,k)$ with $O(\log^k n)$ queries for any constant $k$. 

For $k=2$, Etessami,   Papadimitriou,   Rubinstein and 
Yannakakis \cite{EPRY20} proved a matching $\Omega(\log^2 n)$ lower bound for $\Tarski(n,2)$, even against randomized algorithms, suggesting that the  $O(\log^k n)$ upper bound of \cite{dang2011computational} might be optimal for all constant $k$.
Surprisingly, \mbox{Fearnley,} P{\'{a}}lv{\"{o}}lgyi and Savani \cite{FPS22} gave an $O(\log^2 n)$-query algorithm for $\Tarski(n,3)$, showing that~its query complexity is in fact $\Theta(\log^2 n)$. Their algorithm further yields an $\smash{O(\log^{\lceil 2k/3\rceil}n)}$-query algorithm for $\Tarski(n,k)$ for any constant $k$, and  was subsequently improved to $O(\log^{\lceil (k-1)/2\rceil+1} n)$ by Chen and Li \cite{CL22}, which remained the state of the art prior to our work.

More recently, Br{\^a}nzei,  Phillips and Recker \cite{branzei2025tarski} extended the family of ``herringbone'' functions used in the lower bound construction of \cite{EPRY20} to high dimensions and obtained a query lower bound of $\tilde{\Omega}(k\log^2 n)$ for $\Tarski(n,k)$.  Haslebacher and Lill \cite{haslebacher2025levelset} gave a completely different algorithm from that of \cite{FPS22}, which also solves $\Tarski(n,3)$ using $O(\log^2n)$ queries.

Beyond direct advances on $\Tarski(n,k)$, %
Chen, Li and Yannakakis \cite{CLY23} showed that the Tarski fixed point problem and its promised variant with a unique fixed point (UniqueTarski) have exactly the same query complexity.

Despite this extensive line of work, the query complexity of $\Tarski(n,k)$ for $k>3$ remains poorly understood. 
In particular, 
there remains a gap between the best upper and lower bounds, $O(\log^3 n)$ \cite{FPS22} and  $\Omega(\log^2 n)$ \cite{EPRY20}, for $\Tarski(n,4)$,
and it is unclear whether the query complexity of $\Tarski(n,3)$ matching that of $\Tarski(n,2)$ at $\Theta(\log^2 n)$ is merely a coincidence.

\subsection{Our Contributions}

In this paper, we give an  $O(\log^2 n)$-query algorithm for $\Tarski(n,4)$, thereby resolving the query complexity of the Tarski problem in dimension four. 
Consequently, this shows that the tight query complexity of $\Tarski(n,k)$ stays at $\Theta(\log^2 n)$ for $k=2,3,4$.
As discussed further in the concluding section, our result deepens the mystery surrounding the correct query complexity of $\Tarski(n, k)$ for general constants $k$.

\begin{theorem}\label{theorem:four}
There is an $O(\log^2n)$-query algorithm for $\Tarski(n,4)$.
\end{theorem}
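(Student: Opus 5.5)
The approach is the safe partial-information (PI) function framework of \cite{CLY23}, used here as an algorithmic tool instead of as a reduction. A safe PI function is a map $g$ that assigns to each point $x\in[n]^k$ a partial sign vector in $\{+,-,0,\star\}^k$. Here $\star$ means ``unknown'', and $g$ must be consistent with some monotone extension whose fixed points we can recover. The basic operation is a \emph{point-to-query} step: given a point $x$ and a safe PI function $g$, we use $O(\log n)$ queries to $f$ to produce a point $y$ with full information. Either $y$ is a fixed point, or $y$ certifies a sign pattern that we can propagate monotonically to update $g$ while keeping it safe. The goal is to design a procedure for $k=4$ in which every round of $O(\log n)$ queries shrinks a search region by a constant fraction, say in volume along two coordinates. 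That gives $O(\log n)$ rounds and $O(\log^2 n)$ queries in total.

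The steps, in order, are as follows.
\begin{enumerate}
\item Fix the base case. For $k=3$, show that within a box one can find, in $O(\log n)$ queries, a point $x$ with $x\preceq f(x)$ or $f(x)\preceq x$ on a prescribed 2D slice. This is the FPS22-style ``up/down point'' subroutine. I restate it so that its output is a safe partial-information point: the sign is known on three coordinates and $\star$ on the fourth.
\item Embed the 4D problem. Maintain a box $[a,b]\subseteq[n]^4$ with $a\preceq f(a)$ and $f(b)\preceq b$. Binary search on coordinate $4$: at the middle slice $x_4=m$, run a 3D-type search. This search does not need an exact fixed point of the slice. It only needs a point $x$ on the slice whose signs in coordinates $1$ to $3$ are all consistent, while the sign in coordinate $4$ is $\star$. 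Such an $x$ can then be completed with one query.
\item Complete and shrink. Once the sign of coordinate $4$ is revealed, $x$ becomes either an up point or a down point. Replacing $a$ or $b$ by $x$ then halves the range in coordinate $4$, and it also cuts the range in coordinates $1$ to $3$.
\end{enumerate}
The naive cost of this scheme is $O(\log n)\cdot O(\log^2 n)$. The crux is to amortize it. The PI function lets the information gathered in one slice be \emph{reused} in later slices. Monotonicity propagates the signs learned at $x$ to every later slice, so successive 3D searches resume where the previous ones stopped instead of restarting.

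The main obstacle is proving this amortization bound. One must show that the combined potential $\log(\text{range}_1\cdot\text{range}_2\cdot\text{range}_3)+c\log(\text{range}_4)$ drops by a constant after every $O(1)$ point-to-query calls, and that each call costs $O(\log n)$. My first attempt is to define the potential over the region where $g$ still has $\star$ entries, and to show that each query either fixes the sign of a half-box along two coordinates or produces a new up/down point. I would check this by case analysis on the sign pattern of $f(x)-x$. The troublesome mixed patterns, such as $(+,-,\star,\star)$, would be handled by a 2D binary search in the style of \cite{dang2011computational} restricted to the $\star$ coordinates.
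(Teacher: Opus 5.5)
\textbf{There is a genuine gap.} Your top-level reduction is the same as the paper's: binary search on coordinate $4$, solving a $\Tarski^*(n,3)$ instance on the middle slice $x_4=m$, costs an $O(\log n)$ factor. The theorem, however, lives entirely in what comes after that reduction, and your proposal does not supply it. As you note, with the FPS22 $O(\log n)$ routine for 2D slices your scheme costs $O(\log^2 n)$ per 3D slice and $O(\log^3 n)$ overall. That is the previously known bound. The whole improvement rests on the cross-slice ``amortization,'' which is asserted, not proved.

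No invariant says what is reused. Information learned on slice $x_4=m$ reaches later slices only through one-sided monotone propagation. For example, a $-1$ in coordinate $i$ at $x$ propagates only to points $y\preceq x$ with $y_i=x_i$; after you move up, those lie in the half you just discarded. So ``successive searches resume where the previous ones stopped'' is unjustified. Likewise, nothing shows that your potential $\log(\text{range}_1\text{range}_2\text{range}_3)+c\log(\text{range}_4)$ drops by a constant every $O(1)$ steps. Your fallback for mixed sign patterns, a Dang--Qi--Ye style 2D binary search, costs $\Theta(\log^2 n)$ per invocation. Using it even once per halving of coordinate $4$ puts you back at $O(\log^3 n)$. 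Separately, the framework has no $O(\log n)$-query ``point-to-query'' primitive: each round of the safe PI game corresponds to a single query of $f$.

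The missing idea is a progress measure inside a \emph{single} 3D instance that shrinks by a constant factor per $O(1)$ queries. That gives $\Tarski^*(n,3)$ in $O(\log n)$ queries, and the theorem follows from the reduction above with no cross-slice amortization at all. The paper's measure is $|\Cand^+(p)|+|\Cand^-(p)|$. Here $\Cand^+(p)$ consists of the points $x$ with $J(p)\preceq x\preceq M(p)$ that carry no $-1$ entry and avoid every interior set $\intt^+_i(p)$, $i\in[k+1]$; $\Cand^-(p)$ is symmetric. Two facts make this measure work:
\begin{itemize}
\item[(i)] Every safe $f\Rightarrow p$ has a $\Tarski^*$ solution in $\Cand^+(p)\cup\Cand^-(p)$, so the set never empties. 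This is proved by building a monotone path from $J(p)$ to $\Fix(f)$, with $f(x)_i\ge 0$ for all $i\in[k]$, that avoids every $\intt^+_i(p)$ for $i\le k$. One then takes the first point on it whose last coordinate is $+1$.
\item[(ii)] A 3D balanced-point lemma for arbitrary sets, combined with deletion rules derived from safety, shows the following. Querying a suitable $q$ together with $q\pm\mathbf{e}_i$ (seven points) either ends the game or cuts $|\Cand^+|$ by a factor $1-1/1600$; the same holds symmetrically for $\Cand^-$.
\end{itemize}
This measure is an arbitrary set rather than a box. The geometric lemma is what replaces your unproved case analysis.
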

In fact, we prove a stronger algorithmic result by giving an $O(\log n)$-query algorithm for the three-dimensional $\Tarski^*$ problem:   

\begin{restatable}{theorem}{theoremmain}\label{theorem: main}
	There is an $O(\log n)$-query algorithm for $\Tarski^*(n,3)$.
\end{restatable}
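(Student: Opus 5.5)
The plan is to follow the framework announced in the abstract: the algorithm will not work with $f$ directly but with a \emph{safe partial-information} (PI) function in the sense of \cite{CLY23}. A PI function records, for each queried point, only the signs of $f(x)_i-x_i$, possibly with some coordinates left as unknown. Safety means that any solution found for the PI function can be turned into a solution for the original instance at no extra query cost.

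I read $\Tarski^*(n,3)$ as the slice problem underlying the reduction from Theorem~\ref{theorem:four}. We are given a monotone $f$ on $[n]^4$ restricted to a slice $x_4=c$. The goal is a point $x$ in the slice with $x\preceq f(x)$ or $f(x)\preceq x$ in all four coordinates, or a fixed point. Given an $O(\log n)$ algorithm for this, binary search on the fourth coordinate yields Theorem~\ref{theorem:four}.

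The algorithm maintains a box $[\ell,h]\subseteq[n]^3$ and two points. The first is an ``up'' point $\ell$ with $\ell\preceq f(\ell)$ in the first three coordinates; the second is a ``down'' point $h$ with $f(h)\preceq h$ in those coordinates. Monotonicity then keeps the relevant solutions inside $[\ell,h]$. The fourth-coordinate sign at each of these points is recorded as partial information.

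The progress measure is the sum over $i\in[3]$ of $\log(h_i-\ell_i+1)$. The goal is to show that a constant number of queries always either returns a solution or shrinks one side length by a constant factor. This is impossible with the naive update rules, since a single midpoint query can produce mixed signs in two coordinates. That is exactly why the $O(\log^2 n)$ algorithm of \cite{FPS22} needs a nested search.

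The key new step is the update rule for such mixed-sign answers. Suppose a query at the midpoint $m$ returns, say, $f(m)_1>m_1$ and $f(m)_2<m_2$. Instead of searching the two-dimensional face recursively, we record this as partial information: a constraint on the PI function saying that $m$ pushes up in coordinate~1 and down in coordinate~2. Safety of the PI function (in the sense of \cite{CLY23}) lets us then define a modified function on a smaller box. On that box either $m$ or a point obtained by truncating $m$ against $\ell$ or $h$ becomes a new up or down point, so we can shrink one coordinate. The fourth-coordinate sign at $m$ decides which truncation is legal. If it agrees with the sign pattern we are committing to, we use it; otherwise $m$ already witnesses a solution.

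I would first carry out a case analysis over the $3^3$ sign patterns of $f(m)-m$ on the first three coordinates, combined with the sign of the fourth coordinate. For each pattern I would exhibit an $O(1)$-query update that preserves safety and the box invariant.

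The main obstacle will be proving that the accumulated partial information stays consistent. The modified PI functions built after many mixed-sign steps must still extend to a genuine monotone function, so that a solution of the final PI instance is a solution of $f$. If consistency can fail, the $O(1)$-query update is unsound.

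To handle this, I would try an invariant limiting the stored partial information to a bounded number of ``frontier'' points, say one per pair of coordinates. I would then show that a new mixed-sign query either dominates the stored frontier point for that pair, so the old point can be discarded, or is comparable to it. Comparability would allow a monotone extension by taking coordinatewise max/min with the truncated box bounds. If this invariant holds, every step costs $O(1)$ queries and reduces the potential by $\Omega(1)$. Since the potential starts at $3\log n$, this gives the claimed $O(\log n)$ bound.
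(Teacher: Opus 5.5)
There is a genuine gap at the step you call the key new step. Your bound rests on the claim that $O(1)$ queries either reveal a solution or shrink a side of $[\ell,h]$ by a constant factor. The mechanism you give for mixed-sign answers does not deliver this.

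Suppose $f(m)_1=+1$ and $f(m)_2=-1$.
\begin{itemize}
\item \emph{Truncation does not give up or down points.} Truncating $m$ against $\ell$ or $h$ only controls the truncated coordinates. For example, $m'=(m_1,\ell_2,m_3)\succeq\ell$ is up in coordinate $2$. But $m'\preceq m$ with $m'_1=m_1$ gives no lower bound on $f(m')_1$, and truncation against $h$ fails symmetrically. So in general no truncation of $m$ is an up or down point.
\item \emph{The fallback cannot occur.} You say ``otherwise $m$ already witnesses a solution.'' A point with $f(m)_1>0>f(m)_2$ is never a solution to $\Tarski^*$, whatever its last coordinate.
\item \emph{The framework gives no such freedom.} The machinery of \cite{CLY23} does not let the algorithm define a modified function that promotes $m$ to an up or down point. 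In the safe PI function game the oracle chooses the updated safe PI function, and you must make progress against every such answer.
\end{itemize}
Relatedly, safety is a structural condition on $p$, not the statement that PI solutions transfer to $f$. That transfer is Theorem~\ref{theorem:framework}, which holds because the maintained $p$ only needs to \emph{respect} $f$ and may even contradict it. So your worry about extending to a genuine monotone $f$, and the ``frontier points'' invariant, address the wrong issue.

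The deeper problem is that a box is the wrong potential. Under safety, a mixed answer at $q$ removes orthants at $q$ (Lemma~\ref{lemma: delete}):
\begin{itemize}
\item $p(q)_1=1$ puts all of $\calB^{\mathrm{in}}(q,\ba_1^+)$ into $\intt^+_1(p)$;
\item $p(q)_2=-1$ removes $\calB(q,\ba_2^-)$ from $\Cand^+(p)$;
\item $p(q)_4=\pm1$ removes $\calB^{\mathrm{in}}(q,\bone_3)$ or $\calB(q,-\bone_3)$.
\end{itemize}
No new up or down point appears and no side of $[\ell,h]$ can be cut. So re-querying the midpoint makes no progress on $\sum_i\log(h_i-\ell_i+1)$.

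The paper instead uses the potential $|\Cand^+(p)|+|\Cand^-(p)|$. This is a non-box set defined from $J(p)$, $M(p)$, the $\pm1$ entries of $p$, and the interior sets $\intt^\pm_i(p)$ for $i\in[4]$. Two ingredients are absent from your plan.
\begin{itemize}
\item Lemma~\ref{lemma: candidates}: every safe $f$ consistent with $p$ has a solution in $\Cand^+(p)\cup\Cand^-(p)$. It is proved by building a monotone path from $J(p)$ to $\Fix(f)$ on which $f(x)_i\ge 0$ for $i\in[3]$ and which avoids every $\intt^+_i(p)$. One then takes the first point on it where the last coordinate becomes $+1$.
\item Lemma~\ref{lemma: reduce constant fraction}, and its mirror for $\Cand^-$. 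It picks $q$ via the geometric Lemma~\ref{lemma: balanced point!}: any $S\subseteq[n]^3$ has a point with two opposite orthants, possibly inside a 2D or 1D slice, each holding a constant fraction of $S$. Querying $q$ and $q\pm\be_i$ then deletes one of the two orthants from $\Cand^+$.
\end{itemize}
Since the candidate set never becomes empty and starts with size at most $2n^3$, $O(\log n)$ rounds suffice.
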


We will recall the $\Tarski^*$ problem shortly and review the reduction from $\Tarski(n,k+1)$ to $\Tarski^*(n,k)$ which shows that the complexity of the former is at most $O(\log n)$ times that of the latter, from which \Cref{theorem:four} follows from \Cref{theorem: main} directly.

For general constant $k$, by  combining \Cref{theorem: main} and the decomposition theorem\footnote{The decomposition theorem of \cite{CL22}  shows the query complexity of $\Tarski^*(n,a+b)$, up to a constant, is bounded from above by
  the product of the query complexities of $\Tarski^*(n,a)$ and $\Tarski^*(n,b)$.} of \cite{CL22}~for $\Tarski^*$, we obtain that the query complexity of 
  $\Tarski^*(n,k)$ is at most 
  $O(\log^{\lceil k/3\rceil} n)$.
Using~the reduction from $\Tarski(n,k+1)$ to $\Tarski^*(n,k)$, this further leads to an improved upper bound for $\Tarski(n,k)$,
improving
on the previous best upper bound $O(\log^{\lceil (k-1)/2\rceil+1} n)$ \cite{CL22}:

\begin{corollary}\label{theorem:high}
For any $k$, there is an $O\big(\log^{\lceil (k-1) /3\rceil+1} n\big)$-query algorithm for $\Tarski(n,k)$.
\end{corollary}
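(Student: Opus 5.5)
The corollary follows by composing three ingredients already in place: \Cref{theorem: main}, the decomposition theorem of \cite{CL22} for $\Tarski^*$, and the reduction from $\Tarski(n,k+1)$ to $\Tarski^*(n,k)$, which costs a multiplicative $O(\log n)$ factor. For the cases $k\le 4$ the claimed bound is already known: for $k=1$ it is binary search, and for $k=2,3,4$ the exponent $\lceil (k-1)/3\rceil+1$ equals $2$. For $k=4$ this is exactly \Cref{theorem:four}, and for $k=2,3$ it follows from \cite{dang2011computational,FPS22}, or directly from the argument below. So the work lies in bounding $\Tarski^*(n,m)$ for general $m$ and then transferring that bound to $\Tarski$.

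\emph{Step 1: a bound for $\Tarski^*(n,m)$.} I claim $\Tarski^*(n,m)$ can be solved with $O(\log^{\lceil m/3\rceil} n)$ queries for every constant $m\ge 1$. I would prove this by induction on $m$, peeling off three coordinates at a time. For the base cases $m\in\{1,2,3\}$ we have $\lceil m/3\rceil=1$. For $m=3$ this is exactly \Cref{theorem: main}. For $m=1,2$ I would observe that $\Tarski^*(n,m)$ reduces to $\Tarski^*(n,3)$ by padding with dummy coordinates on which $f$ is constant, so the same $O(\log n)$ bound applies; alternatively, one can cite the known $O(\log n)$ algorithms for these small cases. For $m>3$, write $m=3+(m-3)$ and apply the decomposition theorem. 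It bounds the complexity of $\Tarski^*(n,m)$ by a constant times the product of the complexities of $\Tarski^*(n,3)$ and $\Tarski^*(n,m-3)$, giving
\[
O(\log n)\cdot O\big(\log^{\lceil (m-3)/3\rceil} n\big)=O\big(\log^{\lceil m/3\rceil} n\big).
\]
Since $k$ is a constant, the induction has only $O(k)$ levels, so the accumulated constant factors remain constant.

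\emph{Step 2: transfer to $\Tarski$.} For $k\ge 2$, apply the reduction from $\Tarski(n,k)$ to $\Tarski^*(n,k-1)$. It yields
\[
O(\log n)\cdot O\big(\log^{\lceil (k-1)/3\rceil} n\big)=O\big(\log^{\lceil (k-1)/3\rceil+1} n\big)
\]
queries, which is the claimed bound.

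\emph{Main obstacle.} The only delicate point is checking that the decomposition theorem of \cite{CL22} is stated for $\Tarski^*$ with arbitrary splits $a+b$ and with a constant overhead, so that iterating it $O(k)$ times costs only a constant factor. This is where the restriction to constant $k$ is used.
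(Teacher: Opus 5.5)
Your proposal is correct and follows the paper's route. The paper likewise combines \Cref{theorem: main} with the decomposition theorem of \cite{CL22} to get an $O(\log^{\lceil m/3\rceil} n)$ bound for \textsc{Tarski}$^*(n,m)$, and then applies the $O(\log n)$-overhead reduction from \textsc{Tarski}$(n,k)$ to \textsc{Tarski}$^*(n,k-1)$. Your padding argument for $m=1,2$ and your separate handling of small $k$ are valid details that the paper leaves implicit.
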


As a technical highlight, our main algorithm for $\Tarski^*(n,3)$ behind \Cref{theorem: main} is based~on~a new framework for designing  more query-efficient algorithms for $\Tarski^*(n,k)$, applicable to arbitrary $k$. Conceptually, our work uncovers previously hidden structure in the problem that can be exploited by algorithm designers, formalized via 
  \emph{safe partial-information} functions. 
These functions were first introduced in \cite{CLY23} to reduce  $\Tarski(n,k)$ to its promised version with a unique fixed point. Here, we leverage them to design algorithms directly for $\Tarski^*(n,k)$.
We expect that this framework may be critical for future improvements on the Tarski fixed point problem.
A detailed yet intuitive introduction to the framework, including the notion of safe partial-information functions, can be found in \Cref{intro:Technical Overview}.

\medskip\medskip

\noindent\textbf{The $\textsc{Tarski}^*$ Problem.}
Tarski$^*$ \cite{CL22} is a relaxation of the Tarski problem. In $\Tarski^*(n,k)$, we are given query access to a monotone function $f:[n]^k\rightarrow [n]^k\times\{\pm 1\}$ defined as follows:

\begin{definition}[Monotone Functions for $\Tarski^*(n,k)$]
A function $f:[n]^k\rightarrow [n]^k\times \{\pm 1\}$ is said to be \emph{monotone} if  
(1) when restricted to the first $k$ coordinates, $f$ is a monotone function as defined before, i.e., $f(a)_{[k]}\preceq f(b)_{[k]}$ for all $a\preceq b$; and (2) $f(a)_{k+1}\le f(b)_{k+1}$ for all $a\preceq b$.
\end{definition}
The goal of $\Tarski^*(n,k)$ is to find a point $x\in [n]^k$ such that either 
$$f(x)_{[k]}\succeq x\ \hspace{0.1cm}\text{and}\ \hspace{0.1cm}f(x)_{k+1}=+1\quad\text{or}\quad
f(x)_{[k]}\preceq x\ \hspace{0.1cm}\text{and}\hspace{0.1cm}\ f(x)_{k+1}=-1.$$
Tarski's theorem guarantees that there is a point $x\in[n]^k$ such that $f(x)_{[k]}=x$, and such a point $x$ must be a solution to $\Tarski^*(n,k)$ no matter what $f(x)_{k+1}$ is. On the~other hand, it is crucial that one is not required to solve $\Tarski^*(n,k)$ by finding a fixed point of $f$~over the first $k$ coordinates. 

The $\Tarski^*$ problem has been serving as an intermediate problem in the literature for better algorithms for  $\Tarski$~\cite{FPS22,CL22}. To see the connection, suppose that we would like to~solve $\Tarski(n,k+1)$ on a monotone function $g:[n]^{k+1}\mapsto[n]^{k+1}$. Then one can define a monotone function $f:[n]^k\mapsto[n]^k \times$ $\set{\pm 1}$ using $g$ on the ``middle slice'' (i.e., points $x\in [n]^{k+1}$ with $x_{k+1}=\lceil n/2\rceil$) so that any solution to $\Tarski^*(n,k)$ in $f$ gives a point $x\in[n]^{k+1}$  with $x_{k+1}=\lceil n/2\rceil$ such that either $g(x)\preceq x$ or $g(x)\succeq x$. If $g(x)\preceq x$, then letting
\[
\calL(\mathbf{1}_{k+1},x)\coloneqq \set{a\in[n]^{k+1}:\mathbf{1}_{k+1}\preceq a \preceq x},
\]
we can deduce that $g$ is a monotone function that maps from $\calL(\mathbf{1}_{k+1},x)$ to itself. Thus, by Tarski's theorem, it suffices to focus on the lattice $\calL(\mathbf{1}_{k+1},x)$ for searching a fixed point. Symmetrically, if $g(x)\succeq x$, then it suffices to look for a fixed point in the lattice $\calL(x,\mathbf{n}_{k+1})$. In either case, the $(k+1)$-th dimension of the lattice is shaved by half, and this leads to a reduction from $\Tarski(n,k+1)$ to $\Tarski^*(n,k)$ such that the complexity of the former is at most $O(\log n)$ times the complexity of the latter. For a more formal proof of this reduction, see for example Lemma 3.2 in \cite{CL22}.

\subsection{Additional Related Work} The Tarski fixed point problem, a fundamental and elegant total search problem, has been shown to lie in $\PLS\cap \PPAD$~\cite{EPRY20}, and is therefore below $\CLS$ and $\EOPL$ \cite{fearnley2021complexity,goos2022further}. Recently, there has been growing interest in understanding the complexity of fixed point problems that fall below $\EOPL$, such as Tarski fixed points~\cite{dang2011computational,EPRY20,FPS22,CL22,CLY23,fearnley2024super,branzei2025tarski,haslebacher2025levelset}, contraction fixed points under $\ell_p$ norms~\cite{CLY24,haslebacher2025query}, and monotone contractions~\cite{BFGMS25}. In sharp contrast to other famous fixed points such as Brouwer/Sperner~\cite{HPV89,CD08}, for which we have exponential query lower bounds, proving query lower bounds to these problems appears to be notably difficult. Indeed, perhaps more interestingly, most of the aforementioned literature has been making progress on the algorithmic side. Notably, for contraction functions with $\ell_p$ norms, polynomial query upper bounds have been established (although the algorithms so far do not have polynomial time complexity yet, except for $\ell_2$)~\cite{HKS99,CLY24,haslebacher2025query}. 

For functions that are both monotone and contractive over $[0,1]^k$ under the $\ell_{\infty}$ norm, Batziou, Fearnley, Gordon, Mehta and  Savani \cite{BFGMS25} gave very recently a sophisticated algorithm that finds an $\epsilon$-approximate fixed point in $O( (\log(1/\epsilon)^{\lceil k/3 \rceil})$ query complexity, and moreover every step takes polynomial time in the representation of the function.
Compared to our result, on the one hand this algorithm uses both the monotonicity and contraction properties; on the other hand it achieves time complexity that is polynomially related to the query complexity.
Collectively, the series of recent results on these kinds of functions with special structure (monotonicity, contraction) indicate that these fixed point problems have a potentially richer structure than people thought, which can be leveraged by algorithm designers to obtain nontrivial improved algorithms.

\section{Technical Overview}
\label{intro:Technical Overview}

In this section we give an overview 
  of our main algorithm behind \Cref{theorem: main}, focusing on  the~new framework for $\Tarski^*(n,k)$ 
  based on \emph{safe partial-information functions}. (As mentioned earlier,~although our algorithm is for $k=3$, the framework applies to arbitrary $k$.) 
Our discussion is primarily conceptual and 
  we keep it at a high level for intuition; formal definitions are deferred to \Cref{section: preliminaries} and \Cref{sec:overview}.

We start by describing how normally one would approach  $\Tarski^*(n,k)$. First, without loss of generality we will only work on \emph{monotone sign functions} $g:[n]^k\rightarrow \{-1,0,1\}^k\times \{\pm 1\}$.
These are functions obtained from a monotone $\smash{f:[n]^k\rightarrow [n]^k\times\{\pm 1\}}$ by setting $g(x)_{k+1}=f(x)_{k+1}$ and 
$$g(x)_i=\sgn\big(f(x)_i-x_i\big),\quad\text{for all $x\in [n]^k$ and $i\in [k]$.}$$ 
(See the definition of {monotone} sign functions in 
  \Cref{sec:simple}.
Equivalently, $g$ is a monotone sign function if $f(x):=(x+g(x)_{[k]},g(x)_{k+1})$ is a monotone function from $[n]^k$
to $[n]^k\times \{\pm 1\}$.)
One can then consider $\Tarski^* (n,k)$ as the problem of finding a point $x\in [n]^k$ with either  
  $g(x)\succeq \mathbf{0}_{k+1}$ or 
  $g(x)\preceq \mathbf{0}_{k+1}$ in
  a monotone sign function $g:[n]^k\rightarrow \{-1,0,1\}^k\times \{\pm 1\}$.
In the rest of this paper, for convenience, \emph{we will just refer to monotone sign functions as monotone functions.}

Assume that an algorithm for $\Tarski^*(n,k)$ has made $t$ queries $q^1,
\ldots,q^t$ about $g$ so~far.
Their query results, $g(q^1),\ldots,g(q^t)\in \{-1,0,1\}^k\times \{\pm 1\}$, together  with all the information one can infer about $g$ using monotonicity, form  a so-called monotone \emph{partial-information} function (or monotone PI function for short)  $p$.
We formally introduce the notion of monotone PI functions in \Cref{sec:PIF}. 
In particular, $p(x)_i$ for each $x\in [n]^k$ and $i\in [k]$ takes value in $\{-1,0,1,\le,\ge,\diamond\}$, with $p(x)_i=\hspace{0.08cm}\ge$ (or $\le$)  meaning
  that $g(x)_i\in \{0,1\}$ (or $\in \{-1,0\}$, respectively) and $p(x)_i=\diamond$ meaning
  that nothing is known about $g(x)_i$, and 
  $p(x)_{k+1}$ takes values in $\{\pm 1,\diamond\}$.

\begin{figure}[t!]
  \centering
  \begin{subfigure}[t]{0.32\linewidth}
    \centering
    \includegraphics[width=\linewidth]{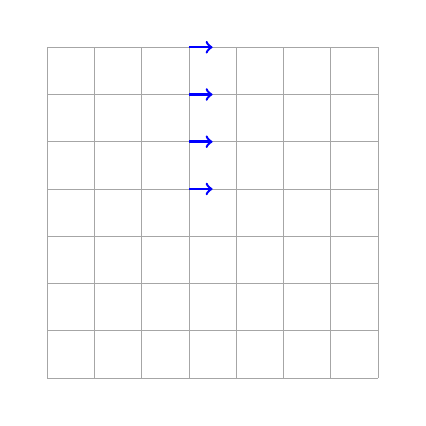}
    \caption{An illustration of information\\ inferred from $g(4,5)_1=+1$ by\\   monotonicity. }
    \label{subfig: 1.1}
  \end{subfigure}
  \hfill
  \begin{subfigure}[t]{0.32\linewidth}
    \centering
    \includegraphics[width=\linewidth]{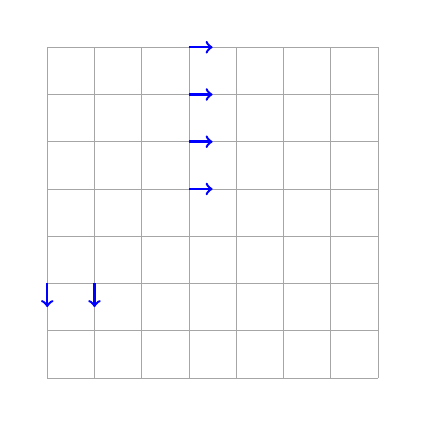}
    \caption{An illustration of information\\ inferred from $g(4,5)_1=+1$ and\\ $g(2,3)_2=-1$ by monotonicity.  }\label{subfig: 2.1}
\end{subfigure}
  \hfill
  \begin{subfigure}[t]{0.32\linewidth}
    \centering
    \includegraphics[width=\linewidth]{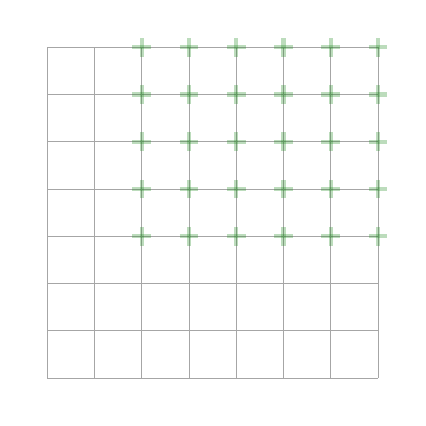}
    \caption{An illustration of information\\  inferred from $g(3,4)_3=+1$ by\\ monotonicity.}\label{subfig: 3PI}
\end{subfigure}\vspace{0.15cm}

  \caption{Illustrations of monotone PI functions. Suppose that $\smash{g:[8]^2\mapsto\set{-1,0,1}^2\times\set{\pm 1}}$ is a\\ monotone function. On  $x\in[8]^2$, a blue right arrow means $g(x)_1=+1$, a blue down arrow means\\ $g(x)_2=-1$, a green ``$+$'' means $g(x)_3=+1$, and (for figures below) a green ``$-$'' means $g(x)_3=-1$. In this and subsequent figures, we omit entries of $\{\ge,\leq\}$ in the partial information for clarity. }%
\end{figure}

As an example,~\Cref{subfig: 1.1}~shows that after an algorithm has learnt that $g(4,5)_1=1$, it can use monotonicity to infer the other three blue arrows for free: $g(4,6)_1=g(4,7)_1=g(4,8)_1=1$.
Actually more information can be inferred about $g$ from $g(4,5)_1=1$: 
$g(5,5)_1, g(5,6)_1, g(5,7)_1 , g(5,8)_1$
can be $0$ or $1$ but cannot be $-1$.
All this information is used to update the monotone PI function $p$ that the algorithm maintains, after $g(4,5)_1=1$ is learnt.
So $p(4,5)_1,p(4,6)_1,p(4,7)_1,p(4,8)_1$
are set to $1$, while $p(5,5)_1, p(5,6)_1, p(5,7)_1,p(5,8)_1$
are set to $\ge$.
(Note that we didn't highlight $\geq$ and $\leq$ in the figures.)
As another example, \Cref{subfig: 3PI} shows that after receiving $g(3,4)_3=1$, $g(x)_3=1$ can be inferred for every point $x\succ (3,4)$ using monotonicity. (In \Cref{subfig: 3PI} we use $+$ to denote $+1$ in the third coordinate.)
So $p(x)_3$ is set to $1$ for all $x\succeq (3,4)$ after $g(3,4)_3=1$ is learnt.

Under this framework, a deterministic algorithm is simply a map $\Pi$ that takes a monotone~PI function $p$ and returns $q=\Pi(p)\in [n]^k$ as the next point to query. 
Upon receiving $g(q)$,
it updates the monotone PI function $p$ (by adding $g(q)$ as well as all information that can be inferred using monotonicity) and then repeats, until a solution to $\Tarski^*(n,k)$ is revealed.

\subsection{Candidate Sets of Monotone PI functions}\label{subsec:candidate}

Given a monotone PI function $p$ over $[n]^k$, how does one choose the next query point $q=\Pi(p)$~to make measurable progress in solving $\Tarski^*(n,k)$ and effectively bound its query complexity? 

One natural approach is to consider the set $\texttt{Possible}\Sol(p)$ of all points that remain possible, given $p$, to be a solution to $\Tarski^*(n,k)$. This was indeed the approach applied in a sequence of recent work that exponentially improved the query complexity of computing a fixed point in~a contraction map \cite{CLY24,haslebacher2025query}, where query-efficient algorithms were obtained by showing that there always exists a query point to cut the size of the current set $\texttt{Possible}\Sol(p)$ significantly.
This approach, however, does not seem to work well for $\Tarski^*(n,k)$. Consider \Cref{subfig: 1.1} as an example. No point can be ruled out as a solution to $\Tarski^*(n,2)$ given $g(4,5)_1=1$.

\begin{figure}[t!]
  \centering
  \begin{subfigure}[t]{0.32\linewidth}
    \centering
    \includegraphics[width=\linewidth]{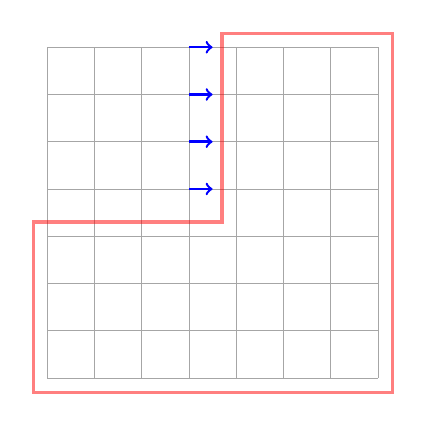}
    \caption{A candidate set given\\ that $p(4,5)_1=1$. }
    \label{subfig: 1.1 candidate set}
  \end{subfigure}
  \hfill
  \begin{subfigure}[t]{0.32\linewidth}
    \centering
    \includegraphics[width=\linewidth]{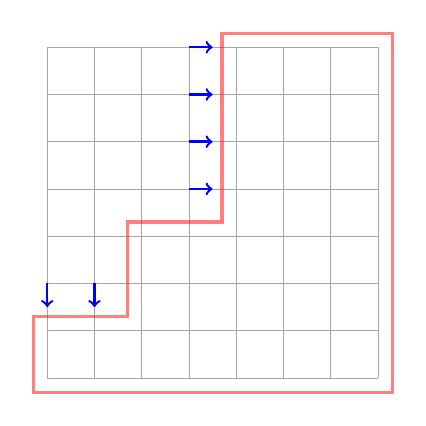}
    \caption{A candidate set given that\\ $p(4,5)_1=+1$ and $p(2,3)_2=-1$. }\label{subfig: 2.1 candidate set}
\end{subfigure}
  \hfill
  \begin{subfigure}[t]{0.32\linewidth}
    \centering
    \includegraphics[width=\linewidth]{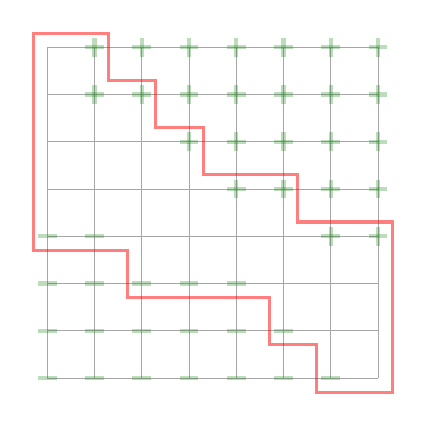}
    \caption{A candidate set given\\  information on $p(x)_3$.}\label{subfig: 3PI candidate set}
\end{subfigure}

  \caption{Illustrations of candidate sets. The red region in each figure is \emph{a} candidate set of the given monotone PI function, respectively.}
  \label{figure: candidate sets of monotone PI functions}
\end{figure}

Instead, we focus on \emph{candidate sets} of monotone PI functions.
Given a monotone PI function $p$, we say a set $S\subseteq [n]^k$ is a candidate set of $p$ if the following condition holds:
\begin{flushleft}\begin{quote}%
For \emph{every} monotone function $g$ that is consistent\footnote{Being consistent means that there is no contradiction between $p(x)_i$ and $g(x)_i$ for all $x$ and $i$. For example, we have $g(x)_i=p(x)_i$ if $p(x)_i\in \{-1,0,1\}$ and $g(x)_i\in \{0,1\}$ if $p(x)_i= \hspace{0.08cm}\ge$. See \Cref{def:consistency} for the formal definition.} with the monotone PI function $p$, \\$g$ has \emph{at least one} solution $x$ to $\Tarski^*(n,k)$ in $S$, i.e. $g(x)\succeq\mathbf{0}_{k+1}$ or $g(x)\preceq \mathbf{0}_{k+1}$. 
\end{quote}\end{flushleft}

To illustrate the effectiveness of this approach, we present some examples showing that monotone PI functions can have surprisingly small candidate sets, often excluding points that one might not initially expect. These examples also highlight complexities within the seemingly simple definition of candidate sets; understanding them is the main challenge of this paper:
\begin{flushleft}\begin{enumerate}
    \item[] \textbf{Example 1:} \Cref{subfig: 1.1 candidate set} shows a candidate set of $p$, given  $p(4,5)_1=+1$; %
    \item[] \textbf{Example 2:} \Cref{subfig: 2.1 candidate set} shows a candidate set of $p$, given  $p(4,5)_1=+1$ and $p(2,3)_2=-1$;
    \item[] \textbf{Example 3:} \Cref{subfig: 3PI candidate set} shows a candidate set of $p$ where all we know is information on the 3rd coordinate of $p$ shown in the figure ($+$ means $p(x)_3=+1$ and $-$ means $p(x)_3=-1$).
\end{enumerate}\end{flushleft}

We will justify candidate sets given in these examples in the next subsection, after introducing our new framework.
Given the definition of candidate sets above, 
an $O(k\log n)$-query algorithm~for $\Tarski^*(n,k)$ would follow directly by achieving  
  the following two objectives:
\begin{flushleft}\begin{enumerate}
\item[] \textbf{Objective 1:} Construct a candidate set $\Cand(p)$ for every given monotone PI function $p$;
\item[] \textbf{Objective 2:} Let $\alpha$ be some universal positive constant. Given any monotone PI function $p$, show the existence of a query point $q\in [n]^k$ such that after $q$ is queried, the updated monotone PI function $p'$ always satisfies $|\Cand(p')|\le (1-\alpha)|\Cand(p)|$.
\end{enumerate}\end{flushleft}
This follows from the two facts that (1) $\Cand(p)$ at the beginning is at most $n^k$ and (2) it can never become empty (because there always exists at least one monotone function $g$ consistent with $p$ and thus, has a solution to $\Tarski^*$ lying in $\Cand(p)$ by the definition of candidate sets).

We remark that the challenge here is to meet both objectives at the same time. (For example, $[n]^k$ is a candidate set of 
any monotone PI function $p$ but it completely fails the second objective; on the other hand, one can abstractly define $\Cand(p)$ to be the smallest candidate set of $p$ in size but then it is not clear how one can reason about it to prove Objective 2.) An ideal construction  needs to be both small in size and structurally simple to allow a proof of Objective 2 possible. It turns out that \emph{safe} PI functions, which we discuss next,  will play a crucial role in our construction.

\subsection{Safe PI Functions}\label{intro:game}

Safe PI functions were introduced in \cite{CLY23} to give a query-efficient reduction from $\Tarski(n,k)$ to the same problem but with an additional promise that the function $g:[n]^k\rightarrow \{-1,0,1\}^k$ is not only monotone but also satisfies the following additional condition:
\begin{flushleft}\begin{quote}
The function $g$ has a \emph{unique} fixed point in every \emph{slice} of $[n]^k$. Here a slice $\calL$ of $[n]^k$ is a subgrid of $[n]^k$ after fixing a subset of the coordinates.
A point $x\in \calL$ is a fixed point of $g$ in $\calL$ if $g(x)_i=0$ for all coordinates $i$ that are not fixed in $\calL$.
\end{quote}\end{flushleft}
We refer to such functions as safe functions. (See \Cref{sec:safepi} for the formal definition. )

Given any deterministic algorithm $\ALG$ that finds the fixed point in a safe function, \cite{CLY23} shows how to convert it into an algorithm for $\Tarski$, which finds a fixed point in any monotone (but not necessarily safe) function $g$, with the same number of queries.
At a high level, the reduction serves as a proxy between $\ALG$ and the monotone function $g$.
Given any query $q$ made by $\ALG$, the reduction makes the same query on $g$ and uses the result $g(q)$ to update the ``knowledge'' of $\ALG$ to incorporate the answer~$g(q)$.
Given that $\ALG$ only works on safe functions, its ``knowledge'' presented by the reduction should always be consistent with not only monotonicity but also the safety condition as well.
This ``knowledge'' is a safe PI function.

Carrying over the definitions to $\Tarski^*(n,k)$, we say a function $g:[n]^k\rightarrow \{-1,0,1\}^k\times \{\pm 1\}$ is safe if it is monotone (in all $k+1$ coordinates) and the restriction to its first $k$ coordinates is safe (i.e., has a unique fixed point in every slice $\calL$ of $[n]^k$).
On the other hand, roughly speaking, a PI function $p$ is safe if it is monotone,
consistent with the underlying function being safe, and no additional information can be inferred assuming that the underlying function is safe (see \Cref{sec:safepi} for the definition of safe PI functions).
In general, the safety condition adds more information to the PI function.
Take \Cref{subfig: 3a} and \Cref{subfig: 3b} as an example. Given $p(4,5)_1=1$, earlier we used monotonicity to infer the three blue arrows above $(4,5)$ (as well as $\ge$ on points to their right). But if $p$ were a safe PI function, then we can further infer all blue arrows in \Cref{subfig: 3b}. To see this is the case, consider the $1$D slice $\calL$ with the second coordinate fixed to be $5$. For $p$ to be safe, there is a unique fixed point in $\calL$, which must be to the right of $(4,5)$. This implies the three blue arrows to the left of $(4,5)$; otherwise, there must exist an additional fixed point in $\calL$ to the left of $(4,5)$, violating the safety condition.
Similarly, \Cref{subfig: 3d} shows arrows we get to add using monotonicity when given $p(4,5)_1=1$ and $p(2,3)_2=-1$, while \Cref{subfig: 3e} shows additional arrows inferred based on the safety condition.

Now we have seen that safe PI functions tend to contain more information. 
What inspired us to build a framework around them is the following transformation implied by the work of \cite{CLY23}:  
\begin{flushleft}\begin{quote}
Any monotone PI function $p$ can be converted into a \emph{safe} PI function $p'$ such that\\ any candidate set of $p'$, \emph{with respect to safe functions},
is a candidate set of $p$ as well, \emph{with respect to monotone functions.}\footnote{While details of the transformation are not needed here, we remark that it is not as easy as just adding to $p$ all information that can be inferred using the safety condition. In particular, the original monotone PI function $p$ may imply the existence of multiple fixed points but even when this is the case, $p'$ still needs to be a safe PI function. This means that $p'$ will have to contain information that contradicts with that of $p$ in this case.}
\end{quote}\end{flushleft}
Here we say a set $S$ is a candidate set of $p'$ with respect to safe functions if every safe function that is consistent with $p'$ has at least one solution to $\Tarski^*(n,k)$ in $S$.
We note that this is a weaker condition: In the original definition, the condition holds for every consistent monotone function.

Indeed, the three examples in \Cref{figure: candidate sets of monotone PI functions} can now be easily justified using this connection:
\begin{flushleft}\begin{enumerate}
\item[] \textbf{Example 1:} %
Let $p$ be the monotone PI function shown in \Cref{subfig: 3a}.
\Cref{subfig: 3b} shows the safe PI function $p'$ obtained from $p$ after the transformation, with additional information inferred from $p$ using the safety condition. Next, we show that the red region in \Cref{subfig: 3c} is a valid candidate set. To see this is the case, any monotone PI function consistent with $p'$ must have a fixed point in the red region because the points with a blue arrow cannot be a fixed point anymore, and this fixed point must be a solution to $\Tarski^*(n,k)$. Given that the red region is a candidate set for $p'$, it must be a candidate set for $p$ as well. (Note that here we did not take the advantage that it suffices to get a candidate set of $p'$ for safe functions consistent with $p'$; this will be used later in Example 3.)
\item[] \textbf{Example 2:} 
This is similar to Example 1. Let $p$ be the monotone PI function shown in \Cref{subfig: 3d} and $p'$ be the safe PI function obtained from the transformation shown in \Cref{subfig: 3e}. 
The same arguments show that the red region in \Cref{subfig: 3f} is a candidate set for $p$.
\item[] \textbf{Example 3:} 
This is a more interesting example.
Let $p$ be the monotone PI function shown in
\Cref{subfig: 3PI candidate set}, with information only about the third component.
Since the safety condition is only about the first two coordinates, the transformation would return $p'=p$ given that $p$ is already safe. Now we will argue that the red region in \Cref{subfig: 3PI candidate set} is a candidate set for $p$. But for this it suffices to argue that the red region is a candidate set for $p'=p$ with respect to \emph{safe} functions. 

To this end, consider any safe function $g$ consistent with $p$.
Given that $g$ is safe, it has a unique fixed point and thus, there is a path connecting the bottom point $(1,1)$ with the top point $(8,8)$, the fixed point lies on the path, and for every point $x$ along the path, the value of $g(x)$ always goes toward the fixed point (i.e., $g(x)_{[2]}\succeq \mathbf{0}_2$ for every $x$ along the path from $(1,1)$ to the unique fixed point, and $g(x)_{[2]}\preceq \mathbf{0}_2$ for every $x$ along the path from $(8,8)$ to the unique fixed point); see \Cref{subfig: 3g}, \Cref{subfig: 3h}, and \Cref{subfig: 3i} for illustrations.\footnote{The part of the path that leaves from $(1,1)$ can be obtained by repeatedly applying $g$, starting with $(1,1)$, but only increasing one of the two coordinates in each round. The part of the path that leaves from $(8,8)$ can be built similarly, and they must meet at the unique fixed point. Therefore, some points $x$ along the path from $(1,1)$ to the unique fixed point may have both $g(x)_1=g(x)_2=1$ but all we need is that every such $x$ satisfies that $g(x)$ is nonnegative in the first two coordinates and similarly,
every point $x$ along the path from $(8,8)$ down to the unique fixed point satisfies that $g(x)$ is not positive in the first two coordinates.}

To argue that there must be a solution to $\Tarski^*$ in the red region in \Cref{subfig: 3PI candidate set}, we consider three cases: (1) if the fixed point lies in the green ``$-$'' region, then the intersection between the path and the boundary of that region is a solution to $\Tarski^*$, as shown in \Cref{subfig: 3g}; (2) if the fixed point lies in neither the green ``$-$'' nor the green ``$+$'' region, then the fixed point itself is a solution, as shown in \Cref{subfig: 3h}; and (3) if the fixed point lies in the green ``$+$'' region, then the intersection between the path and the boundary of that region is a solution to $\Tarski^*$, as shown in \Cref{subfig: 3i}. Collectively, this shows that the red region shown in \Cref{subfig: 3PI candidate set} is a candidate set.
\end{enumerate}\end{flushleft}

\begin{figure}[t!]
  \centering
  \begin{subfigure}[t]{0.325\linewidth}
    \centering
    \includegraphics[width=\linewidth]{figs/1.1.pdf}
    \caption{}
    \label{subfig: 3a}
  \end{subfigure}
  \hfill
  \begin{subfigure}[t]{0.325\linewidth}
    \centering
    \includegraphics[width=\linewidth]{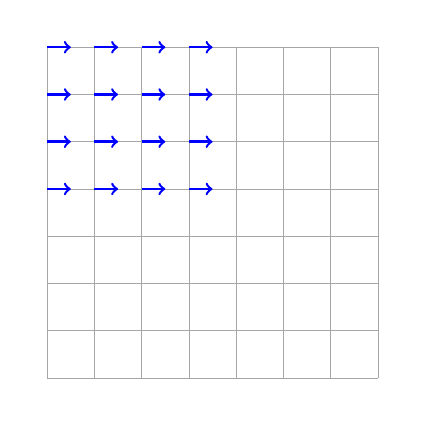}
    \caption{}\label{subfig: 3b}
\end{subfigure}
  \hfill
  \begin{subfigure}[t]{0.325\linewidth}
    \centering
    \includegraphics[width=\linewidth]{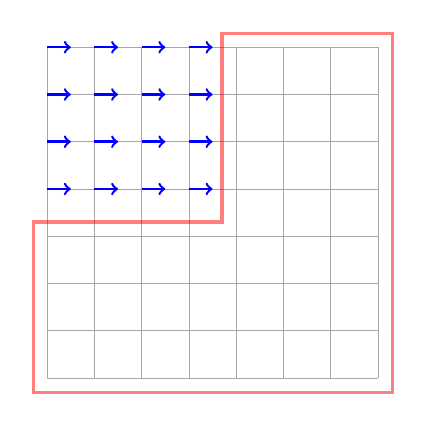}
    \caption{}\label{subfig: 3c}
\end{subfigure}
\hfill

\begin{subfigure}[t]{0.325\linewidth}
    \centering
    \includegraphics[width=\linewidth]{figs/2.1.pdf}
    \caption{}
    \label{subfig: 3d}
  \end{subfigure}
  \hfill
  \begin{subfigure}[t]{0.325\linewidth}
    \centering
    \includegraphics[width=\linewidth]{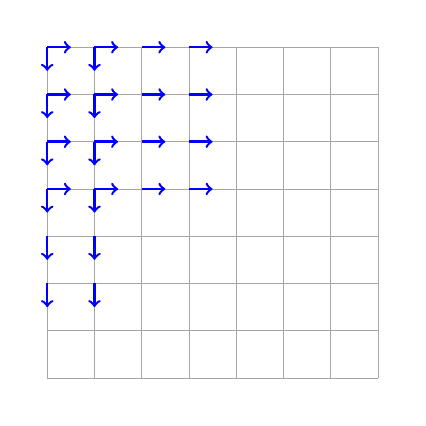}
    \caption{}\label{subfig: 3e}
\end{subfigure}
  \hfill
  \begin{subfigure}[t]{0.325\linewidth}
    \centering
    \includegraphics[width=\linewidth]{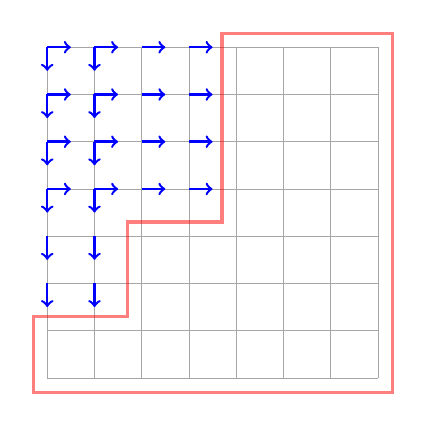}
    \caption{}\label{subfig: 3f}
\end{subfigure}
\hfill

\begin{subfigure}[t]{0.325\linewidth}
    \centering
    \includegraphics[width=\linewidth]{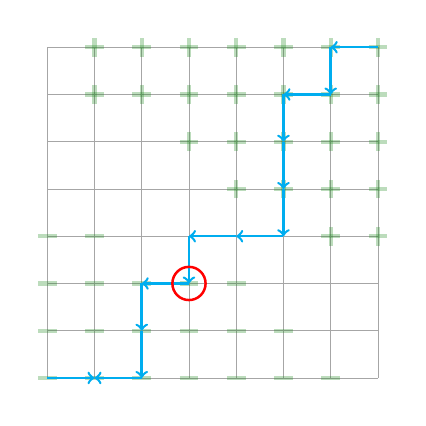}
    \caption{}
    \label{subfig: 3g}
  \end{subfigure}
  \hfill
  \begin{subfigure}[t]{0.325\linewidth}
    \centering
    \includegraphics[width=\linewidth]{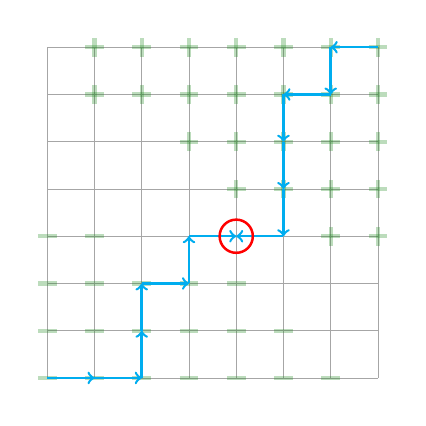}
    \caption{}\label{subfig: 3h}
\end{subfigure}
  \hfill
  \begin{subfigure}[t]{0.325\linewidth}
    \centering
    \includegraphics[width=\linewidth]{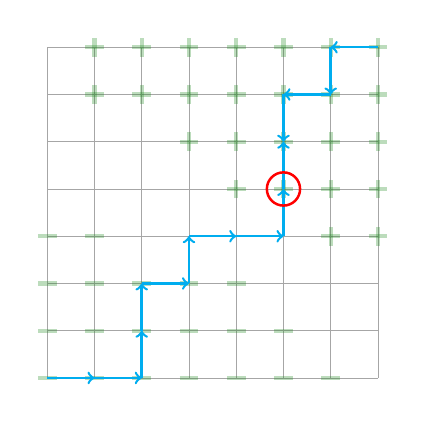}
    \caption{}\label{subfig: 3i}
\end{subfigure}
\hfill

  \caption{Illustrations for the three examples of candidate sets.}
\end{figure}

The candidate sets in the first two examples are 
regions that must contain a fixed point, where we benefit from the additional information given by the safety condition that helps exclude many more points. On the other hand,  Example 3 %
crucially utilizes the fact that the candidate set of $p'$ needs only to be with respect to \emph{safe functions}, which have the nice property that the monotone path from $\mathbf{1}_k$ always meets with the path from $\mathbf{n}_k$ at the unique fixed point.

Given these examples, it is only natural to carry over the full machinery developed in \cite{CLY23} from $\Tarski$ to $\Tarski^*$, which allows us to work with safe PI functions and safe functions only.
This is the \emph{safe PI function game} that we will describe next.

\subsection{The Safe PI Function Game}

At a high level,
the safe PI function game hides the machinery of \cite{CLY23} and allows us to focus on working with safe PI functions and safe functions only.
It proceeds in a round-by-round fashion
  similar to the~standard approach for $\Tarski$ described at the beginning of this section, except
\begin{flushleft}\begin{enumerate}
\item The current knowledge of the algorithm playing this game after $t$ rounds is a PI function\\ $p^t$ that is not only monotone but also safe; and
\item At the beginning of round $t+1$, the algorithm gets to send a query point $q^{t+1}$ to the oracle based on $p^t$; the oracle then returns a PI function $p^{t+1}$ as the updated knowledge of the algorithm, which can only contain more information than $p^t$ (including the query result about $q^{t+1}$ in particular) and remains to be safe. 
\end{enumerate}\end{flushleft}
The game then proceeds in rounds and ends when a solution to $\Tarski^*(n,k)$ is revealed in $p^{t+1}$.%

The safe PI function game is described formally in \Cref{sec:overview}; we also prove that any algorithm that can always win the safe PI function game with $Q$ queries can be converted to an algorithm for $\Tarski^*(n,k)$ using $Q$ queries.
As a result, our two objectives can be refined under the safe PI function game as follows:
\begin{flushleft}\begin{enumerate}
\item[] \textbf{Objective 1:} Construct a candidate set $\Cand(p)$ for every  \emph{safe} PI function $p$ with respect to \emph{safe functions}
(every safe function $g$ consistent with $p$ has a solution to $\Tarski^*$ in it);
\item[] \textbf{Objective 2:} Let $\alpha$ be some universal positive constant. Given any safe PI function $p$, show the existence of a point $q\in [n]^k$ such that after $q$ is queried, the updated safe PI function $p'$ always satisfies $|\Cand(p')|\le (1-\alpha)|\Cand(p)|$.\footnote{Formally, as it becomes clear later, our main algorithm for $\Tarski^*(n,3)$ makes a constant number of queries to cut the size of the candidate set by a constant fraction.}
\end{enumerate}\end{flushleft}
Achieving these two objectives would imply an $O(k\log n)$-query algorithm for the safe PI function game and thus, an $O(k\log n)$ algorithm for $\Tarski^*(n,k)$.
This similarly follows from the two~facts that (1) $\Cand(p)$ at the beginning is at most $n^k$ and (2) it can never become empty (because there always exists at least one safe function $g$ consistent with $p$, although this is not as trivial as the similar statement about 
  monotone PI functions given earlier; see \Cref{lem:consistency}).

We remark that compared to the previous two objectives for monotone PI functions, (i) it now suffices to construct $\Cand(p)$ for safe PI functions only; (ii) we only need to argue the existence of at least one solution in $\Cand(p)$ for every safe function consistent with $p$; and (iii) when reasoning about trimming $\Cand(p)$ efficiently after one query, we may assume that the updated PI function $p'$ remains safe. We will take advantage of all three benefits provided by the new framework.

\subsection{Our Construction of Candidate Sets}\label{sec:candconst}

We give an overview of our construction of $\Cand(p)$ as a candidate set for a given safe PI functions $p$ with respect to safe functions.
Note that our construction works for every dimension $k$, and we believe it will be useful in future attacks on $\Tarski^*(n,k)$ with larger $k$'s.

We start with two hints from the previous three examples about how to design a candidate set. From the first two examples, it is natural to define 
\[\Cand_{[k]}(p)\coloneqq \Big\{x\in[n]^k:p(x)_i\notin\set{-1,+1}\text{ for all }i\in[k]\Big\}\]
using information in $p$ about the first $k$ components.
Regardless of the $(k+1)$-th component of $p$, $\Cand_{[k]}(p)$ a  candidate set with respect to all monotone PI functions $g$ consistent with $p$, because~it always contains a fixed point of $g$. On the other hand, motivated by the third example, we define 
\[\Cand_{k+1}(p)\coloneqq \Big\{x\in[n]^k:x\notin \intt^+_{k+1}(p)\cup\intt^-_{k+1}(p)\Big\},\]
by using information in the $(k+1)$-th component of $p$,
where $\smash{\intt^+_{k+1}(p)}$ and $\smash{\intt^-_{k+1}(p)}$ are, roughly speaking, the ``interior'' points of the ``$+$'' region and the ``$-$'' region, respectively; see \Cref{subfig: 3PI candidate set}.~It can be shown to be a candidate set of $p$ with respect to all safe functions consistent with $p$ using arguments similar to those in the third example.

\begin{figure}[t!]
  \centering
  \begin{subfigure}[t]{0.32\linewidth}
    \centering
    \includegraphics[width=\linewidth]{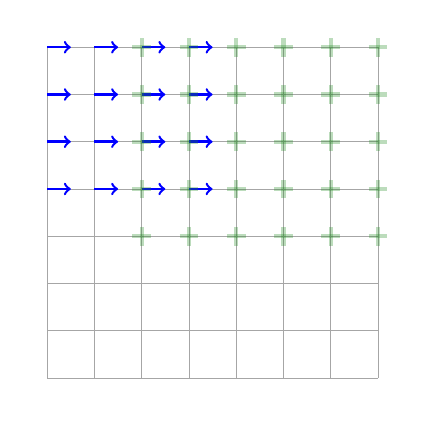}
    \caption{}
    \label{subfig: intersection fail 1}
  \end{subfigure}
  \hfill
  \begin{subfigure}[t]{0.32\linewidth}
    \centering
    \includegraphics[width=\linewidth]{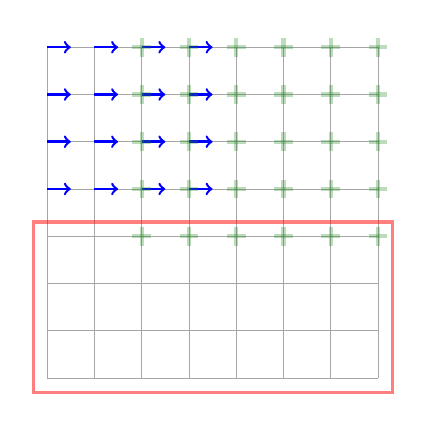}
    \caption{}\label{subfig: intersection fail 2}
\end{subfigure}
  \hfill
  \begin{subfigure}[t]{0.32\linewidth}
    \centering
    \includegraphics[width=\linewidth]{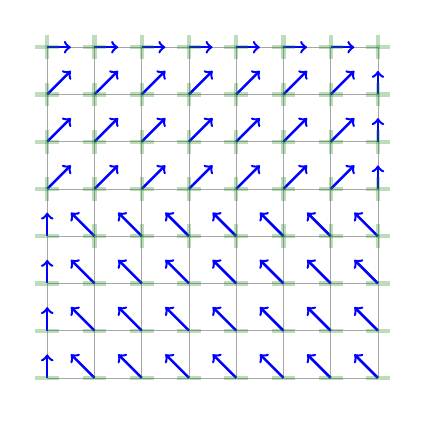}
    \caption{}\label{subfig: intersection fail 3}
\end{subfigure}

  \caption{An example in which the intersection of $\Cand_{[k]}(p)$ and $\Cand_{k+1}(p)$ fails.}
  \label{figure: intersection fail}
\end{figure}

\begin{figure}[t!]
  \centering
  \begin{subfigure}[t]{0.32\linewidth}
    \centering
    \includegraphics[width=\linewidth]{figs/1.3.pdf}
    \caption{}
    \label{subfig: real1}
  \end{subfigure}
  \hfill
  \begin{subfigure}[t]{0.32\linewidth}
    \centering
    \includegraphics[width=\linewidth]{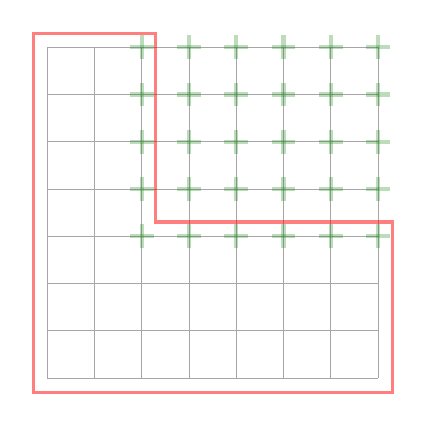}
    \caption{}\label{subfig: real3}
\end{subfigure}
  \hfill
  \begin{subfigure}[t]{0.32\linewidth}
    \centering
    \includegraphics[width=\linewidth]{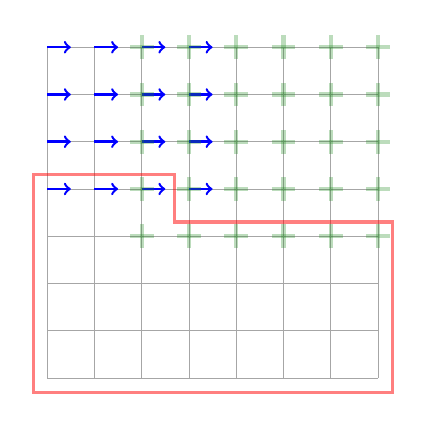}
    \caption{}\label{subfig: real13}
\end{subfigure}

  \caption{Illustrations of $\Cand_{[k]}(p)$, $\Cand_{k+1}(p)$ and the final $\Cand(p)$. }
\end{figure}

Recall our objective is to have a candidate set of size as small as possible. To this end,  clearly one would hope to utilize all information from all of the $k+1$ components of $p$. Thus, a natural attempt is to define  $\Cand(p)$ as the intersection of $\Cand_{[k]}$ and $\Cand_{k+1}$.
A caveat of~this thought, however, is that statements like ``\emph{there always exists a solution in set $S_1$}'' and ``\emph{there always exists a solution in set $S_2$}'' do not necessarily imply  ``\emph{there always exists a solution in $S_1\cap S_2$.}'' 
Indeed, in \Cref{figure: intersection fail} we give a concrete example in which the  intersection of $\Cand_{[k]}(p)$ and $\Cand_{k+1}(p)$ does not form a candidate set! 

In more detail,   \Cref{subfig: intersection fail 1} shows the safe PI function $p$, with its $\Cand_{[k]}(p)$ shown in \Cref{subfig: real1}, $\Cand_{k+1}(p)$ shown in \Cref{subfig: real3} and their intersection shown in   \Cref{subfig: intersection fail 2} (all as the red regions).
However, as shown in \Cref{subfig: intersection fail 3}, there exists a safe function $g$ that is consistent with $p$ but has no solution to $\Tarski^*$ in the red region of \Cref{subfig: intersection fail 2}.

Surprisingly, while $\Cand_{[k]}(p)\cap \Cand_{k+1}(p)$ is not a candidate set of $p$ against safe functions in general, we show in \Cref{lemma: candidates} that it is actually fairly close: By adding to $\Cand_{[k]}(p)\cap \Cand_{k+1}(p)$ a small number of carefully picked boundary points of regions excluded from $\smash{\Cand_{[k]}(p)}$, the resulting set $\Cand(p)$ can  be shown to be a candidate set of $p$ with respect to safe functions consistent with $p$. For example, \Cref{subfig: real13} shows the final candidate set of partial information provided in \Cref{subfig: intersection fail 1}, and the difference from \Cref{subfig: intersection fail 2} is that we added a few more boundary points. This is technically the most challenging part of the paper. 
In hindsight, the reason why statements ``there always exists a solution in $S_1$'' and ``there always exists a solution in $S_2$''   can~be used to show ``there always exists a solution in a set slightly larger than $S_1\cap S_2$'' here is because the two statements about $S_1$ and $S_2$ both can be proved using the safety condition of $p$, which is about the uniqueness of fixed points. Assuming the uniqueness of fixed points, statements ``there always exists a fixed point in $S_1$'' and ``there always exists a fixed point in $S_2$'' always implies that ``there always exists a fixed point in $S_1\cap S_2$.'' Of course, this is just some conceptual intuition behind the proof because uniqueness of fixed points does not imply uniqueness of solutions to $\Tarski^*$.

\subsection{Trimming Candidate Sets When $k=3$}\label{sec:trimming}

Now we have a concrete construction of the candidate set $\Cand(p)$ for any safe PI function $p$. We show in \Cref{lemma: reduce constant fraction} that there exists a next-to-query point that can always trim $\Cand(p)$ down in size by at least a constant factor.\footnote{As mentioned in an earlier footnote, formally our 
algorithm needs to make a constant number of queries.}
The proof of \Cref{lemma: reduce constant fraction} uses a geometric lemma 
(\Cref{lemma: balanced point!}) which states, at a high level, that given any set $S\subseteq [n]^3$ of points, there exists a point $q$ (not necessarily in $S$) and a pair of opposite orthants defined by $q$, both of which have significant mass in $S$. We point out that these are the only lemmas in the paper that work only for the case when $k=3$. 
\medskip

\noindent \textbf{Organization.}
We give definitions of PI functions, monotone PI functions, and safe PI functions as well as their basic properties in \Cref{section: preliminaries}.
Then we describe the safe PI function game in \Cref{sec:overview}, with the connection between algorithms winning this game and algorithms for $\Tarski^*(n,k)$ established in \Cref{appendix:framework}. We define candidate sets $\Cand(p)$ of safe PI functions $p$ in \Cref{section: Candidate Solutions}, present our main algorithm for the safe PI function game and prove its correctness in \Cref{section: the algorithm}.
We prove the main technical lemmas, \Cref{lemma: candidates} in \Cref{section: proof of lemma: candidates}, and \Cref{lemma: reduce constant fraction} in \Cref{section: reduce constant fraction}. We conclude with a discussion of future directions in \Cref{sec:conclusion}.

\section{Preliminaries}
\label{section: preliminaries}

Given positive integers $n$ and $k$, we write $\mathbf{0}_k, \bottompoint, \toppoint$ to denote the $k$-dimensional vectors $(0,\ldots,0)$, $(1,\ldots,1)$ and $(n,\ldots,n)$, respectively. We also write $\mathbf{e}_i$ to denote the $i$-th unit vector with  $1$ in the 
$i$-th coordinate and $0$ elsewhere.
Given $x,y\in [n]^k$,
we write $x\prec y$ to denote $x\preceq y$ but $x\ne y$.

We will work with \emph{slices} of the hypergrid $[n]^k$:
\begin{definition}[Slices]\label{def:slices}
A \emph{slice} of $[n]^k$ is  a tuple $s\in ([n]\cup \{*\})^k$. Given $s$, we use $\calL_s$ to denote~the set of points $x$ such that $x_i=s_i$ for all $i$ such that $s_i\neq *$. 
Given a slice $s\in ([n]\cup \{*\})^k$, we write $\calF(s)\coloneqq \set{i\in[k]: s_i=*}$ to denote the free coordinates of $s$; the \emph{dimension} of $s$ is $|\calF(s)|$.
\end{definition}

\subsection{Simple Functions and Sign Functions}\label{sec:simple}

We say a function $f:[n]^k\rightarrow [n]^k\times \{\pm 1\}$ is a \emph{simple} function if it satisfies
  $|f(x)_i-x_i|\in \{0,1\}$ for all $x\in [n]^k$ and $i\in[k]$ (i.e., $f(x)_i-x_i\in \{-1,0,1\}$ for all $i\in [k]$). Let $\sgn(a)$ for a given number $a$ be $1, 0, -1$  if $a>0, a=0$ or $a<0$, respectively.
We include the following folklore observation:

\begin{observation}\label{observation: |f-x|<=1}
	For any monotone function $f:[n]^k\rightarrow[n]^k\times \{\pm 1\}$,
	  let $g:[n]^k\rightarrow [n]^k\times \{\pm 1\}$ be defined as follows: $g(x)_{k+1}=f(x)_{k+1}$ for all $x\in [n]^k$ and 
$$
g(x)_i\coloneqq x_i+\sgn\big(f(x)_i-x_i\big),\quad\text{for all $x\in [n]^k$ and $i\in [k]$}.
$$	  
Then $g$ is a monotone simple function. Moreover, a point 
  $x\in [n]^k$ is a solution to $\Tarski^*(n,k)$ in $g$ iff it is a solution to $\Tarski^*(n,k)$ in $f$.
\end{observation}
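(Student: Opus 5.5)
The plan is to verify three things directly from the definitions: that $g$ maps into $[n]^k\times\{\pm1\}$, that it is simple and monotone, and that it has the same $\Tarski^*$ solutions as $f$. Simplicity is immediate, since $g(x)_i-x_i=\sgn(f(x)_i-x_i)\in\{-1,0,1\}$. For the range, if $x_i=n$ then $f(x)_i\le n$, so the sign is at most $0$ and $g(x)_i\le n$. Symmetrically, if $x_i=1$ then $g(x)_i\ge 1$. Hence $g(x)_i\in[n]$.

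\textbf{Monotonicity.} The last coordinate is unchanged, so $g(a)_{k+1}=f(a)_{k+1}\le f(b)_{k+1}=g(b)_{k+1}$ whenever $a\preceq b$. For $i\in[k]$, fix $a\preceq b$; we must show $g(a)_i\le g(b)_i$.

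If $a_i=b_i$, then $f(a)_i\le f(b)_i$ gives $\sgn(f(a)_i-a_i)\le \sgn(f(b)_i-b_i)$, and we are done.

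If $a_i<b_i$, we argue by contradiction. Suppose $g(a)_i>g(b)_i$, i.e.\ $a_i+\sigma_a>b_i+\sigma_b$, where $\sigma_a,\sigma_b\in\{-1,0,1\}$ denote the two signs. Since $b_i\ge a_i+1$, this forces $\sigma_a-\sigma_b\ge 2$.
\begin{itemize}
\item In the first subcase $\sigma_a=1$ and $\sigma_b\le -1$. Then $f(a)_i\ge a_i+1$ and $f(b)_i\le b_i-1$. Together with $f(a)_i\le f(b)_i$, this gives $a_i+1\le b_i-1$, so $b_i\ge a_i+2$. Now $a_i+\sigma_a\le a_i+1<b_i-1=b_i+\sigma_b$, contradicting the assumption.
\item In the remaining subcases the gap $\sigma_a-\sigma_b$ is exactly $2$ while $b_i-a_i\ge 1$. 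Then $a_i+\sigma_a=a_i+\sigma_b+2$ would need to exceed $b_i+\sigma_b$, i.e.\ $b_i-a_i<2$, so $b_i=a_i+1$. Here $\sigma_a=1$ gives $f(a)_i\ge a_i+1=b_i$, and $\sigma_b=-1$ gives $f(b)_i\le b_i-1=a_i$. Combining, $b_i\le f(a)_i\le f(b)_i\le a_i$, a contradiction.
\end{itemize}
So $g(a)_i\le g(b)_i$ in all cases. This case analysis is the only place needing care, and it is elementary.

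\textbf{Same solutions.} For each $i\in[k]$, $g(x)_i\ge x_i$ holds iff $\sgn(f(x)_i-x_i)\ge 0$, which holds iff $f(x)_i\ge x_i$. The same equivalence holds with $\le$. The $(k+1)$-th coordinates agree. Therefore the two defining conditions of a $\Tarski^*$ solution, namely $f(x)_{[k]}\succeq x$ with $f(x)_{k+1}=+1$, and $f(x)_{[k]}\preceq x$ with $f(x)_{k+1}=-1$, hold for $f$ exactly when they hold for $g$.
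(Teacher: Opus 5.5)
Your verification is correct, and it is the routine direct check that the paper leaves implicit; the paper records this as a folklore observation and gives no proof. One small slip: in your first bullet you only know $b_i\ge a_i+2$, so the chain should read $a_i+1\le b_i-1$ rather than $<$, which still contradicts $g(a)_i>g(b)_i$. Also, the second bullet is redundant, because $\sigma_a-\sigma_b\ge 2$ already forces $(\sigma_a,\sigma_b)=(1,-1)$, which is exactly your first bullet.
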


It follows that in $\Tarski^*(n,k)$, one may assume without loss of 
  generality that the monotone function $f$ is simple.
Indeed, notation-wise, it will be more convenient for us to work with the so-called \emph{sign functions}.
A sign function $h$ maps $[n]^k$ to $\{-1,0,1\}^k\times \{\pm 1\}$ and should be considered as the $g(x)-(x,0)$ obtained
  from a simple function $g$.
With this connection we say a sign function $h$ is \emph{monotone} if $x\mapsto (x,0)+h(x)$ maps $[n]^k$ to $[n]^k\times \{\pm 1\}$ and is a monotone simple function.

The next observation is an equivalent characterization of a sign function being monotone:

\begin{observation}\label{observation: alternative monotonicity}
A sign function $h:[n]^k\rightarrow \{-1,0,1\}^k\times \{\pm 1\}$ is \emph{monotone} if and only if it satisfies the following conditions: 
For each $i\in [k]$ and $x\in [n]^k$, we have
	\begin{flushleft}\begin{enumerate}
		\item[(1)] $h(x)_i=1$ implies $h(y)_i=1$ and $h(y+\mathbf{e}_i)_i\in \{0,1\}$ for all $y$ with $x\preceq y$ and $x_i=y_i$;
		\item[(2)] $h(x)_i=-1$ implies $h(y)_i=-1$ and $h(y-\mathbf{e}_i)_i\in \{-1,0\}$ for all $y$ with $x\succeq y$ and $x_i=y_i$;
		\item[(3)] $h(x)_i=0$ implies (a) $h(y)_i\in \{-1,0\}$ for all $y$ with $x\succeq y$ and $x_i=y_i$, \vspace{0.12cm}and\\ \hspace{2.9cm} (b) $h(y)_i\in \{0,1\}$ for all $y$ with $x\preceq y$ and $x_i=y_i$,
		\end{enumerate}\end{flushleft}
        and for each $x\in [n]^k$, we have
\begin{flushleft}\begin{enumerate}
		\item[(4)] $h(x)_{k+1}=+1$ implies $h(y)_{k+1}=+1$ for all $y$ with $x\preceq y$;
        \item[(5)] $h(x)_{k+1}=-1$ implies $h(y)_{k+1}=-1$ for all $y$ with $x\succeq y$.
\end{enumerate}\end{flushleft}
\end{observation}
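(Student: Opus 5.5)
The plan is to reduce the statement to the coordinatewise definition of a monotone simple function and check the two directions separately. By definition, $h$ is monotone iff $F(x):=(x,0)+h(x)$ maps into $[n]^k\times\{\pm1\}$ and satisfies $F(a)_{[k]}\preceq F(b)_{[k]}$ and $F(a)_{k+1}\le F(b)_{k+1}$ for all $a\preceq b$. The last coordinate is immediate. Since $F(x)_{k+1}=h(x)_{k+1}\in\{\pm1\}$, monotonicity in it says exactly that $+1$ propagates upward and $-1$ propagates downward, which is conditions (4) and (5). So the work is in coordinates $i\in[k]$. There the requirement is $a_i+h(a)_i\le b_i+h(b)_i$ whenever $a\preceq b$, together with the range constraint $x_i+h(x)_i\in[n]$.

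For the forward direction I assume $h$ is monotone and check (1)--(3). For (1), take $h(x)_i=1$ and $y\succeq x$ with $y_i=x_i$. Then $x_i+1\le y_i+h(y)_i=x_i+h(y)_i$, which forces $h(y)_i=1$. Also $y+\mathbf e_i\succeq x$ with $(y+\mathbf e_i)_i=x_i+1$, so $x_i+1\le x_i+1+h(y+\mathbf e_i)_i$ and hence $h(y+\mathbf e_i)_i\ge 0$. This second point only matters when $y+\mathbf e_i\in[n]^k$; I note that $x_i+1\le n$ anyway by the range constraint. Condition (2) is symmetric. For (3), $h(x)_i=0$ and $y\preceq x$ with $y_i=x_i$ gives $x_i+h(y)_i\le x_i$, so $h(y)_i\le 0$; the upward case is the same.

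For the converse, which is the main obstacle, I assume (1)--(5) and take $a\preceq b$. I need $b_i-a_i\ge h(a)_i-h(b)_i$. If $b_i-a_i\ge 2$ this is automatic, since both signs lie in $\{-1,0,1\}$. If $b_i=a_i$, I do a case check on $h(a)_i$. If it is $1$, (1) gives $h(b)_i=1$. If it is $0$, (3b) gives $h(b)_i\ge0$. If it is $-1$, there is nothing to prove. If $b_i=a_i+1$, I need to rule out $h(a)_i=1,\ h(b)_i=-1$. I set $y:=b-\mathbf e_i$, so $y\succeq a$ and $y_i=a_i$. Then (1) gives $h(y+\mathbf e_i)_i=h(b)_i\in\{0,1\}$, which settles this case.

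It remains to get the range constraint, namely that $h(x)_i\ne1$ when $x_i=n$ and $h(x)_i\ne-1$ when $x_i=1$. This does not follow from (1)--(5) as literally written. I would handle it by reading the observation with the standing convention that sign functions are displacements of maps into $[n]^k$. Equivalently, I would interpret (1) as requiring $y+\mathbf e_i$ to exist, so that $h(x)_i=1$ forces $x_i<n$, and symmetrically for (2). I would state this convention explicitly at the start of the proof.
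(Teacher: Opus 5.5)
Your proof is correct. The paper states this as an observation without proof, and your direct check is exactly the routine verification it leaves implicit: the forward direction uses $x_i+h(x)_i\le y_i+h(y)_i$, and the converse splits on $b_i-a_i\in\{0,1\}$ versus $\ge 2$, using only (1) and (3b).

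Your caveat about the range constraint is also well founded. Unlike the definition of monotone PI functions, the observation omits the boundary conditions (6)--(7). For example, $n=2$, $k=1$, $h(1)=h(2)=(1,+1)$ satisfies (1)--(5) if the clause about $h(y+\mathbf{e}_i)_i$ is read as vacuous outside the grid, yet $h$ is not monotone. So one of your two readings is genuinely needed.
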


The input to $\Tarski^*(n,k)$ can be equivalently viewed as a monotone sign function $f:[n]^k\rightarrow \{-1,0,1\}^k\times \{\pm 1\}$, where the goal is to find a point
  $x\in [n]^k$ satisfying $f(x)\preceq \mathbf{0}_{k+1}$ or $f(x)\succeq \mathbf{0}_{k+1}$.
We write $\Fix(f)$ to denote the set of fixed points of $f$ in the first $k$ coordinates (i.e., $x\in [n]^k$ with $f(x)_{[k]}=\mathbf{0}_k$),
  and $\Sol(f)$ to denote the set of solutions to $\Tarski^*(n,k)$ in $f$ (i.e., $\smash{x\in [n]^k}$ with either $\smash{f(x)\ge 0}$ or $\smash{f(x)\le 0}$).
By definition we have $\Fix(f)\subseteq \Sol(f)$,
  and by Tarski's fixed point theorem, we have $\Fix(f)\ne \emptyset$ when $f$ is monotone. %
  
The rest of the paper works only with sign functions 
  and their \emph{partial-information} version to be introduced next.
For convenience we drop the word ``sign'' from now~on.

\subsection{Partial-Information Functions}\label{sec:PIF}

We review the notion of \emph{partial-information} (PI) functions introduced in \cite{CLY23}. A PI function over $[n]^k$ is a function from
  $[n]^k$ to $\Gamma_k$, where the range is 
$$\Gamma_k\coloneqq\{-1,0,1,\leq,\geq,\diamond\}^k\times \{\pm 1,\diamond\}.$$
Intuitively, a PI function reveals some partial information on the values of an underlying function $f:[n]^k\rightarrow \{-1,0,1\}^k\times \{\pm 1\}$.
The next definition illustrates meanings of symbols in $\Gamma_k$:

\begin{definition}[Consistency]\label{def:consistency}
Let $g:[n]^k\rightarrow \{-1,0,1\}^k\times \{\pm 1\}$ and $p:[n]^k\rightarrow\Gamma_k$ be a PI function. Then
we say they are \emph{consistent} 
  in the first $k$ coordinates if
  for all $x\in [n]^k$ and $i\in [k]$: 
	\begin{flushleft}\begin{itemize}
		\item $p(x)_i\in \{-1,0,1\}$ implies $g(x)_i=p(x)_i$;
		\item $p(x)_i=\hspace{0.06cm}\leq$ implies $g(x)_i \in \{-1,0\}$;
		\item $p(x)_i=\hspace{0.06cm}\geq$ implies $g(x)_i \in \{0,1\}$; and
		\item $p(x)_i=\diamond$ implies nothing about $g(x)_i$.
	\end{itemize}\end{flushleft}
We say $g$ and $p$ are consistent in the last coordinate if for all $x\in [n]^k$:    \begin{flushleft}\begin{itemize}
    \item $p(x)_{k+1}\in \{\pm 1\}$ implies $g(x)_{k+1}=p(x)_{k+1}$; and
    \item $p(x)_{k+1}=\diamond$ implies nothing about $g(x)_{k+1}$.
    \end{itemize}\end{flushleft}
 We say $g$ and $p$ are consistent if they are consistent both in the first $k$ and the last coordinates.   
\end{definition}

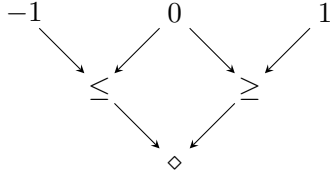
\begin{figure}[!t]
\centering
\begin{tikzpicture}
\node at (0,0) {$\diamond$};
\node at (-1,1) {$\leq$};
\node at (1,1) {$\geq$};
\node at (-2,2) {$-1$};
\node at (0,2) {$0$};
\node at (2,2) {$1$};

\draw[stealth-] (0-0.2,0+0.2)--(-1+0.2,1-0.2);
\draw[stealth-] (0+0.2,0+0.2)--(1-0.2,1-0.2);
\draw[stealth-] (-1-0.2,1+0.2)--(-2+0.2,2-0.2);
\draw[stealth-] (-1+0.2,1+0.2)--(0-0.2,2-0.2);
\draw[stealth-] (1-0.2,1+0.2)--(0+0.2,2-0.2);
\draw[stealth-] (1+0.2,1+0.2)--(2-0.2,2-0.2);

\end{tikzpicture}

\caption{The information partial order. Arrow means ``dominates'' or ``more informative.''}\label{figure-information-relationship}
\end{figure}

We use a natural partial order over symbols in $\{-1,0,1,\leq,\geq,\diamond\} $, illustrated in \Cref{figure-information-relationship}.
We say $\alpha$ \emph{dominates} $\beta$ (or $\alpha$ is \emph{more informative} than $\beta$, denoted $\alpha\Rightarrow\beta$) for some $\alpha,\beta\in \{-1,0,1,\leq,\geq,\diamond\}$
  if either $\alpha=\beta$ or there is a path from $\alpha$ to $\beta$.
With this notation, we have equivalently that $g:[n]^k $ $\rightarrow \set{-1,0,1}^k\times\set{\pm 1}$ is consistent with a PI function $p$
  iff $g(x)_i \Rightarrow p(x)_i$ for all $x\in [n]^k$ and all $i\in [k+1]$.
Given two PI functions $p,p'$ over $[n]^k$,
  we say $p$ \emph{dominates} $p'$ (or $p$ is \emph{more informative} than $p'$, denoted $p\Rightarrow p'$) if $p(x)_i\Rightarrow p'(x)_i$ for all $x\in [n]^k$ and $i\in [k+1]$.

Given that we are interested in monotone functions $f:[n]^k \rightarrow \{-1,0,1\}^k\times \{\pm 1\}$,
  we need~the notion of  \emph{monotone} PI functions below.
Intuitively a PI function $p$ is monotone if it reveals~some partial information of
  a monotone function (so  no violation  to monotonicity can be inferred from $p$) and no further information can be inferred from $p$ using monotonicity:

\begin{definition}[Monotone PI Functions]\label{definition: monotone PI function}
A PI function $p:[n]^k\rightarrow \Gamma_k$ is said to be \emph{monotone} in the first $k$ coordinates if for any $x\in [n]^k$ and $i\in [k]$,
	\begin{flushleft}\begin{enumerate}
		\item[(1)] $p(x)_i=1$ implies $p(y)_i=1$ and $p(y+\mathbf{e}_i)_i\in \{1,0,\geq\}$ for all $y$ with $x\preceq y$ and $x_i=y_i$;
		\item[(2)] $p(x)_i=-1$ implies $p(y)_i=-1$ and $p(y-\mathbf{e}_i)_i\in \{-1,0,\leq\}$ for all $y$ with $x\succeq y$ and $x_i=y_i$;
		\item[(3)] $p(x)_i=0$ implies (a) $p(y)_i\in\{0,-1,\leq \}$ for all $y$ with $x\succeq y$ and $x_i=y_i$, \vspace{0.12cm}and\vspace{0.05cm}\\ \hspace{3cm}(b) $p(y)_i\in\{0,1,\geq\}$ for all $y$ with $x\preceq y$ and $x_i=y_i$;
		\item[(4)] $p(x)_i=\hspace{0.06cm}\leq$ implies $p(y)_i\in\{-1,\leq \}$ for all $y$ with $x\succeq y$ and $x_i=y_i$;
		\item[(5)] $p(x)_i=\hspace{0.06cm}\geq$ implies $p(y)_i\in\{1,\geq\}$ for all $y$ with $x\preceq y$ and $x_i=y_i$;
		\item[(6)] If $x_i=1$, then $p(x)_i\in\{0,1,\geq\}$; and
		\item[(7)] If $x_i=n$, then $p(x)_i\in\{0,-1,\leq\}$.
	\end{enumerate}\end{flushleft}
We say $p$ is monotone in the last coordinate if for any $x\in [n]^k$, we have 
\begin{enumerate}
    \item[(8)] $p(x)_{k+1}=+1$ implies $p(y)_{k+1}=+1$ for all $y$ with $x\preceq y$; and 
    \item[(9)] $p(x)_{k+1}=-1$ implies $p(y)_{k+1}=-1$ for all $y$ with $x\succeq y$.
\end{enumerate}
We say $p$ is monotone if it is monotone in both
  the first $k$ and the last coordinates.
\end{definition}

Given a PI function $p$ over $[n]^k$, we write $\Sol(p)$ to denote the set of solutions to $\Tarski^*(n,k)$ in $p$ already revealed in $p$: $\Sol(p)$ consists of all $x\in [n]^k$ such that either $$p(x)\in \set{1,0,\geq}^k\times\set{+1}\quad\text{or}\quad p(x)\in \set{-1,0,\leq}^k\times\set{-1}.$$

\subsection{Safe PI Functions and Safe Functions}\label{sec:safepi}

As discussed in \Cref{intro:Technical Overview}, our approach is based on the notion of 
  \emph{safe} (PI) functions from \cite{CLY23}.
We need a few basic lemmas about monotone PI functions for these definitions.
We remark that, given that monotone functions are special cases of monotone PI functions, the lemmas below also apply to monotone functions.

Given a monotone PI function $p:[n]^k\rightarrow \Gamma_k$ and a slice $s$, we say a point $x\in \calL_s$ is a \emph{postfixed} point of $p$ on the slice $s$ if $p(x)_i\in \{1,0,\geq\}$ for all $i\in\calF(s)$; a point $x\in \calL_s$ is a \emph{prefixed} point of $p$ on the slice $s$ if $p(x)_i\in \{-1,0,\leq\}$ for all $i\in\calF(s)$.
We use $\Post_s(p)$ to denote the set of postfixed points of $p$ on $s$ and $\Pre_s(p)$ to denote the set of prefixed points of $p$ on $s$. 

\begin{lemma}\label{proposition: semilattice}
	Let $p:[n]^k\mapsto\Gamma_k$ be a monotone PI function. For any slice $s\in ([n]\cup \{*\})^k$, $\emph{\Post}_s(p)$ is a join-semilattice and $\emph{\Pre}_s(p)$ is a meet-semilattice.
\end{lemma}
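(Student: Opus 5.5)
We prove the statement for $\Post_s(p)$; the claim for $\Pre_s(p)$ then follows by the symmetric argument, which swaps the roles of $1,\geq$ with $-1,\leq$ and of joins with meets. Fix a slice $s$ and two points $x,y\in\Post_s(p)$, and let $z=x\vee y$ be their coordinatewise maximum. Since $x,y\in\calL_s$ agree on every non-free coordinate, we have $z\in\calL_s$. The goal is to show $p(z)_i\in\{1,0,\geq\}$ for every $i\in\calF(s)$. Closure under pairwise joins is exactly what it means for $\Post_s(p)$ to be a join-semilattice under the induced order $\preceq$.

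Fix a free coordinate $i$. Without loss of generality $x_i\ge y_i$, so that $z_i=x_i$ and $x\preceq z$ with $x_i=z_i$. We know $p(x)_i\in\{1,0,\geq\}$ and case on its value, using the upward-propagation rules of \Cref{definition: monotone PI function}:
\begin{itemize}
\item If $p(x)_i=1$, rule (1) gives $p(z)_i=1$.
\item If $p(x)_i=0$, rule (3b) gives $p(z)_i\in\{0,1,\geq\}$.
\item If $p(x)_i=\,\geq$, rule (5) gives $p(z)_i\in\{1,\geq\}$.
\end{itemize}
In all three cases $p(z)_i\in\{1,0,\geq\}$. Since $i\in\calF(s)$ was arbitrary, $z\in\Post_s(p)$.

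For $\Pre_s(p)$, take the meet $z=x\wedge y$ and again pick, for each free coordinate $i$, the point whose $i$-th coordinate attains the minimum. That point $x$ then satisfies $x\succeq z$ and $x_i=z_i$, and rules (2), (3a) and (4) propagate $-1$, $0$ and $\leq$ downward into $\{-1,0,\leq\}$.

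There is no real obstacle here. The only point needing care is that the propagation rules require the two points to share the $i$-th coordinate, and this is exactly why we choose, separately for each $i$, whichever of $x,y$ attains the extremal $i$-th coordinate.

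If nonemptiness is also wanted, so that "semilattice" is read as a nonempty structure, it comes from rules (6) and (7). The point of $\calL_s$ with all free coordinates equal to $1$ is postfixed, and the point with all free coordinates equal to $n$ is prefixed.
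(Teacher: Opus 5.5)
Your proposal is correct and matches the paper's proof: take the coordinatewise maximum $z=x\vee y$ and, for each free coordinate $i$, apply the upward monotonicity rules from whichever of $x,y$ attains $z_i$. The paper does this in one line, and your case split over rules (1), (3b), (5) only makes it explicit; your remark that rules (6) and (7) give nonemptiness is a useful addition the paper leaves implicit, since it is what makes the join $J_s(p)$ and meet $M_s(p)$ well defined.
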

\begin{proof}
	Fix any slice $s$ and consider  two points $x,y\in \Post_s(p)$. We have $p(x)_i,p(y)_i\in\{1,0,\geq\}$ for all $i\in \calF(s)$. Let $z=x\vee y$ be their join, namely, the coordinatewise maximum of $x$ and $y$. Then we have $x\preceq z$ and $y\preceq z$ and either $z_i=x_i$ or $z_i=y_i$ for each $i\in \calF(s)$. By the monotonicity of $p$, we have $p(z)_i\in\{1,0,\geq\}$ for all $i\in \calF(s)$ and thus, $z\in \Post_s(p)$. 
	
	The proof that $\Pre_s(p)$ is a meet-semilattice is similar.%
\end{proof}

Given any monotone PI function $p$, we write $J_s(p)\in \Post_s(p)$ to denote the join of $\Post_s(p)$ and write $M_s(p)\in \Pre_s(p)$ to denote the meet of $\Pre_s(p)$. We also write $J(p)$ (or $M(p)$) to denote $J_s(p)$ (or $M_s(p)$, respectively) with $s$ being the all-$*$ string (so $\calL_s$ is the whole grid $[n]^k$).
When the context is clear, we may omit $p$ for the simplicity of notation. 

We prove two more lemmas about $J_s(p)$ and $M_s(p)$ of a monotone PI functions $p$:

\begin{lemma}\label{proposition: J and M}
	Let $p:[n]^k\mapsto\Gamma_k$ be a monotone PI function. For any slice $s$, we have $p(J_s)_i\in\{0,\geq\}$ for all $i\in \calF(s)$ and $p(M_s)_i\in\{0,\leq\}$ for all $i\in \calF(s)$.
\end{lemma}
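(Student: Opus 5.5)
We need to show: for $J=J_s(p)$ and each free coordinate $i$, $p(J)_i\in\{0,\ge\}$; the statement for $M_s$ is symmetric. Since $J\in\Post_s(p)$, we already know $p(J)_i\in\{1,0,\geq\}$ for every $i\in\calF(s)$. So it suffices to rule out $p(J)_i=1$, and I will argue by contradiction.

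Suppose $p(J)_i=1$ for some $i\in\calF(s)$. By condition (7) of \Cref{definition: monotone PI function}, $J_i\neq n$, since $x_i=n$ forces $p(x)_i\in\{0,-1,\leq\}$. Hence $y:=J+\mathbf{e}_i$ lies in $[n]^k$, and it lies in $\calL_s$ because $i$ is free in $s$. I claim that $y\in\Post_s(p)$, which contradicts $J$ being the join, i.e.\ the maximum, of $\Post_s(p)$, because $y\succ J$.

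For the claim I check each free coordinate of $y$ separately.

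For coordinate $i$: condition (1), applied at $x=J$ with the choice $y'=J$ (so $J\preceq y'$ and the $i$-th coordinates agree), gives $p(J+\mathbf{e}_i)_i\in\{1,0,\geq\}$.

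For any other free coordinate $j\neq i$: we have $J\preceq y$ and $y_j=J_j$. Since $p(J)_j\in\{1,0,\geq\}$, the relevant one of conditions (1), (3b) or (5) applies and yields $p(y)_j\in\{1,0,\geq\}$.

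Thus $y\in\Post_s(p)$, which gives the contradiction.

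The argument for $M_s$ is the mirror image. Suppose $p(M_s)_i=-1$. Condition (6) gives $(M_s)_i\neq 1$, so $M_s-\mathbf{e}_i\in\calL_s$. Conditions (2), (3a) and (4) then show $M_s-\mathbf{e}_i\in\Pre_s(p)$, contradicting minimality of $M_s$.

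There is no real obstacle here; the only care needed is to invoke the boundary conditions (6) and (7) so that the shifted point stays inside the grid.
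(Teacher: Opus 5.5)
Your proof is correct and matches the paper's argument. Both rule out $p(J_s)_i=1$ by observing that $(J_s)_i<n$ and that $J_s+\mathbf{e}_i\in\Post_s(p)$, which contradicts $J_s$ being the maximum of $\Post_s(p)$. You additionally spell out the coordinate-by-coordinate monotonicity check (conditions (1), (3b), (5) and the boundary condition (7)) that the paper leaves implicit.
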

\begin{proof}
We prove the statement for $J_s$; the proof for $M_s$ is symmetric.

By \Cref{proposition: semilattice}, we have $J_s\in \Post_s$ and thus, $p(J_s)_i\in \{1,0,\ge\}$ for all $i\in \calF(s)$.~Now assume for a contradiction that $p(J_s)_i=1$ for some $i\in \calF(s)$.
This implies  $(J_s)_i<n$ and $J_s+\mathbf{e}_i\in \Post_s$, contradicting with the assumption that $J_s$ is the join of $\Post_s$.
\end{proof}

We are ready to define safe PI functions and safe functions:\footnote{Note that the definition below looks different from the one we gave in \Cref{intro:game}, that $f$ is safe if it is monotone and has a unique fixed point on every slice of $[n]^k$. This equivalence will be established later in \Cref{lemma: equivalent safe on complete function}.}

\begin{definition}\label{definition: safe}
We say a monotone PI function $p$ is \emph{safe} if
  every slice $s\in([n]\cup \{*\})^k$ satisfies
\begin{flushleft}\begin{enumerate}
		\item[(1)] For any  $x\in \calL_s$ satisfying $x\prec J_s(p)$, we have $p(x)_i\in\{-1,0,1\}$ for all $i\in \calF(s)$ with $x_i<J_s(p)_i$, and  $p(x)_i=+1$ for at least one $i\in \calF(s)$ with $x_i<J_s(p)_i$;
		\item[(2)] For any $x\in \calL_s$ satisfying $x\succ M_s(p)$, we have $p(x)_i\in\{-1,0,1\}$ for all $i\in \calF(s)$ with  $x_i>M_s(p)_i$, and $p(x)_i=-1$ for at least one $i\in \calF(s)$ with $x_i>M_s(p)_i$.
	\end{enumerate}\end{flushleft}
Note that the two conditions above are only about the first $k$ coordinates of $p$.

Moreover, we say a monotone 
 function $f:[n]^k\rightarrow \{-1,0,1\}^k\times \{\pm 1\}$ 
  is \emph{safe} if it also satisfies the two conditions above.
Given that $f$ is a function, the conditions can be simplified as follows:
\begin{flushleft}\begin{enumerate}
		\item[(1$'$)] For any  $x\in \calL_s$ satisfying $x\prec J_s(f)$,  $f(x)_i=+1$ for at least one $i\in \calF(s)$ with $x_i<J_s(f)_i$;
		\item[(2$'$)] For any $x\in \calL_s$ satisfying $x\succ M_s(f)$,  $f(x)_i=-1$ for at least one $i\in \calF(s)$ with $x_i>M_s(f)_i$.
	\end{enumerate}\end{flushleft}
\end{definition}

We record a few simple lemmas about safe  functions and safe PI functions:

\begin{lemma}[Lemma 17  \cite{CLY23}]\label{lem:hehe1}
    Given a safe PI function $p$, we have $J_s\preceq M_s$ for every slice $s$. 
\end{lemma}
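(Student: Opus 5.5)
We argue by contradiction: fix a slice $s$ and suppose $J_s\not\preceq M_s$. Then some free coordinate $i\in\calF(s)$ has $(J_s)_i>(M_s)_i$. Set $J=J_s$, $M=M_s$, and let $z=J\wedge M$ be their coordinatewise meet. Then $z\in\calL_s$, $z\preceq J$, and $z_i=(M)_i<J_i$, so $z\prec J$. Condition (1) of \Cref{definition: safe} applies to $z$ with respect to $J$: for every free $j$ with $z_j<J_j$ we have $p(z)_j\in\{-1,0,1\}$, and for at least one such $j$ we have $p(z)_j=+1$. Fix such a coordinate $j$. Since $z_j<J_j$ and $z_j=\min(J_j,M_j)$, we get $z_j=M_j$.

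Monotonicity now transfers this $+1$ upward. By \Cref{definition: monotone PI function}(1), $p(z)_j=1$ implies $p(y)_j=1$ for every $y\succeq z$ with $y_j=z_j$. The point $M$ is such a $y$, since $M\succeq z$ and $M_j=z_j$. Hence $p(M)_j=+1$. But \Cref{proposition: J and M} gives $p(M)_j\in\{0,\le\}$ for every $j\in\calF(s)$, which is a contradiction. Therefore $J_s\preceq M_s$.

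The only point that needs checking is that $z\prec J$ strictly, so that condition (1) applies. This holds because we chose a coordinate $i$ with $J_i>M_i$, which forces $z_i=M_i<J_i$. The remaining steps, namely that $z_j=M_j$ and the monotone propagation along $y_j=z_j$, follow directly from the definitions.
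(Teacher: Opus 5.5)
Your proof is correct and complete. You take $z=J_s\wedge M_s$ and apply safety condition (1) to get a free coordinate $j$ with $z_j<(J_s)_j$, hence $z_j=(M_s)_j$, and $p(z)_j=1$; monotonicity then pushes this $+1$ up to $M_s$, contradicting that $M_s$ is prefixed on $s$.

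The paper gives no proof of its own here; it imports the statement as Lemma 17 of \cite{CLY23}, so there is no route to compare against. Your argument is the natural self-contained one, and the dual argument with $J_s\vee M_s$ and condition (2) works equally well. The only fact you use implicitly is that $J_s$ and $M_s$ exist. This follows from conditions (6)--(7) of monotone PI functions, which make the bottom and top corners of $\calL_s$ postfixed and prefixed, respectively.
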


The next lemma gives the equivalence between the two definitions of safe functions:

\begin{lemma}\label{lemma: equivalent safe on complete function}
A monotone function $f:[n]^k\rightarrow \{-1,0,1\}^k\times \{\pm 1\}$ is safe if and only if for every slice $s\in ([n]\cup \{*\})^k$, there exists a unique $x\in \calL_s$ such that $f(x)_i=0$ for all $s\in \calF(s)$.
\end{lemma}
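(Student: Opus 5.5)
We prove the two directions separately. Throughout, fix a slice $s$ and write $J_s=J_s(f)$, $M_s=M_s(f)$. Since $f$ is a total function, every fixed point of $f$ on $s$ lies in $\Post_s(f)\cap\Pre_s(f)$. Hence every such fixed point $x$ satisfies $x\preceq J_s$ and $x\succeq M_s$. By \Cref{proposition: J and M}, $f(J_s)_i=0$ and $f(M_s)_i=0$ for all $i\in\calF(s)$. So both $J_s$ and $M_s$ are fixed points on $s$, and they are the greatest and least fixed points there.

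\emph{($\Rightarrow$).} Assume $f$ is safe and fix a slice $s$. Existence of a fixed point on $s$ is immediate from the observation above. For uniqueness, let $x$ be any fixed point on $s$. Then $x\preceq J_s$. If $x\prec J_s$, condition (1$'$) gives some $i\in\calF(s)$ with $x_i<(J_s)_i$ and $f(x)_i=+1$, which contradicts $f(x)_i=0$. Hence $x=J_s$, and the fixed point on $s$ is unique. The same argument with (2$'$) also shows $x=M_s$.

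\emph{($\Leftarrow$).} Assume every slice has a unique fixed point. We verify (1$'$); the proof of (2$'$) is symmetric. Fix $s$ and $x\in\calL_s$ with $x\prec J_s$. Suppose for contradiction that $f(x)_i\in\{-1,0\}$ for every $i\in\calF(s)$ with $x_i<(J_s)_i$. Let $T=\{i\in\calF(s): x_i<(J_s)_i\}$, which is nonempty. Consider the sub-slice $s'$ obtained from $s$ by fixing every coordinate outside $T$ to its value in $x$; note these values agree with $J_s$.

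On $s'$, the point $x$ is prefixed, since $f(x)_i\le 0$ for all $i\in T=\calF(s')$. The point $J_s$ lies in $\calL_{s'}$ and is a fixed point there, because $f(J_s)_i=0$ for all $i\in\calF(s)\supseteq T$. Since $x\in\Pre_{s'}(f)$, we get $M_{s'}\preceq x\prec J_s$. By \Cref{proposition: J and M}, $M_{s'}$ is also a fixed point on $s'$. Thus $s'$ has two distinct fixed points, $M_{s'}\ne J_s$, contradicting uniqueness. Therefore some $i\in T$ has $f(x)_i=+1$, which is exactly (1$'$).

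The one step needing care is that $J_s$ and $M_{s'}$ really are fixed points of the total function $f$, not merely post- or prefixed points. For $J_s$ this follows from \Cref{proposition: J and M}: the value $\ge$ cannot occur for a total function, so $f(J_s)_i=0$. The same holds for $M_{s'}$. With that settled, the restriction to the sub-slice $s'$ is the key idea of the argument.
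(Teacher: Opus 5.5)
Your proof is correct.

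\textbf{($\Rightarrow$).} This is the paper's argument in streamlined form. The paper takes two distinct fixed points and uses their join to place one of them strictly below $J_s(f)$, contradicting condition (1). You observe directly that every fixed point is postfixed, hence $\preceq J_s$, so (1$'$) forces it to equal $J_s$. You also make existence explicit via $J_s$, which the paper leaves implicit.

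\textbf{($\Leftarrow$).} Here your route differs. The paper first identifies the unique fixed point $z^*$ of $s$ with $J_s=M_s$. It then uses monotonicity from $z^*$ to get $f(x)_i\le 0$ on the free coordinates where $x_i=z^*_i$. So if no coordinate with $x_i<z^*_i$ had value $+1$, $x$ would be prefixed on $s$ itself, giving $M_s\preceq x\prec z^*=M_s$. You instead freeze those tight coordinates, pass to the sub-slice $s'$ with $\calF(s')=T$, and exhibit two distinct fixed points $M_{s'}\prec J_s$ there.

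\textbf{Comparison.} Your route never needs $J_s=M_s$. The paper obtains that equality by citing \Cref{lem:hehe1}, which is stated for safe PI functions and so is not literally available before safety is established. Your observation that $J_s$ and $M_s$ are themselves fixed points of a total monotone function (via \Cref{proposition: J and M}) is exactly what fills that step. In exchange, the paper's route is slice-local: uniqueness of the fixed point on $s$ alone yields (1$'$) and (2$'$) on $s$. Yours invokes uniqueness on the lower-dimensional sub-slice $s'$, which is harmless here since the hypothesis covers every slice.
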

\begin{proof}
Suppose that $f$ is safe. Assume for contradiction that there exists a slice $s$ such that two different points $x^1,x^2\in\calL_s$ satisfy that $f(x^1)_i=f(x^2)_i=0$ for all $i\in \calF(s)$. Let $y$ be the point with $y_i=\max(x^1_i,x^2_i)$ for all $i\in[k]$. Then by the monotonicity of $f$, we know that $f(y)_i\geq 0$ for all $i\in \calF(s)$ and thus, $y\in \Post_s(f)$ and $y\preceq J_s(f)$. But $x^1\prec y\preceq J_s(f)$ and at the same time, $f(x^1)_i=0$ for all $i\in \calF(s)$, contradicting with condition (1).

For the other direction, suppose that a monotone function $f:[n]^k\rightarrow \{-1,0,1\}^k\times \{\pm 1\}$ has a unique fixed point on every slice. Given any $s$, we use $z^*$ to denote the unique fixed point of $f$ on $s$. 
We first show that $z^*=J_s=M_s$. To see this is the case, because $f(z^*)_i=0$ for all $i\in \calF(s)$, we have both $z^*\in \Post_s(f)$ and $\Pre_s(f)$ and thus, $z^*\preceq J_s$ and $z^*\succeq M_s$.
It then follows from \Cref{lem:hehe1} that $z^*=J_s=M_s$.
We prove condition (1) below; the proof of condition (2) is similar.

Take any $x\in \calL_s$ with $x\prec z^*$. 
On the one hand, for all $i\in \calF(s)$ and $x_i=z^*_i$, since $z^*_i$ is a fixed point on $s$, we know that $f(x)_i\leq 0$. On the other hand, there must be an $i\in \calF(s)$ and $x_i<z^*_i$ such that $f(x)_i=1$; otherwise, we have $f(x)_i\leq 0$ for all $i\in \calF(s)$ and thus, $x\in \Pre_s(f)$. But we also have $x\prec z^*=M_s$, a contradiction. 
\end{proof}

Given that every safe function has a unique fixed
  point in $[n]^k$, we will abuse the notation and write
  $\Fix(f)\in [n]^k$ to denote the unique fixed point of $f$ (instead of the set of fixed points).
The next lemma shows that every safe PI function is consistent with at least one safe function:

\begin{lemma}\label{lem:consistency}
Given any safe PI function $p$, there exists at least one safe function $f$
with $f\Rightarrow p$. %
\end{lemma}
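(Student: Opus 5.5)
The plan is to treat the first $k$ coordinates and the last coordinate separately. The safety conditions of \Cref{definition: safe} involve only the first $k$ coordinates. Monotonicity in the last coordinate (conditions (8),(9) of \Cref{definition: monotone PI function}) is independent of the others. So it suffices to (a) construct a safe function $f_{[k]}$ on the first $k$ coordinates with $f_{[k]}\Rightarrow p_{[k]}$, and (b) construct a monotone $\{\pm1\}$-valued function consistent with $p_{k+1}$.

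Part (b) is immediate. Set $f(x)_{k+1}=+1$ if $p(y)_{k+1}=+1$ for some $y\preceq x$, and $f(x)_{k+1}=-1$ otherwise. This function is upward closed on $+1$, hence monotone. It agrees with $p$ wherever $p(x)_{k+1}=+1$, by (8). Where $p(x)_{k+1}=-1$, no $y\preceq x$ can have $p(y)_{k+1}=+1$: otherwise (8) would force $p(x)_{k+1}=+1$. So $f(x)_{k+1}=-1$ there as well.

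For part (a), I would build $f$ greedily, refining $p$ one undetermined entry at a time while preserving both monotonicity and safety. Concretely, I would prove a one-step refinement claim. If a safe PI function $p$ has some entry $p(x)_i\in\{\le,\ge,\diamond\}$, then there is a value $v\in\{-1,0,1\}$ compatible with $p(x)_i$ such that setting $p(x)_i:=v$ and then closing under the monotonicity rules (1)--(7) and the safety inferences gives a safe PI function $p'\Rightarrow p$. Iterating this claim terminates in a fully specified function $f$. That $f$ is monotone, satisfies (1$'$),(2$'$), and satisfies $f\Rightarrow p$ by transitivity, as required.

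The natural choice of $v$ comes from the slice structure guaranteed by \Cref{lem:hehe1}: in the $1$-dimensional slice $s$ through $x$ along coordinate $i$, we have $J_s\preceq M_s$.
\begin{itemize}
\item If $x_i<J_s(p)_i$, safety already forces $p(x)_i=1$, so this entry is not undetermined.
\item If $x_i>M_s(p)_i$, safety likewise forces $p(x)_i=-1$.
\item If $J_s(p)_i\le x_i\le M_s(p)_i$, I would set $v=+1$ when $x_i<M_s(p)_i$, and $v=0$ when $x_i=M_s(p)_i$, with $\ge$ and $\le$ entries respected.
\end{itemize}
After this assignment, the unique $1$D fixed point of the slice is pushed to $M_s$.

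The main obstacle is verifying that this single assignment does not break safety on higher-dimensional slices $s'$ containing $x$. Changing one entry can move $J_{s'}$ and $M_{s'}$, and conditions (1),(2) must then hold on the whole down-set of the new $J_{s'}$ and up-set of the new $M_{s'}$. I would first try to avoid redoing a local analysis for every slice. Instead, I would order the refinements so that safety is automatic. Concretely, I would define $f$ directly by taking, for each coordinate $i$ and each point $x$, $f(x)_i$ to be the sign pointing toward $M_{s}(p)_i$, where $s$ is the $1$D slice through $x$ along $i$ (or toward $J_s(p)_i$, consistently across all slices). I would then check monotonicity of $f$ using \Cref{proposition: J and M} and the monotonicity of $s\mapsto M_s(p)$ in the fixed coordinates.

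If this global definition fails safety, I would fall back on the construction in \cite{CLY23}. There the existence of a safe completion of a safe PI function is proved, presumably via exactly such an inductive argument. I would adapt it verbatim, since the last coordinate plays no role.
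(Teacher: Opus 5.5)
Your fallback is exactly the paper's proof. The paper simply invokes Lemma 16 of \cite{CLY23} and then \Cref{lemma: equivalent safe on complete function}.

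What you sketch before the fallback never establishes safety, which is the whole content of the lemma. The one-step refinement claim is stated but not proved. Its existence part is precisely items 1--3 of \Cref{theorem: CLYTheoremfifteen}, i.e.\ again a result imported from \cite{CLY23}. For the global definition you plan to verify only monotonicity, and you explicitly defer safety to the fallback. So as written, your own route has a gap at its central step.

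Your handling of the last coordinate is correct. It also makes explicit something the paper's one-line citation leaves implicit, since \cite{CLY23} has no $(k+1)$-th coordinate.

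The global definition does in fact work, and closing it takes only a few lines. This would give a self-contained proof that is genuinely different from the paper's. For $x\in[n]^k$ and $i\in[k]$, let $t$ be the 1D slice through $x$ along $i$, set $m_i(x)\coloneqq M_t(p)_i$, and define $f(x)_i\coloneqq\mathrm{sgn}(m_i(x)-x_i)$.

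\emph{Consistency.} If $x_i<m_i(x)$, then $x$ is not prefixed on $t$, so $p(x)_i\in\{1,\ge,\diamond\}$. If $x_i=m_i(x)$, then $x=M_t(p)$, and $p(x)_i\in\{0,\le\}$ by \Cref{proposition: J and M}. If $x_i>m_i(x)$, then safety condition (2) on $t$ gives $p(x)_i=-1$.

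\emph{Monotonicity.} The value $m_i(x)$ depends only on $x_{-i}$ and is nondecreasing in it. This is because conditions (2), (3a), (4) of \Cref{definition: monotone PI function} push prefixedness on a line down to lower parallel lines. Then $x_i+f(x)_i=\mathrm{median}(x_i-1,m_i(x),x_i+1)$ is monotone.

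\emph{Safety.} Suppose $x\in\calL_s$ has $f(x)_i=0$ for all $i\in\calF(s)$. By the consistency case analysis, $p(x)_i\in\{0,\le\}$ for all $i\in\calF(s)$. So $x$ is a prefixed point of $p$ on $s$, and hence $x\succeq M_s(p)$. If $x\succ M_s(p)$, safety condition (2) of $p$ on $s$ would force $p(x)_i=-1$ for some $i\in\calF(s)$, a contradiction. Hence $f$ has at most one fixed point on each slice. It has at least one by Tarski's theorem applied on the slice. \Cref{lemma: equivalent safe on complete function} then makes $f$ safe.

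The variant pointing toward $J_t(p)_i$ works symmetrically, using safety condition (1) instead.
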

\begin{proof}
Lemma 16 of \cite{CLY23} shows that, given any safe PI function $p$, there always exists a monotone function $f$ that is consistent with $p$ and has a unique fixed point on every slice $s$.
This lemma then follows from \Cref{lemma: equivalent safe on complete function}.
\end{proof}

\def\ST{\textsc{SolveTarski}_\Pi}

\section{The Safe PI Function Game}
\label{sec:overview}

In this section, we give a formal definition of the {safe PI function game}, and then show (in \Cref{theorem:framework}) that any algorithm that can win the safe PI function game over $[n]^k$ with no more than $Q$ queries can be turned into an algorithm that solves $\Tarski^*(n,k)$  with $O(Q)$ queries.
The proof~of \Cref{theorem:framework} uses the machinery 
  developed in \cite{CLY23}, which we defer to \Cref{appendix:framework}.

\begin{definition}[The Safe PI Function Game]
In this game over $[n]^k$, a deterministic algorithm $\Pi$ works against a PI-function oracle $\calO$.
The algorithm maintains a safe PI function
  $p^t$ over $[n]^k$ with $\emph{\Sol}(p^t)=\emptyset$ (i.e., no solution to $\Tarski^*(n,k)$ has been revealed yet),
  initially starting 
  with $p^0$:
\begin{flushleft}\begin{quote}
$p^0(x)_i=\hspace{0.08cm}\geq$ if $i\in[k]$ and $x_i=1$; $p^0(x)_i=\hspace{0.08cm}\leq$ if $i\in[k]$ and $x_i=n$; and $p^0(x)_i=\diamond$ otherwise, for all $x\in[n]^k$ and $i\in[k+1]$.
\end{quote}\end{flushleft}
It is easy to verify that $p^0$ is safe with $\emph{\Sol}(p^0)=\emptyset$. 

At the beginning of each round $t=1,2,\ldots$, the algorithm $\Pi$ picks a point $$q^t=\Pi(p^{t-1})\in [n]^k$$ with 
  $\smash{p^{t-1}(q^t)\notin \{-1,0,1\}^k\times \{\pm 1\}}$ as the next point to query.
Oracle $\calO$ returns $\calO(p^{t-1},q^t)$, which is a new safe PI function $p^{t}$ over $[n]^k$ that satisfies both 
$$p^{t}(q^t)\in \{-1,0,1\}^{k}\times \{\pm 1\}\quad \text{and}\quad  p^{t}\Rightarrow p^{t-1}.$$
If $\emph{\Sol}(p^t)\ne \emptyset$, the game ends and the algorithm wins.
We are interested in the number of queries needed for an algorithm~$\Pi$ to win this game against any PI-function oracle $\calO$.
\end{definition}

Such a deterministic algorithm $\Pi$ can be specified by a so-called \emph{PI-to-query map}, which takes any safe PI function $p$ with $\Sol(p)= \emptyset$ over $[n]^k$ as input and returns a point $q=\Pi(p)\in [n]^k$ with $p(q)\notin \{-1,0,1\}^k\times \{\pm 1\}$ as the next point to query if the current safe PI function is $p$. 

The proof of the following theorem can be found in \Cref{appendix:framework}. It shows that any algorithm $\Pi$ for the
  safe PI function game can be converted into an algorithm
  for $\Tarski^*(n,k)$ with the same query complexity:

\begin{restatable}{theorem}{theoremframework}\label{theorem:framework}
Let $\Pi$ be a  PI-to-query map that can win the safe PI function game over $[n]^k$ against any PI-function oracle within $Q$ queries. 
Then $\ST(f)$ in \Cref{algorithm:framework} (in \Cref{appendix:framework}) solves $\Tarski^*(n,k)$ with query complexity $Q$. 
\end{restatable}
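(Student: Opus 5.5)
The plan is to build $\ST(f)$ as a proxy between $\Pi$ and the true monotone function $f$, in the spirit of the reduction of \cite{CLY23} for $\Tarski$. The proxy maintains two objects. The first is a monotone PI function $r^t$ recording everything learned about $f$: the actual query answers $f(q^1),\ldots,f(q^t)$, closed under the inferences of \Cref{observation: alternative monotonicity}. The second is a safe PI function $p^t$, which is what $\Pi$ sees, with $p^t\Rightarrow p^{t-1}$ and $p^t(q^t)$ fully specified. In each round we take $q^t=\Pi(p^{t-1})$, query $f(q^t)$, and then produce $p^t$. Because $\Pi$ wins within $Q$ rounds against every oracle, this proxy acts as a legal PI-function oracle, so after at most $Q$ queries some $p^t$ has $\Sol(p^t)\neq\emptyset$. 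The remaining task is to turn such a revealed $p$-solution into a genuine solution of $f$ without further queries, or with $O(1)$ extra queries.

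The core step is to define $p^t$ via the \GPF\ construction of \cite{CLY23}, extended to the extra coordinate. On the first $k$ coordinates we use their procedure verbatim. For each slice it keeps candidate bounds $J_s,M_s$, and whenever the real answer $f(q)$ would create a second fixed point or violate the safety conditions (1)--(2) of \Cref{definition: safe}, it overwrites the corresponding entries. Their invariant is the key fact we will import: $p^t$ agrees with $f$ at every queried point on those coordinates $i$ for which agreement is needed. More precisely, whenever $p^t(q)_i$ disagrees with $f(q)_i$, the queried point $q$ already certifies a fixed point of $f$ in some slice. Following their inductive argument, this lets us recover a genuine point $x$ with $f(x)_{[k]}=\mathbf{0}_k$ from the queries made so far.

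The $(k+1)$-th coordinate is simpler, since safety imposes no constraint on it. We set $p^t(x)_{k+1}$ to the monotone closure of the true values $f(q^j)_{k+1}$, and monotonicity (8)--(9) holds automatically. So far $p^t$ is safe and dominates $p^{t-1}$.

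The main obstacle is the correctness of the final translation. Suppose $x\in\Sol(p^t)$, say $p^t(x)\in\{1,0,\ge\}^k\times\{+1\}$. The last coordinate is truthful, so $f(x)_{k+1}=+1$. We then need $f(x)_{[k]}\succeq\mathbf{0}_k$ or a fixed point of $f$. If $p^t$ never deviated from $r^t$ on the first $k$ coordinates, then $x\in\Sol(f)$ directly. Otherwise we invoke the \cite{CLY23} invariant that every deviation is witnessed by a point $y$ in a lower-dimensional slice with $f(y)$ vanishing on the free coordinates. We then recurse on dimension: restricting to the slice, we use the witnessing postfixed/prefixed structure (\Cref{proposition: semilattice}, \Cref{proposition: J and M}) to produce either a full fixed point of $f$ or a point $z\succeq$ (or $\preceq$) the relevant bound, with the correct sign of $f(z)_{k+1}$ obtained by monotonicity from $f(x)_{k+1}$.

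The delicate part, which I would check first, is that the sign in the $(k+1)$-th coordinate composes correctly along the comparison $z\succeq x$ or $z\preceq x$ chosen by the reduction. If it does not, I would modify the oracle so that, whenever a deviation occurs at $q^t$, it halts and outputs the witnessed fixed point immediately. That fixed point is automatically in $\Sol(f)$, and the number of queries stays at most $Q$.
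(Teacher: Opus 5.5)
Your proxy set-up (query $q^t=\Pi(p^{t-1})$, update the first $k$ coordinates with \GPF, take the monotone closure of the true answers in coordinate $k+1$) matches \Cref{algorithm:framework}. The correctness argument, however, has a genuine gap: you never establish a relation between $p^t$ and $f$ that holds at the point you output.

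Safety inferences fabricate $\pm1$ entries at unqueried points even when every queried answer is reproduced faithfully. Take $k=1$, $n=4$ and $(f(1)_1,f(2)_1,f(3)_1,f(4)_1)=(0,-1,1,0)$, which is monotone. After querying $3$, any safe PI function with $p(3)_1=1$ must have $p(4)_1=0$. Hence $J(p)=4$, and safety forces $p(2)_1=1\neq f(2)_1$. So an arbitrary $x\in\Sol(p^t)$ may carry fabricated entries, and your dichotomy ``no deviation at queried points implies $x\in\Sol(f)$'' is unjustified.

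The fallback does not repair this. The invariant you attribute to \cite{CLY23} is not what the reduction rests on, and you give no argument for it. Even granting it, a deviation only witnesses a fixed point of $f$ on some lower-dimensional slice. Such a point is not a $\Tarski^*$ solution, since the fixed coordinates and the last coordinate need not have matching signs. In dimension at least two, the queries made so far need not locate any full fixed point of $f$: two fixed points on one row say nothing about where a two-dimensional fixed point lies. Your ``recursion on dimension'' would therefore have to evaluate $f$ at unqueried points.

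The missing idea has two parts.

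First, the invariant of \Cref{definition: respect}: $p^t$ \emph{respects} $f$. This means that at every $x$ with $p^t(x)_i\notin\{\pm1\}$ for all $i\in[k]$, the entries $\ge$, $\le$, $0$ and the last coordinate of $p^t(x)$ are truthful for $f$. The paper shows in \Cref{theorem: CLYTheoremfifteen} that \GPF\ preserves this when it is fed $b=f|_{p}(q)_\ell$ rather than $f(q)_\ell$. The last-coordinate update preserves it trivially.

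Second, the choice of output point. Do not output an arbitrary element of $\Sol(p^t)$; output $J=J(p^t)$, or symmetrically $M(p^t)$.
\begin{itemize}
\item If some $x$ has $p^t(x)\in\{1,0,\ge\}^k\times\{+1\}$, then $x$ is postfixed, so $x\preceq J$ and $p^t(J)_{k+1}=+1$ by monotonicity.
\item By \Cref{proposition: J and M}, $p^t(J)_{[k]}\in\{0,\ge\}^k$ has no $\pm1$ entry, so respect gives $f(J)\succeq\mathbf{0}_{k+1}$.
\end{itemize}
This uses no extra queries and never needs to locate a fixed point of $f$.
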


In the rest of the paper, we present a deterministic algorithm that can  win~the safe PI function game with  $O(\log n)$ queries when $k=3$, from which \Cref{theorem: main} follows. To this end, we first~define $\Cand(p)$ for any safe PI function $p$ over $[n]^k$ for any $k$ in the next section.

\section{Candidate Sets}
\label{section: Candidate Solutions}

\def\be{\mathbf{e}}

Given any safe PI function $p:[n]^k\rightarrow \Gamma_k$, we define $\Cand^+(p)$ and $\Cand^-(p)$ and show in \Cref{lemma: candidates} that their union is a candidate set of $p$: i.e., every safe function $f$ consistent with $p$ has a solution to $\Tarski^*(n,k)$ in $\Cand^+(p)\cup \Cand^-(p)$.
We start with two definitions of interior points.

Given $x\in [n]^k$ and $i\in [k+1]$, we write $x^{-i}$ and $x^{+ i}$ to denote 
\begin{equation}\label{eq:defann}
x^{-i}\coloneqq x-\sum_{\substack{j\in [k]:\\ j\ne i,\hspace{0.04cm} x_j>1}} \be_j\quad\text{and}\quad
x^{+i}\coloneqq x+\sum_{\substack{j\in [k]:\\
j\ne i,\hspace{0.04cm} x_j<n}}
\be_j.
\end{equation}
In particular, in $x^{-(k+1)}$ we subtract $\be_j$ from $x$ for all $j\in [k]$ with $x_j>1$.
Note that $x^{-i},x^{+i}\in [n]^k$.

\begin{definition}[Interior Points$^+$]
	Let $p:[n]^{k}\mapsto\Gamma_k$ be a safe PI function. For each $i\in[k+1]$, we define $\intt^+_i(p)$ as the set of \emph{interior points} of the $i$-th component of $p$ on the positive side.
	
	For each $i\in[k]$, $x\in \intt^+_i(p)$  if $p(x^{-i})_i=1$.

	For $k+1$,  $x \in \intt^+_{k+1}(p)$ if  
         either (1) $p (x^{-(k+1)} )_{k+1}=1$  %
         or (2) there exists an $i\in[k]$ such that $$p(x^{-i})_i\in\set{1,0,\geq} \quad\text{and}\quad p(x^{-i})_{k+1}=1.$$
\end{definition}

Note that if $x\in\intt^+_i(p)$ for some $i\in[k+1]$, then the definition above and the monotonicity of $p$ immediately implies that $p(x)_i=1$.

\begin{definition}[Interior Points$^-$]
	Let $p:[n]^{k}\mapsto\Gamma_k$ be a safe PI function. For each $i\in[k+1]$, we define $\intt^-_i(p)$ as  the set of \emph{interior points}
    of the $i$-th component of $p$ on the negative side.

	For each $i\in[k]$, $x\in\smash{\intt^-_i(p)}$   if $p(x^{+i})_i=-1$.

	For $k+1$, $x\in \intt^-_{k+1}(p)$ if either  (1) $p(x^{+(k+1)})_{k+1}=-1$ or (2)  there exists $i\in[k]$ such that 
$$p(x^{+i})_i\in\set{-1,0,\leq}\quad\text{and}\quad p(x^{+i})_{k+1}=-1.$$  
\end{definition}

We are now ready to define $\Cand^+(p)$ and $\Cand^-(p)$:

\begin{definition}
    Given a safe PI function $p:[n]^{k}\mapsto\Gamma_k$, we define
\begin{align*}\emph{\Cand}^+(p)&\coloneqq\Big\{J(p)\preceq x\preceq M(p): p(x)_i\neq -1\ \text{and}\ x\notin  \intt^+_i(p)\ \text{for all $i\in[k+1]$}\Big\}\quad \text{and}\\[1.2ex]
\emph{\Cand}^-(p)&\coloneqq\Big\{J(p)\preceq x\preceq M(p): p(x)_i\neq 1\ \text{and}\ x\notin \intt^-_i(p)\ \text{for all $i\in [k+1]$}\Big\}.
\end{align*}
\end{definition}

The following key technical lemma, which works for any dimension $k$, shows that $$\Cand(p)\coloneqq\Cand^+(p)\cup \Cand^-(p)$$ is a candidate set for $p$ with respect to safe functions consistent with $p$. We prove it in \Cref{section: proof of lemma: candidates} (recall that for a safe function $f$, we use $\Fix(f)$ to denote its unique fixed point):
\begin{restatable}{lemma}{lemmacandidates}\label{lemma: candidates}
    Let $p:[n]^{k}\mapsto\Gamma_k$ be a safe PI function with $\emph{\Sol}(p)=\emptyset$, and $f:[n]^{k}\mapsto\set{-1,0,1}^{k}\times\set{\pm 1}$ be a safe function that is consistent with $p$, i.e., $f\Rightarrow p$. Then we have %
    \begin{itemize}
        \item If $f(\emph{\Fix}(f))_{k+1}=1$, then there exists an $x\in \emph{\Cand}^+(p)$ with $f(x)\succeq \mathbf{0}_{k+1}$;
        \item If $f(\emph{\Fix}(f))_{k+1}=-1$, then there exists an $x\in \emph{\Cand}^-(p)$ with $f(x)\preceq \mathbf{0}_{k+1}$.
    \end{itemize}
\end{restatable}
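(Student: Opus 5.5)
By symmetry (reversing the order on every coordinate and negating all signs swaps $\Cand^+$ with $\Cand^-$, $J$ with $M$, and $\intt^+$ with $\intt^-$), I will only treat the case $f(\Fix(f))_{k+1}=1$.

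Write $z^*=\Fix(f)$. Since $f\Rightarrow p$, every postfixed point of $p$ is a postfixed point of $f$, so $J(p)\preceq J(f)=z^*$; dually $z^*\preceq M(f)\preceq M(p)$. Hence $z^*$ lies in the box $J(p)\preceq x\preceq M(p)$.

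The plan is to walk down a monotone path that ends at $z^*$ and stay inside the region $\{x: f(x)\succeq \mathbf{0}_{k+1}\}$. Concretely, I start at $x=z^*$, which satisfies $f(x)\succeq\mathbf{0}_{k+1}$. While $x\notin\Cand^+(p)$, I look for a strictly smaller point $x'\succeq J(p)$ with $f(x')\succeq\mathbf{0}_{k+1}$. Since the box is finite, this process terminates, and it can only stop at a point of $\Cand^+(p)$. Two facts are free for any $x$ with $f(x)\succeq\mathbf{0}_{k+1}$. First, $f(x)_i\ge 0$ for all $i$, so $p(x)_i\neq -1$ for every $i\in[k+1]$. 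Second, $x\succeq J(p)$ and $x\preceq z^*\preceq M(p)$ are maintained along the walk. Therefore $x$ fails to be in $\Cand^+(p)$ only if $x\in\intt^+_i(p)$ for some $i\in[k+1]$.

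\textbf{Case $i\in[k]$.} Here $p(x^{-i})_i=1$. I would step to $x'=x^{-i}$ restricted to the coordinates $j$ with $x_j>J(p)_j$; if that set of coordinates is empty, I take $x'=x$ minus nothing along the other axes. For $j\ne i$, coordinate $j$ is decreased, so I need $f(x')_j\ge 0$. The key tool is safety condition (1$'$) of $f$ applied on slices through $x'$: every point strictly below $J_s(f)$ on a slice has some free coordinate with value $+1$. Together with $f(x')_i=1$ and the monotonicity facts in \Cref{observation: alternative monotonicity}, I hope to show that all other coordinates of $f$ at $x'$ are nonnegative. If that fails, I would instead use \Cref{definition: safe}(1) applied to $p$ on the slice fixing coordinate $i$, since $x^{-i}\prec J_s(p)$ there forces $p$-values in $\{-1,0,1\}$ with a $+1$. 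For the last coordinate, $f(x')_{k+1}$ must be $+1$, and I would try to inherit this by choosing the path so that the $(k+1)$-th value stays $+1$. This likely requires modifying the step to move only along coordinates where $f(\cdot)_j=+1$, following the footnote's construction of the increasing path from $\mathbf{1}_k$ to $z^*$.

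\textbf{Case $i=k+1$.} Here the interior condition says that $p(x^{-(k+1)})_{k+1}=1$, or that $p(x^{-j})_j\in\{1,0,\ge\}$ and $p(x^{-j})_{k+1}=1$ for some $j$. I would use the canonical increasing path $\mathbf{1}_k=y^0\prec y^1\prec\cdots\prec z^*$ with $f(y^t)_{[k]}\succeq\mathbf{0}_k$ from the overview (Example 3). Let $t$ be the first index with $f(y^t)_{k+1}=1$. Then $y^t$ is a solution. The point $y^{t-1}$ has last coordinate $-1$, so by monotonicity in the last coordinate the point $y^t$ lies on the boundary of the $+$ region rather than in its interior: neither of the two interior conditions can hold at $y^t$. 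This is exactly why the definition of $\intt^+_{k+1}$ uses $x^{-(k+1)}$ and $x^{-j}$, since those points are dominated by the path predecessor.

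\textbf{Main obstacle.} The main obstacle is to combine the two cases. The first point $y^t$ of the path with $f_{k+1}=+1$ must also avoid every $\intt^+_i$ for $i\le k$ and must avoid $p(\cdot)_i=-1$. I expect to handle this by choosing the path greedily so that each step increases only a coordinate $j$ with $f(y)_j=1$. If $y^t\in\intt^+_i(p)$ for some $i\le k$, then $f(y^{t,-i})_i=1$, and safety (1$'$) on the slice with $x_i$ fixed should yield an alternative earlier path point with larger $f_{k+1}$, contradicting the minimality of $t$. Making this exchange argument precise, including the boundary cases $x_j=1$ in the definition of $x^{-i}$ and the requirement $y^t\succeq J(p)$ (which I plan to get by starting the path at $J(p)$ instead of $\mathbf{1}_k$, since $J(p)$ is postfixed for $f$), is where I expect most of the work to lie.
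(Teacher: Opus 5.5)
\textbf{Gap: the monotone path must be chosen using $p$, and your $f$-greedy path is not.} Your outer skeleton is the paper's: show $J(p)\preceq\Fix(f)\preceq M(p)$, take a monotone path from $J(p)$ to $\Fix(f)$ with $f_{[k]}\succeq\mathbf{0}_k$ along it, and use the first point $y^t$ with $f(y^t)_{k+1}=+1$. Drop the opening ``walk down from $z^*$'' plan: monotonicity only propagates $f(\cdot)_j\ge 0$ and $f(\cdot)_{k+1}=+1$ \emph{upward}, so nothing is preserved when you decrease coordinates.

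The gap is the step that carries almost all of the paper's work: the path must avoid $\intt^+_i(p)$ for every $i\in[k]$. The path you propose, increasing any coordinate $j$ with $f(y)_j=1$, does not. Take $k=2$, $n=8$, and the safe PI function of Example~1. There $p(4,5)_1=1$ forces, by monotonicity and safety on the rows $x_2=c\ge 5$, $p(x)_1=1$ whenever $x_1\le 4$ and $x_2\ge 5$. Hence $\{x: x_1\le 4,\ x_2\ge 6\}\subseteq\intt^+_1(p)$. Let $f(x)_{[2]}=(\mathrm{sgn}(6-x_1),\mathrm{sgn}(8-x_2))$, and let $f(x)_3=+1$ iff $x_2\ge 6$. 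This $f$ is safe, consistent with $p$, and has $f(\Fix(f))_3=f(6,8)_3=+1$. The greedy path $(1,1),(1,2),\ldots$ is legal, and its first point with $f_3=+1$ is $(1,6)$. But $(1,6)^{-1}=(1,5)$ has $p(1,5)_1=1$, so $(1,6)\in\intt^+_1(p)$ and $(1,6)\notin\Cand^+(p)$.

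Your proposed repair, an exchange argument contradicting the minimality of $t$, cannot work. All earlier points $(1,c)$ with $c\le 5$ have $f_3=-1$. In general, membership in $\intt^+_i(p)$ for $i\le k$ says nothing about $f_{k+1}$ at earlier path points.

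\textbf{What the paper does.} The path itself must be steered by $p$, not by $f$. In the paper's monotone-path lemma, whenever a step lands on $z=a+\mathbf{e}_{i^*}$ for which the maximal slice $s=\mathsf{s}(z,p)$ exists, the path follows an escape path from $z$ to $J_s(p)$. Here $s$ is the maximal slice on which $z$ lies strictly below $J_s(p)$ in every free coordinate. The escape path only increases free coordinates $j$ of $s$ with $p(\cdot)_j=1$, so $f_{[k]}\succeq\mathbf{0}_k$ persists. It is built recursively, by induction on the dimension of $\mathsf{s}(\cdot,p)$. Safety of $p$ supplies a $+1$ coordinate at each step, and a separate claim shows that each sideways step into a new slice strictly lowers that dimension. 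A nested induction then shows that no point of the path lies in $\intt^+_i(p)$. In the example above, once the path reaches $(1,5)$ it is forced to walk along the row $x_2=5$ to $(5,5)$ before going up.

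\textbf{The $(k+1)$ case is not independent of the $[k]$ case.} Write $y^t=y^{t-1}+\mathbf{e}_{i^*}$. The points $(y^t)^{-(k+1)}$ and $(y^t)^{-j}$ for $j\ne i^*$ are indeed $\preceq y^{t-1}$. But $(y^t)^{-i^*}$ is not, since it keeps coordinate $i^*$ at $y^t_{i^*}$. The paper rules out $p((y^t)^{-i^*})_{i^*}\in\{1,0,\ge\}$ together with $p((y^t)^{-i^*})_{k+1}=1$ as follows. It notes $(y^t)^{-i^*}=(y^{t-1})^{-i^*}+\mathbf{e}_{i^*}$ and applies safety of $p$ on the 1D slice in direction $i^*$. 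This gives $p((y^{t-1})^{-i^*})_{i^*}=1$, i.e.\ $y^{t-1}\in\intt^+_{i^*}(p)$. That is excluded only because the path avoids $\intt^+_{i^*}(p)$. So the $(k+1)$ case at $y^t$ depends on $[k]$-avoidance at $y^{t-1}$, which brings you back to the missing path construction.
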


Before presenting our main algorithm for the safe PI function game, we show in the next lemma that, given any two safe PI functions $p$ and $p'$ with $p\Rightarrow p'$, $\Cand^+(p)$ and $\Cand^-(p)$ are subsets of $\Cand^+(p')$ and $\Cand^-(p')$, respectively.
This will be useful when we analyze how much one can trim the current candidate set after making one query in the safe PI function game.

\begin{lemma}\label{lemma: candidates shrink}
    Let $p,p':[n]^{k}\mapsto\Gamma_k$ be two safe PI functions such that $p\Rightarrow p'$. Then we have 
    $$\emph{\Cand}^+(p)\subseteq \emph{\Cand}^+(p')\quad\text{and}\quad \emph{\Cand}^-(p)\subseteq \emph{\Cand}^-(p').$$
\end{lemma}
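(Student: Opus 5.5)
We prove the inclusion for $\Cand^+$; the one for $\Cand^-$ is symmetric (swap the roles of $+1/-1$, $\ge/\le$, $J/M$, $x^{-i}/x^{+i}$). Fix $x\in \Cand^+(p)$. We must check three things for $p'$: (a) $J(p')\preceq x\preceq M(p')$; (b) $p'(x)_i\neq -1$ for every $i\in[k+1]$; and (c) $x\notin\intt^+_i(p')$ for every $i\in[k+1]$.

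The plan is to show that every condition defining membership is monotone with respect to the information order. The ``forbidden'' facts about $p'$ are of the form ``some entry equals $\pm 1$'' or ``some entry lies in $\{1,0,\ge\}$''. Since $p\Rightarrow p'$, each such fact for $p'$ transfers to $p$, and there it contradicts $x\in\Cand^+(p)$.

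\textbf{Steps (b) and (c).} Step (b) is immediate. If $p'(x)_i=-1$, then, since $-1$ is maximal in the information order (nothing dominates it except itself), $p(x)_i\Rightarrow -1$ forces $p(x)_i=-1$, a contradiction.

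For (c) with $i\in[k]$: $x\in\intt^+_i(p')$ means $p'(x^{-i})_i=1$, hence $p(x^{-i})_i=1$ and $x\in\intt^+_i(p)$. Note that $x^{-i}$ depends only on $x$, not on the PI function.

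For $i=k+1$, case (1) transfers in the same way. In case (2) we have $p'(x^{-i})_i\in\{1,0,\ge\}$ and $p'(x^{-i})_{k+1}=1$. The set $\{1,0,\ge\}$ is upward closed under $\Rightarrow$, since the only symbols dominating $\ge$ are $1$ and $0$. Hence $p(x^{-i})_i\in\{1,0,\ge\}$, and also $p(x^{-i})_{k+1}=1$. So $x\in\intt^+_{k+1}(p)$, a contradiction.

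\textbf{Step (a)} is the only part that needs the semilattice structure. We claim $J(p')\preceq J(p)$ and $M(p)\preceq M(p')$; these give (a) from $J(p)\preceq x\preceq M(p)$. For the first claim, observe that $\Post(p')\subseteq \Post(p)$. Indeed, if $p'(y)_i\in\{1,0,\ge\}$ for all $i\in[k]$, then by upward closure $p(y)_i\in\{1,0,\ge\}$. Now $J(p')\in\Post(p')$ by \Cref{proposition: semilattice} applied to $p'$, which is monotone since it is safe. Hence $J(p')\in\Post(p)$, so $J(p')\preceq J(p)$ because $J(p)$ is the join of $\Post(p)$. The same argument with $\Pre$, $\{-1,0,\le\}$ and meets gives $M(p)\preceq M(p')$.

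I expect no real obstacle. The only point requiring care is the direction of the comparison of $J$ and $M$: more information shrinks the post/prefixed sets and so moves $J$ up and $M$ down, which is exactly what the bounds in (a) need.
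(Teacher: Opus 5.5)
Your proof is correct and matches the paper's argument essentially step for step. It uses the same three checks, transfers the facts ``entry equals $\pm 1$'' and ``entry lies in $\{1,0,\ge\}$'' from $p'$ back to $p$, and derives $J(p')\preceq J(p)$ and $M(p)\preceq M(p')$ from the fact that every postfixed (prefixed) point of $p'$ is postfixed (prefixed) in $p$.

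One wording fix in your closing remark: more information \emph{enlarges} the sets of postfixed and prefixed points, not shrinks them. That enlargement is exactly your inclusion of the postfixed points of $p'$ among those of $p$, and it is why $J$ moves up and $M$ moves down.
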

\begin{proof}
Take any $x\in \Cand^+(p)$. 
We show below that $x\in \Cand^+(p')$. 

To this end, we prove that $x$ satisfies the following conditions in $p'$:
\begin{flushleft}\begin{itemize}
\item $J(p')\preceq x\preceq M(p')$: Given that $p\Rightarrow p'$, every postfixed point of $p'$ is also postfixed in $p$.
Hence $J(p') \preceq J(p)$ amd similarly, $M(p) \preceq M(p').$
Therefore, we have $J(p')\preceq x\preceq M(p')$. %

\item $p'(x)_i\ne -1$ for all $i\in [k+1]$:
This follows from  $p(x)_i\ne -1$ for all $i\in [k+1]$ and $p\Rightarrow p'$.

\item $x\notin \intt^+_i(p')$ for all $i\in [k]$: Suppose that $x\in\intt^+_i(p')$ for some $i\in[k]$. Then we know $p'(x^{-i})_i=1$, which  implies $p(x^{-i})_i=1$ and $x\in \intt^+_i(p)$, a contradiction.

\item $x\notin \intt_{k+1}^+(p')$: Suppose that $x\in\intt^+_{k+1}(p')$ and consider the two cases.
\begin{itemize}
    \item
First, if $p'(x^{-(k+1)})_{k+1}=1$, then  $p(x^{-(k+1)})_{k+1}=1$ and $x\in \intt_{k+1}^+(p)$. 

\item Second, if there exists an $i\in[k]$ such that $p'(x^{-i})_i\in\set{1,0,\geq}$ and $p'(x^{-i})_{k+1}=1$,
then the same holds for $p$ and thus,   $x\in\intt^+_{k+1}(p)$.
\end{itemize}
We get a contradiction with $x\notin \intt_{k+1}^+(p)$ in both cases.
\end{itemize}\end{flushleft}
The other part of the lemma, $\Cand^-(p)\subseteq \Cand^-(p')$, can be proved similarly.
\end{proof}

\section{The Algorithm}
\label{section: the algorithm}

\begin{algorithm}[!t]
\caption{$\texttt{Win-the-safe-PI-function-game}([n]^3,\calO)$}\label{algorithm: main query for tarskistar}

Initialize the safe PI function $p^0$ as follows: $p^0(x)_i=\hspace{0.08cm}\geq$ if $i\in[k]$ and $x_i=1$; $p^0(x)_i=\hspace{0.08cm}\leq$ if $i\in[k]$ and $x_i=n$; and $p^0(x)_i=\diamond$ otherwise, for all $x\in[n]^k$ and $i\in[k+1]$.

\For{$t=1,2,\ldots$}{
    Let $q^{(t,1)},\dots,q^{(t,7)}$ be the points given by \Cref{lemma: reduce constant fraction} with respect to $p^{t-1}$.
    
    Let $q^{(t,8)},\dots,q^{(t,14)}$ be the points given by \Cref{lemma: reduce constant fraction lalala} with respect to $p^{t-1}$.

    Let $p^{(t,0)}\gets p^{t-1}$.

    \For{$j=1,\ldots,14$}{
    \cyan{Query} the PI-function oracle $\calO$ with input $(p^{(t,j-1)},q^{(t,j)})$, and let $p^{(t,j)}$ be the new safe PI function return by $\calO$.
    }
    Let $p^t\gets p^{(t,14)}$.
}
\end{algorithm}

We prove our main result:

\theoremmain*
\begin{proof}
We show that \Cref{algorithm: main query for tarskistar} can always win the safe PI function game with at most  $O(\log n)$ queries.  \Cref{theorem: main} then follows from \Cref{theorem:framework}.

Recall that at the beginning of round $t$, the algorithm examines a safe PI function $p^{t-1}$ from the previous round. 
The next lemma shows that there  
  exist seven points $q^1,\ldots,q^7$ such that, after they are queried, either the game ends or
  $|\Cand^+(p^{t-1})|$ goes down by at least a constant fraction:

\begin{restatable}{lemma}{lemmareduceconstantfraction}\label{lemma: reduce constant fraction}
For any safe PI function $p$, there exist points $q^1,\ldots,q^7\in [n]^3$ such that every safe PI function $p'$ satisfying $p'(q^i)\in \{-1,0,1\}^3\times \{\pm 1\}$ for all $i\in [7]$, $p'\Rightarrow p$ and $\emph{\Sol}(p')=\emptyset$ must have 
    \[\big|\emph{\Cand}^+(p')\big|\leq \big(1-\Omega(1)\big)\cdot\big|\emph{\Cand}^+(p)\big|.\]
\end{restatable}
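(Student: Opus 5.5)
Write $C\coloneqq\Cand^+(p)$ and $N\coloneqq|C|$. The plan is to apply the geometric lemma (\Cref{lemma: balanced point!}) to $C\subseteq[n]^3$. It gives a point $q$ and a pair of opposite orthants at $q$, each containing $\Omega(N)$ points of $C$. By symmetry of coordinates, the orthants are $A=\{x\preceq_\sigma q\}$ and $B=\{x\succeq_\sigma q\}$, where $\sigma$ is a sign pattern; either $\sigma$ is all-$+$ (the lattice-comparable orthants) or it flips one coordinate, say coordinate $3$ (the ``mixed'' orthants). The seven query points will be $q$ together with a constant number of its neighbours, such as $q\pm\be_j$ and the points $q^{-i},q^{+i}$ from \eqref{eq:defann}. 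These are needed because membership in $\intt^\pm_i$ is tested at shifted points $x^{-i}$, $x^{+i}$, so revealing the exact values there is what lets the new information propagate into the interiors. After all seven points are revealed in $p'$ with $\Sol(p')=\emptyset$, I will show that one of $A\cap C$ or $B\cap C$ (up to $O(1)$ boundary layers) is removed from $\Cand^+(p')$. By \Cref{lemma: candidates shrink}, $\Cand^+(p')\subseteq C$, which gives the $(1-\Omega(1))$ factor.

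The case analysis runs on $p'(q)$. Because $\Sol(p')=\emptyset$, $p'(q)$ has a $-1$ in some coordinate $i\in[k]$ together with $p'(q)_{4}=+1$, or a $+1$ in some coordinate together with $p'(q)_4=-1$, or a mix of $+1$ and $-1$ among the first three coordinates.

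In the lattice-comparable case, suppose first that $p'(q)_4=+1$. Then, using the neighbour queries to obtain $p'(y^{-4})_4=1$ for $y\succeq q^{+4}$, every $y\succ q$ lands in $\intt^+_4(p')$ by monotonicity (condition (8)), which kills $B$ apart from a boundary shell. If instead $p'(q)_4=-1$, then $p'(q)$ has some coordinate equal to $+1$, say coordinate $i$. Safety then forces $J(p')\succeq q+\be_i$ (the argument goes through $J_s$ on the $i$-th one-dimensional slice), and the constraint $J(p')\preceq x$ in the definition of $\Cand^+$ then removes all of $A$ that lies strictly below $q$ in coordinate $i$. That is not all of $A$ yet; I would use the $-1$-coordinate information plus \Cref{lem:hehe1} to argue that the strictly-below part carries a constant fraction, or use the other neighbour queries to finish.

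In the mixed case, the orthant $\{x_1\le q_1,x_2\le q_2,x_3\ge q_3\}$ is handled using $p'(q)_3$. If $p'(q)_3=1$, then monotonicity gives $p'(y)_3=1$ on $\{y\succeq q,\,y_3=q_3\}$, and the safe propagation illustrated in \Cref{subfig: 3b} extends this. Combined with the interior definition, this shows that points $x$ with $x_3\ge q_3$ and $x_1,x_2$ small have $p'(x^{-3})_3=1$ and so lie in $\intt^+_3(p')$. If $p'(q)_3=-1$, the value $p'(x)_3=-1$ propagates to the opposite orthant, and the condition $p'(x)_i\ne-1$ removes it. If $p'(q)_3$ is $0$, then the first two coordinates decide the outcome.

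The main obstacle is making each case remove a full orthant rather than just the orthant minus a thin slab. Safety conditions (1)/(2) only give exact values strictly below $J_s$ or strictly above $M_s$ on slices, so I need the auxiliary neighbour queries to pin down the values on the boundary of $q$'s orthants. I would first try to choose the other six points as $q\pm\be_j$ for $j\in[3]$, and verify case by case that the relevant $x^{\mp i}$ lands in a region where the exact value is known. If an orthant boundary still leaks, I would adapt \Cref{lemma: balanced point!} so that the orthants it returns have $\Omega(N)$ mass even after removing $O(1)$ layers, by applying it to $C$ thinned out.
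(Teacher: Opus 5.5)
\textbf{The gap: lower-dimensional slices.} You read \Cref{lemma: balanced point!} as always producing two opposite full-dimensional orthants at $q$. Its first two alternatives exist precisely because $\Cand^+(p)$ can concentrate on a plane or a line through $q$. Suppose $C\subseteq\{x_1=c\}$. Any two opposite open orthants disagree on the sign of $x_1-q_1$, so at most one of them meets $C$. Meanwhile, in each of your dichotomies one branch does too little: $p'(q)_4=+1$ only puts $\{x: x_j>q_j \text{ for all } j\}$ into $\intt^+_4(p')$, and $p'(q)_3\in\{0,1\}$ only puts an open orthant into $\intt^+_3(p')$. Both miss the slab $\{x_1=q_1\}$ entirely. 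So no thinning of $C$ rescues a purely full-dimensional argument; the slice cases need their own reasoning, and this is where the paper spends most of its effort.

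For a 2D slice with $s_1=q_1$ and comparable quadrants: if $p'(q)_1=-1$ or $p'(q)_4=-1$, the lower quadrant dies by items (1)--(2) of \Cref{lemma: delete}. Otherwise $p'(q)_1\in\{0,1\}$ and $p'(q)_4=+1$, and the open upper quadrant of the slice lies in $\intt^+_4(p')$ through clause (2) of its definition (item (3) of \Cref{lemma: delete}); that clause exists for exactly this purpose. For mixed quadrants, $p'(q)_2=-1$ or $p'(q)_3=-1$ kills one quadrant. If $p'(q)_1,p'(q)_2,p'(q)_3\in\{0,1\}$, then $q$ is postfixed, and $q\preceq J(p')$ excludes both quadrants. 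If $p'(q)_1=-1$, you need the extra query $q^*=q-\be_1$. Either $p'(q^*)_2$ or $p'(q^*)_3$ lies in $\{0,1\}$, and item (1) of \Cref{lemma: delete} at $q^*$ kills a quadrant. Or $q^*$ is prefixed (monotonicity gives $p'(q^*)_1\in\{-1,0\}$), and $M(p')\preceq q^*$ excludes both. For a 1D slice, $\Sol(p')=\emptyset$ forces $p'(q)_j=-1$ for some $j\in[4]$, and item (4) of \Cref{lemma: delete} kills a closed half-line. This $J/M$ argument, not pinning down boundary values of 3D orthants, is what the neighbour queries are for.

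\textbf{Errors in the full-dimensional case.} Your outline matches the paper here: comparable orthants are decided by $p'(q)_4$, mixed ones by the odd coordinate of $p'(q)$. Once you use the open orthants supplied by the third alternative of \Cref{lemma: balanced point!}, no boundary layers remain and only $q$ is needed. But two steps are wrong as written.

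First, the claim that $p'(q)_i=1$ forces $J(p')\succeq q+\be_i$ does not follow from safety. Safety only places the fixed point of the one-dimensional slice through $q$ above $q_i$; a safe function can have $f(q)_1=+1$ while its unique fixed point has first coordinate at most $q_1$. The claim is also unnecessary: $p'(q)_4=-1$ gives $p'(y)_4=-1$ for all $y\preceq q$ by monotonicity, so the whole closed orthant $A$ leaves $\Cand^+(p')$ at once.

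Second, in the mixed case your orientations are swapped. If $p'(q)_3\in\{0,1\}$, monotonicity upward in coordinates $1,2$ and safety downward in coordinate $3$ give $p'(y)_3=1$ whenever $y_1\ge q_1$, $y_2\ge q_2$, $y_3<q_3$. Hence $\{x_1>q_1,x_2>q_2,x_3<q_3\}\subseteq\intt^+_3(p')$. If $p'(q)_3=-1$, you get $p'(x)_3=-1$ on $\{x_1\le q_1,x_2\le q_2,x_3\ge q_3\}$. The dichotomy survives once the two conclusions are exchanged (and $p'(q)_3=0$ is just a sub-case of the first), but the justifications you wrote do not.
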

The following lemma then follows from symmetry.
\begin{lemma}\label{lemma: reduce constant fraction lalala}
For any safe PI function $p$, there exist points $q^1,\ldots,q^7\in [n]^3$ such that every safe PI function $p'$ satisfying $p'(q^i)\in \{-1,0,1\}^3\times \{\pm 1\}$ for all $i\in [7]$, $p'\Rightarrow p$ and $\emph{\Sol}(p')=\emptyset$ must have 
    \[\big|\emph{\Cand}^-(p')\big|\leq \big(1-\Omega(1)\big)\cdot\big|\emph{\Cand}^-(p)\big|.\]
\end{lemma}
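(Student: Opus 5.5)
We derive this lemma from the lemma for $\Cand^+$ (Lemma~\ref{lemma: reduce constant fraction}) through the order-reversing symmetry of $[n]^3$. Define the involution $\sigma(x)=\mathbf{n}_3+\mathbf{1}_3-x$, which reverses $\preceq$. For a PI function $p$, set
\[
\bar p(x)_i:=\overline{p(\sigma(x))_i}\quad\text{for all }i\in[4],
\]
where the bar is the symbol involution swapping $1\leftrightarrow-1$ and $\le\leftrightarrow\ge$ and fixing $0$ and $\diamond$ (in the fourth coordinate it swaps $\pm1$ and fixes $\diamond$).

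We verify the following facts, each a direct check against the definitions.

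\textbf{Step 1 (monotonicity, dominance, solutions).} If $p$ is monotone then so is $\bar p$: conditions (1)$\leftrightarrow$(2), (4)$\leftrightarrow$(5), (6)$\leftrightarrow$(7) and (8)$\leftrightarrow$(9) of \Cref{definition: monotone PI function} are exchanged, and (3a)$\leftrightarrow$(3b). The map $p\mapsto\bar p$ also preserves dominance: $p\Rightarrow p'$ iff $\bar p\Rightarrow\bar p'$. It preserves revealed solutions up to $\sigma$: $\Sol(\bar p)=\sigma(\Sol(p))$. It also maps fully determined values to fully determined values, so $p'(q)\in\{-1,0,1\}^3\times\{\pm1\}$ iff $\bar p'(\sigma(q))$ is too.

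\textbf{Step 2 (safety).} Postfixed points of $p$ on a slice $s$ correspond under $\sigma$ to prefixed points of $\bar p$ on the slice $\sigma(s)$, where fixed coordinates are reflected and $*$ is kept. Hence $J_{\sigma(s)}(\bar p)=\sigma(M_s(p))$ and $M_{\sigma(s)}(\bar p)=\sigma(J_s(p))$. Conditions (1) and (2) of \Cref{definition: safe} are swapped accordingly, so $\bar p$ is safe whenever $p$ is.

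\textbf{Step 3 (candidate sets).} From \eqref{eq:defann} we get $\sigma(x)^{-i}=\sigma(x^{+i})$, since the constraint $x_j<n$ becomes $\sigma(x)_j>1$. So $\sigma(x)\in\intt^+_i(\bar p)$ iff $p(x^{+i})$ satisfies the negated conditions, that is, iff $x\in\intt^-_i(p)$. This holds for $i\le 3$ and for both clauses of the $i=k+1$ case. Combined with Step 2 this gives the exact identity
\[
\Cand^+(\bar p)=\sigma\bigl(\Cand^-(p)\bigr),\qquad\text{so}\qquad |\Cand^+(\bar p)|=|\Cand^-(p)|.
\]

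\textbf{Conclusion.} Given a safe $p$, apply Lemma~\ref{lemma: reduce constant fraction} to $\bar p$ to get points $r^1,\dots,r^7$, and set $q^i=\sigma(r^i)$. Take any safe $p'\Rightarrow p$ with $\Sol(p')=\emptyset$ that is fully determined at every $q^i$. Then $\bar p'$ is safe, $\bar p'\Rightarrow\bar p$, $\Sol(\bar p')=\emptyset$, and $\bar p'$ is fully determined at every $r^i$. The lemma therefore gives $|\Cand^+(\bar p')|\le(1-\Omega(1))|\Cand^+(\bar p)|$, and Step 3 translates this into the claimed bound on $|\Cand^-(p')|$.

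The only step requiring care is Step 3. There we must check that $J$ and $M$ swap correctly on the full-grid slice and that the boundary truncations in $x^{\pm i}$ match under $\sigma$. Both checks are mechanical.
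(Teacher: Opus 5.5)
Your proposal is correct, and it is exactly the argument the paper has in mind: the paper proves this lemma in one line (``follows from symmetry''). You make that symmetry explicit. The involution $x\mapsto \mathbf{n}_3+\mathbf{1}_3-x$, combined with negating all signs, preserves monotonicity, safety (with $J$ and $M$ swapped), dominance, determinedness and $\Sol$, and your identity $\sigma(x)^{-i}=\sigma(x^{+i})$ correctly gives $\Cand^+(\bar p)=\sigma(\Cand^-(p))$. Applying the $\Cand^+$ lemma to $\bar p$ then yields the claim.
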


Given \Cref{lemma: reduce constant fraction} and \Cref{lemma: reduce constant fraction lalala}, we have that for each round $t$, with fourteen queries, either the game already ends (as $\Sol$ becomes nonempty) or both $|\Cand^+(p^{t-1})|$ and $|\Cand^-(p^{t-1})|$ drop down by at least a constant fraction. 
On the one hand, \Cref{lemma: candidates} guarantees that the candidate set is nonempty for any safe PI function; thus in particular we have $\Cand(p^t)\ne\emptyset$.
On the other hand, we have $|\Cand(p^t)|=|\Cand^+(p^t)\cup \Cand^-(p^t)|\leq |\Cand^+(p^t)| + |\Cand^-(p^t)|$ for every round $t\geq 0$. Note that initially we have $|\Cand^+(p^{0})|+|\Cand^-(p^{0})|\le 2n^3$.
As a result, \Cref{algorithm: main query for tarskistar} can win the safe PI function with $O(\log n)$ rounds.
\end{proof}

In the next two sections, we will provide \hyperref[section: proof of lemma: candidates]{proof of \Cref{lemma: candidates}} and \hyperref[section: reduce constant fraction]{proof of \Cref{lemma: reduce constant fraction}}, respectively, which are the key technical lemmas used in the proof of \Cref{theorem: main}.

\section{Proof of \Cref{lemma: candidates}}
\label{section: proof of lemma: candidates}

Before starting the proof of \Cref{lemma: candidates},  we give an overview of how it proceeds and how the main lemmas depend on each other.
As it will become clear soon, the proof of \Cref{lemma: candidates} is about finding a monotone path (to be defined next) from $J(p)$ to $\Fix(f)$ that can avoid $\intt_i^+(p)$ for all $i\in [k]$.
The construction of this intricate path is presented in \Cref{lemma: monotone path!!!}, which is done by running \Cref{algorithm: path from J to x*} to combine subpaths built in \Cref{lemma: extremely complicated}.
The proof of \Cref{lemma: extremely complicated}, on the other hand, builds each of these subpaths recursively by running \Cref{algorithm: find an escape path} and its correctness will be established by an involved induction.
We begin by defining monotone paths.

\begin{definition}[Monotone Paths]
    Given $x$ and $y$ in $[n]^k$ with $x\preceq y$, we say a sequence of points $a^1\cdots a^m$ is a \emph{monotone path from $x$ to $y$} if 
    $a^1=x$, $a^m=y$ and for each $\ell\in\set{2,\ldots,m}$, we have $a^{\ell}=a^{\ell-1}+\mathbf{e}_i$ for some $i\in [k]$.
    (Note that if $x, y \in \calL_s$ for some slice $\calL_s$, then every point along any monotone path from $x$ to $y$ must also lie in $\calL_s$.)
\end{definition}

All monotone paths we build in this section are done in the following fashion:

\begin{claim}\label{claim:simple2}
Let $f$ be a monotone function and  $a^1\cdots a^m$ be a monotone path with 
  $\smash{a^\ell=a^{\ell-1}+\be_{i_\ell}}$ for each $\ell\in \{2,\ldots,m\}$.
If $f(a^1)_i\ge 0$ for all $i\in [k]$ and 
  $f(a^{\ell-1})_{i_\ell}=1$ for all $\ell\in \{2,\ldots,m\}$, then we have $f(a^\ell)_i\ge 0$ for all $\ell\in [m]$ and $i\in [k]$.
\end{claim}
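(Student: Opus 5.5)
The plan is to prove the claim by induction on $\ell$, using the monotonicity of the sign function $f$ in the form of \Cref{observation: alternative monotonicity}. The base case $\ell=1$ is exactly the hypothesis $f(a^1)_i\ge 0$ for all $i\in[k]$.

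For the inductive step, assume $f(a^{\ell-1})_i\ge 0$ for all $i\in[k]$, and write $a^\ell=a^{\ell-1}+\be_{i_\ell}$. We verify $f(a^\ell)_i\ge 0$ coordinate by coordinate, splitting into two cases.

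\textbf{Case $i\neq i_\ell$.} Here $a^{\ell-1}\preceq a^\ell$ and $a^{\ell-1}_i=a^\ell_i$.
\begin{itemize}
\item If $f(a^{\ell-1})_i=1$, condition (1) of \Cref{observation: alternative monotonicity} gives $f(a^\ell)_i=1$.
\item If $f(a^{\ell-1})_i=0$, condition (3b) gives $f(a^\ell)_i\in\{0,1\}$.
\end{itemize}
Either way $f(a^\ell)_i\ge 0$.

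\textbf{Case $i=i_\ell$.} By hypothesis $f(a^{\ell-1})_{i_\ell}=1$. Condition (1), applied with $x=y=a^{\ell-1}$, gives $f(a^{\ell-1}+\be_{i_\ell})_{i_\ell}\in\{0,1\}$, that is, $f(a^\ell)_{i_\ell}\ge 0$.

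No step is a real obstacle; the only point needing care is the $i_\ell$ coordinate. Monotonicity in the $i_\ell$-th coordinate alone would not carry the value $1$ forward after stepping in direction $i_\ell$. It is the second clause of condition (1), equivalently the fact that the simple function $x\mapsto x+f(x)_{[k]}$ is monotone, that guarantees the sign cannot become $-1$ after a unit step in direction $i_\ell$. This is also exactly where the hypothesis $f(a^{\ell-1})_{i_\ell}=1$, rather than merely $\ge 0$, is used.
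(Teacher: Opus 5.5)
Your proof is correct and matches the paper's argument: induction on $\ell$, with coordinates $i\neq i_\ell$ handled by monotonicity (since $a^{\ell-1}\prec a^\ell$ and $a^{\ell-1}_i=a^\ell_i$), and coordinate $i_\ell$ handled by the hypothesis $f(a^{\ell-1})_{i_\ell}=1$. Your explicit use of the second clause of condition (1) in \Cref{observation: alternative monotonicity} only spells out a step the paper leaves implicit.
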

\begin{proof}
We prove this by induction on $\ell=1,\ldots,m$.

The base case when $\ell=1$ is trivial. For the induction step, assume that $f(a^{\ell-1})_i\ge 0$ for all $i\in [k]$.
By monotonicity we have $f(a^\ell)_i\ge 0$ for all $i\ne i_\ell$ given that $a^{\ell-1}\prec a^\ell$ and $(a^\ell)_i=(a^{\ell-1})_i$.
We also have $f(a^\ell)_i\ge 0$ for $i=i_\ell$ given the assumption that $f(a^{\ell-1})_{i_\ell}=1$.
\end{proof}

We restate \Cref{lemma: candidates} below:

\lemmacandidates*

\begin{proof}
First we note that the unique fixed point $\Fix(f)$ of $f$ must satisfy $J(p)\preceq \Fix(f)\preceq M(p)$.
This follows from $J(p)\preceq J(f)$ and $M(f)\preceq M(p)$ by $f\Rightarrow p$, and then $J(f)=M(f)=\Fix(f)$ in a safe function $f$.
    Assuming that $f(\Fix(f))_{k+1}=1$, we will show below that there exists an $x$ with $J(p)\preceq x\preceq M(p)$ such that $f(x)\succeq \mathbf{0}_{k+1}$ and $x\notin  \intt^+_i(p)$
 for every $i\in [k+1]$. Given that $f\Rightarrow p$, $p(x)_i\ne -1$ for all $i\in [k+1]$ and thus, $x\in \Cand^+(p)$.
    The second part of the lemma is symmetric. 

To this end, we prove the following lemma:

\begin{restatable}{lemma}{lemmamonotonepath}\label{lemma: monotone path!!!}
    Let $p$ be a safe PI function and $f$ be a safe function with $f\Rightarrow p$.
    Then there exists a monotone path $x^1 \cdots x^m$ from $J(p)$ to $\emph{\Fix}(f)$ such that for every $\ell\in [m]$, we have 
        $f(x^{\ell})_i\geq 0$ for all $i\in[k]$ and
 $x^{\ell}\notin\intt^+_i(p)$ for all $i\in[k]$.
\end{restatable}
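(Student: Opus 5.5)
Lemma~\ref{lemma: monotone path!!!} asks for a monotone path from $J(p)$ to $\Fix(f)$ along which $f$ stays nonnegative in the first $k$ coordinates and which avoids every $\intt^+_i(p)$, $i\in[k]$. We grow the path greedily, in the manner of \Cref{claim:simple2}, from $J(p)$ toward $\Fix(f)$.

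\emph{Base step.} We first check that $J(p)$ is a legal starting point. Since $J(p)\in\Post(p)$ and $f\Rightarrow p$, we have $f(J(p))_i\ge 0$ for all $i\in[k]$. We also need $J(p)\notin\intt^+_i(p)$. Suppose instead that $p(J^{-i})_i=1$. Then by monotonicity $p(y)_i=1$ for every $y$ with $J^{-i}\preceq y$ and $y_i=J_i$; in particular $p(J)_i=1$. This contradicts \Cref{proposition: J and M}, which gives $p(J)_i\in\{0,\ge\}$.

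\emph{Inductive step.} Let the current point be $x\preceq\Fix(f)$ with $x\ne\Fix(f)$, and restrict attention to the lattice $\calL(x,\Fix(f))$. Safety condition $(1')$ for $f$ on the full slice, combined with $x\prec \Fix(f)=J(f)$, gives a coordinate $i$ with $x_i<\Fix(f)_i$ and $f(x)_i=1$. Stepping to $x+\be_i$ preserves nonnegativity by \Cref{claim:simple2}. The danger is that $x+\be_i$ may land in some $\intt^+_j(p)$, i.e.\ $p((x+\be_i)^{-j})_j=1$.

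\emph{Repairing bad steps.} To handle this I would reason slice by slice, which is presumably the role of \Cref{lemma: extremely complicated}: build "escape" subpaths recursively, e.g.\ by \Cref{algorithm: find an escape path}. The key observation is that if $x+\be_i\in\intt^+_j(p)$ while $x\notin\intt^+_j(p)$, then $p$ already certifies $+1$ in coordinate $j$ on a region close to $x$. Safety of $p$ applied to the slice with coordinate $j$ fixed then forces $p(x)_j=1$ or another informative value. Consequently $f(x)_j=1$, so we may instead step in direction $j$, provided $x_j<\Fix(f)_j$; that last inequality should also follow from the safety of $p$, since $\Fix(f)\succeq J(p)$.

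If every candidate direction is blocked, I would pass to the lower-dimensional slice $s$ fixing the blocked coordinates. There I apply induction on $|\calF(s)|$, using $J_s(p)$ and the unique fixed point of $f$ on $s$ (\Cref{lemma: equivalent safe on complete function}), to produce a subpath that stays outside all $\intt^+$ sets until it can rejoin the main walk. \Cref{algorithm: path from J to x*} then concatenates these subpaths.

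\emph{Main obstacle.} The crux is showing that a blocked step can always be rerouted: whenever every admissible direction leads into some $\intt^+_j(p)$, there is an escape subpath within a slice that still ends at or below $\Fix(f)$. This needs a careful case analysis combining conditions (1) of \Cref{definition: safe} on several slices simultaneously. The induction on slice dimension must be set up so that the subpaths' endpoints are the join points $J_s$ of the relevant slices. Once that invariant is in place, termination is clear, because every step strictly increases the coordinate sum toward $\Fix(f)$.
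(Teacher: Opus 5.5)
\textbf{There is a genuine gap.} Your base step is correct, and you are right that escape subpaths ending at join points $J_s(p)$ are needed. Your local observation is also true: a step $i$ into $\intt^+_j(p)$ forces $p(x)_j=1$, by monotonicity alone. But the procedure you describe fails: step along any $f$-arrow whose endpoint avoids the interiors, and repair only when every direction is blocked. The crux you formulate, that a fully blocked point can always be escaped, is false.

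\textbf{Counterexample.} Take $k=3$, $n\ge 5$. Set $\phi(t,1)=2$; $\phi(t,2)=3$ for $t\le 2$ and $\phi(t,2)=5$ for $t\ge 3$; $\phi(t,c)=5$ for $c\ge 3$. Let $f(w)_1=\sgn(\phi(w_2,w_3)-w_1)$, $f(w)_2=\sgn(\phi(w_1,w_3)-w_2)$, $f(w)_3=\sgn(3-w_3)$, and $f(w)_4=+1$.
\begin{itemize}
\item $f$ is safe. It is monotone, since each coordinate is a threshold in its own variable and the thresholds are nondecreasing. It has a unique fixed point on every slice; for example, on the layer $w_3=2$ the only solution of $w_1=\phi(w_2,2)$, $w_2=\phi(w_1,2)$ is $(5,5)$. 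Its fixed point is $\Fix(f)=(5,5,3)$.
\item Let $p$ copy $f$ in coordinates $1,2$. Set $p(w)_3={\geq}$ if $w_3=1$, ${\leq}$ if $w_3=n$, and $\diamond$ otherwise, and set $p(w)_4=\diamond$.
\item $p$ is safe. On slices with coordinate $3$ fixed it agrees with $f$ on the free coordinates. On slices with coordinate $3$ free, its postfixed (prefixed) points are exactly those of $f$ on the sub-slice $w_3=1$ ($w_3=n$).
\item Also $f\Rightarrow p$, $J(p)=(2,2,1)$, and $\intt^+_3(p)=\emptyset$.
\item The walk $(2,2,1)\to(2,2,2)\to(3,2,2)\to(3,3,2)\to(3,3,3)$ follows $+1$ arrows of $f$, stays below $\Fix(f)$, and never enters $\intt^+_1(p)\cup\intt^+_2(p)$. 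For instance, at $(3,3,3)$ the relevant values are $p(3,2,2)_1=p(2,3,2)_2=0$. So your rule may produce this walk.
\item Yet $f(3,3,3)=(1,1,0)$, while $(4,3,3)\in\intt^+_2(p)$ and $(3,4,3)\in\intt^+_1(p)$, because $p(3,3,2)_1=p(3,3,2)_2=1$.
\end{itemize}
Every monotone path to $(5,5,3)$ through $(3,3,3)$ must next visit one of these two points, so no repair is possible there. The fatal move was the locally harmless step out of the layer $w_3=2$ at $(3,3,2)$.

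\textbf{The missing invariant.} The paper's path pauses only at points $a$ with $\mathsf{s}(a,p)=\textsf{nil}$. Such points have no slice $s$ with $a\ll_s J_s$, and in particular $p(a)_i\ne 1$ for all $i$. From such an $a$, any $f$-step $z=a+\mathbf{e}_{i^*}$ is harmless.
\begin{itemize}
\item If $s=\mathsf{s}(z,p)\ne\textsf{nil}$, the path must stay inside $\calL_s$ until it reaches $J_s(p)$. This endpoint is determined by $p$ alone, not by a fixed point of $f$ on a slice.
\item Inside $\calL_s$ it moves only along $+1$ arrows of $p$ in $\calF(s)$. This is exactly \Cref{lemma: extremely complicated}, whose proof walks a shadow copy one unit lower in coordinate $i^*$ to avoid $\intt^+_i(p,s^*)$.
\item At $J_s(p)$, \Cref{claim:simple} gives $\mathsf{s}(J_s,p)=\textsf{nil}$, which restores the invariant.
\end{itemize}
In the example, $\mathsf{s}((2,2,2),p)=(*,*,2)$ with $J_s=(5,5,2)$, so the correct path stays in that layer until $(5,5,2)$ and only then steps up.

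\textbf{Remaining checks.} Given this structure, the rest is routine:
\begin{itemize}
\item $f\ge 0$ along the path follows from \Cref{claim:simple2}, since each escape step follows $p=+1$, hence $f=+1$.
\item Interior avoidance follows from the second and third bullets of the escape lemma.
\item The path ends at $\Fix(f)$, because $f\ge 0$ keeps it among $f$'s postfixed points, which lie below $J(f)=\Fix(f)$.
\end{itemize}
That last fact also makes your side conditions $x_j<\Fix(f)_j$ and ``ending below $\Fix(f)$'' automatic.
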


We delay the proof and assume the existence of such a monotone path in the rest of the proof.

    Note that $p(J(p))_{k+1}\in\set{\diamond,-1}$, otherwise it follows from \Cref{proposition: J and M} that $J(p)\in\Sol(p)$, which contradicts with the assumption of $\Sol(p)=\emptyset$. On the other hand, if $f(J(p))_{k+1}=+1$, then we have $p(J(p))_{k+1}=\diamond$ and $J(p)$ is a point such that $J(p)\in\Cand^+(p)$ and $f(J(p))\succeq \mathbf{0}_{k+1}$.
To show that $J(p)\in \Cand^+(p)$, it suffices to show that $J(p)\notin \intt_i^+(p)$ for all $i\in [k+1]$ (since by \Cref{proposition: J and M}, we have $p(J(p))_i\in \{\ge, 0\}$ for all $i\in [k]$).
But a necessary condition for $J(p)$ to be in $\intt_i^+(p)$ for any $i\in [k+1]$ is to have $p(J(p))_i=1$ but this is not the case.

So we assume that $f(J(p))_{k+1}=-1$. Since $f(\Fix(f))_{k+1}=1$, there is a smallest index $t\in[m]$ along the path with $t>1$ such that $f(x^{t})_{k+1}=1$.
We show below that $x$ is the point we look for: $J(p)\preceq x\preceq M(p)$, $f(x)\succeq\mathbf{0}_{k+1}$ and $x\notin \intt_i^+(p)$ for any $i\in [k+1]$.

To this end, we have $J(p)\preceq x^t\preceq \Fix(f)\preceq M(p)$, $x^t$ satisfies $f(x^t)_i\geq 0$ for all $i\in[k+1]$, and $x^t\notin  \intt^+_i(p)$ for all $i\in [k]$. (The latter two conditions come from \Cref{lemma: monotone path!!!} about the monotone path to be proved later.) We finish the proof by showing below that $x^t\notin \intt^+_{k+1}(p)$.
\begin{flushleft}\begin{itemize}
\item Recall the notation of $x^{-i}$ in \Cref{eq:defann}. First we show that $p((x^t)^{-(k+1)})_{k+1}\ne  1$.  \\ Since $t$ is the smallest index such that $f(x^{t})_{k+1}=1$, we have that $f(x^{t-1})_{k+1}=-1$. \\ Using $x^{t-1}\succeq (x^t)^{-(k+1)}$, this together with $f\Rightarrow p$ implies that $p((x^t)^{-(k+1)})_{k+1}\neq 1$.

\item Next assume for a contradiction that there exists an $i\in [k]$  such that $p((x^t)^{-i})_i\in\set{1,0,\geq}$ and $p((x^t)^{-i})_{k+1}=1$.  
Let $i^*$ be the index such that $x^t=x^{t-1}+\mathbf{e}_{i^*}$. Note that $x^t_{i^*}>1$. 
\begin{itemize}
    \item    If $i\neq i^*$, then by $p((x^t)^{-i})_{k+1}=1$ and $x^{t-1}\succeq (x^t)^{-i}$, we have from monotonicity\\ that $p(x^{t-1})_{k+1}=1$, a contradiction.
\item If $i= i^*$, then from $p((x^t)^{-i})_i\in\set{1,0,\geq}$ we have $p((x^{t-1})^{-i})=1$.
This is because  $(x^t)^{-i}$ is just 
$(x^{t-1})^{-i}+\be_i$. Using the safety condition of $p$ on the 1D slice that contain them, we have $p((x^{t-1})^{-i})=1$ and thus,  $x^{t-1}\in \intt_i^+(p)$.
a contradiction with what we get from \Cref{lemma: monotone path!!!} that every point along the path is not in $\intt_i^+(p)$, $i\in [k]$.\end{itemize}
\end{itemize}    \end{flushleft}
    This finishes the proof of \Cref{lemma: candidates}.
\end{proof}

In the rest of this section, given a slice $s$ and $x,y\in [n]^k$, we write $x\ll_s y$ to denote that $x\preceq y$ and $x_i<y_i$ for every $i\in \calF(s)$. 
Before proving \Cref{lemma: monotone path!!!}, we give a useful fact and then introduce some notation based on it:
\begin{lemma}
     Let $p:[n]^k\mapsto\Gamma_k$ be safe, and let $x\in[n]^k$. If there exists a slice $s$ such that $x\in\calL_s$ and $x\ll_s J_s$, then there exists a unique maximal such slice $s^*$ such that
     \[x\in\calL_{s^*}, \quad x\ll_{s^*} J_{s^*} \quad \text{and}\quad \calL_s\subseteq \calL_{s^*}\text{ for all such } s.\]
\end{lemma}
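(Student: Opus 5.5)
The plan is to show that the family of slices $s$ with $x\in\calL_s$ and $x\ll_s J_s(p)$ is closed under a ``union'' operation. Slices containing a fixed $x$ are determined by their free sets $\calF(s)$, since the fixed coordinates must agree with $x$. So it suffices to prove the following: if $s^1,s^2$ are two such slices and $s$ is the slice through $x$ with $\calF(s)=\calF(s^1)\cup\calF(s^2)$, then $x\ll_s J_s(p)$. Once this holds, the finitely many valid slices have a valid union $s^*$, and $s^*$ is the unique maximal one with $\calL_s\subseteq\calL_{s^*}$ for every valid $s$.

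For the closure step I would argue as follows. Since $\calL_{s^1}\subseteq\calL_s$, every point postfixed on $s$ that lies in $\calL_{s^1}$ is postfixed on $s^1$. The key needed fact is that $J_{s^1}\in\Post_s(p)$; this is not immediate, because postfixedness on $s$ also requires $p(J_{s^1})_i\in\{1,0,\ge\}$ for the extra coordinates $i\in\calF(s)\setminus\calF(s^1)$.

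To get this, I would use the safety condition (1) on the slice $s^2$. Take any coordinate $i\in\calF(s^2)$. Along $\calL_{s^2}$, the point $x\prec J_{s^2}$ lies strictly below $J_{s^2}$ in every free coordinate of $s^2$. Condition (1) says each point $y\prec J_{s^2}$ in $\calL_{s^2}$ has $p(y)_j\in\{-1,0,1\}$ on coordinates with $y_j<(J_{s^2})_j$, and some $p(y)_j=+1$. I would try to extract from this that $p(y)_i\ne -1$ for such $y$ below $J_{s^2}$. The argument: if $p(y)_i=-1$, monotonicity pushes $-1$ downward, and a point $z\preceq y$ in the slice would then violate the requirement of an increasing coordinate. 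Then I would transport $p(\cdot)_i\in\{1,0,\ge\}$ from points of $\calL_{s^2}$ upward to $J_{s^1}$ using the monotonicity rules (1),(3b),(5). This works because $J_{s^1}\succeq x$ and $(J_{s^1})_i=x_i$ for $i\notin\calF(s^1)$.

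A cleaner route, which I would try first, is an induction that builds an explicit postfixed point $z$ on $s$ with $z\gg_s x$. Start from $J_{s^1}\vee J_{s^2}$, which coordinatewise dominates $x$ strictly on $\calF(s)$. Then show it is postfixed on $s$ using the semilattice idea of \Cref{proposition: semilattice}. For coordinate $i\in\calF(s^1)$, the value $(J_{s^1}\vee J_{s^2})_i$ equals either $(J_{s^1})_i$ or $(J_{s^2})_i$, and we need $p(\cdot)_i\ge0$ at the respective point. This holds for the point whose slice contains $i$ as a free coordinate, so the overlap case $i\in\calF(s^1)\cap\calF(s^2)$ is fine. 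The problematic case is $i\in\calF(s^1)\setminus\calF(s^2)$ with the maximum attained by $J_{s^2}$. There $(J_{s^2})_i=x_i<(J_{s^1})_i$, so the maximum is attained by $J_{s^1}$ after all. Hence every coordinate is covered, and $J_{s^1}\vee J_{s^2}\in\Post_s(p)$, giving $J_s\succeq J_{s^1}\vee J_{s^2}\gg_s x$.

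The main obstacle I expect is exactly this case check. The simple observation that $x$ is fixed at coordinate $i$ on whichever slice lacks $i$ should resolve it, making the proof short.
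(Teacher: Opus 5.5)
Your ``cleaner route'' is exactly the paper's proof. Take $y=J_{s^1}\vee J_{s^2}$. Each $i\in\calF(s)$ has $y_i$ attained by some $J_{s^t}$ with $i\in\calF(s^t)$: if $i$ is free in only one of the two slices, the other join sits at $x_i$, strictly below. Monotonicity then gives $p(y)_i\in\{1,0,\geq\}$, hence $x\ll_s y\preceq J_s$.

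Drop the first route you sketch, because its intermediate claim that $p(y)_i\neq -1$ below $J_{s^2}$ is false. Safety only forces some free coordinate to be $+1$. For example, take the safe function on $[8]^2$ with $f(y)_1=\mathrm{sgn}(y_2-y_1)$ and $f(y)_2=\mathrm{sgn}(h(y_1)-y_2)$, where $h(1),\dots,h(8)=2,3,4,5,5,5,6,7$. It has $J=(5,5)$, yet $f((3,2))_1=-1$ at the point $(3,2)\ll J$.
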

\begin{proof}
It suffices to show that given any two slices $s^1$ and $s^2$ with 
$$
x\in \calL_{s^1}\cap \calL_{s^2},\quad x\ll_{s^1} J_{s^1}\quad\text{and}
\quad x\ll_{s^2} J_{s^2},
$$
then we have $x\in \calL_s$ and $x\prec J_s$, where $s$ is the slice defined as
$$ 
	s_i=
	\begin{cases}
	* & \text{if } s^1_i=* \text{ or }s^2_i=* \\
	x_i  & \text{otherwise}.
	\end{cases}
$$ 
The part of $x\in \calL_s$ is trivial so we just need to show that $x\ll_s J_s$.

   Let $y^1=J_{s^1}$ and $y^2=J_{s^2}$, and define $y\in[n]^k$ to be the coordinate-wise max of $y^1$ and $y^2$.
Note that $y\in\calL_s$ and $y^1\preceq y$ and $y^2\preceq y$. This means $x\ll_s y$. 
   
   It remains to show that $y\preceq J_s$ and for this, it suffices to show  $p(y)_i\in\set{1,0,\geq}$ for all $i\in \calF(s)$. The key property that we will use is the monotonicity of $p$.
For each $i\in \calF(s)$, it is easy to check that 
  at least one of the following holds: (1) $s^1_i=*$ and $y^1_i=y_i$; or (2)
  $s^2_i=*$ and $y^2_i=y_i$.
Assume without loss of generality that (1) holds.
Given that $p(y^1)_i\in \{0,\ge\}$, $y^1\preceq y$ and $y^1_i=y_i$, we have 
  by monotonicity of $p$ that $p(y)_i\in \{1,0,\ge\}$.
This finishes the proof of the lemma.
\end{proof}

\def\sfs{\mathsf{s}}
\def\nil{\textsf{nil}}

\noindent\textbf{Notation.} Given a safe PI function $p$ and $x\in[n]^k$, we use $\sfs(x,p)$ to denote the \emph{unique} maximal slice $s$ such that $x\in \calL_s$ and $x\ll_s J_s$. When no such slice exists, we write 
$\sfs(x,p)=\nil$. 

We prove a few simple facts about $\sfs(x,p)$:

\begin{claim}\label{claim:simple}
Let $p$ be a safe PI function and $x\in [n]^k$.
If $p(x)_i=1$ for some $i\in [k]$, then $\sfs(x,p)\ne \emph{\nil}$ and $i\in \calF(\sfs(x,p))$.

On the other hand, if 
 $s\coloneqq \sfs(x,p)\ne \emph{\nil}$, then we have
\begin{flushleft}\begin{enumerate}
\item Both $x$ and $J_s$ lie in $\calL_s$ with $x\ll_s J_s$ and $p(x)_i=1$ for some $i\in \calF(s)$;
\item For any $y: x\preceq y\prec J_s$, we have $\sfs(y,p)\ne \emph{\nil}$ and $\calF(\sfs(y,p))\subseteq \calF(s)$; and
\item We have $p(J_s)_i\in \{0,\ge \}$ for all $i\in \calF(s)$ and $\sfs(J_s,p)=\emph{\nil}$.
\end{enumerate}\end{flushleft}
\end{claim}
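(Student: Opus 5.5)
The plan is to prove everything directly from the definition of $\sfs(x,p)$, the safety condition (1) of \Cref{definition: safe}, \Cref{proposition: J and M}, and the merging argument in the proof of the preceding lemma. Throughout, $x\ll_s J_s$ is read with $\calF(s)\neq\emptyset$, so that it implies $x\prec J_s$. The workhorse is a \emph{merge step}, taken from the proof of the preceding lemma.

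\emph{Merge step.} Let $s^1,s^2$ be slices with $z^1\in\calL_{s^1}$ and $z^2\in\calL_{s^2}$. Let $u$ be the slice with free set $\calF(s^1)\cup\calF(s^2)$ and fixed coordinates chosen so that $w\coloneqq J_{s^1}\vee J_{s^2}$ lies in $\calL_u$; this requires $J_{s^1}$ and $J_{s^2}$ to agree with the fixed values of $u$, which I will check in each application. Then $p(w)_i\in\{1,0,\ge\}$ for every $i\in\calF(u)$, by monotonicity of $p$ exactly as in that proof. Hence $w\preceq J_u$.

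\emph{First statement.} Suppose $p(x)_i=1$. Let $s$ be the one-dimensional slice through $x$ with only coordinate $i$ free. Then $x\in\Post_s$, so $x\preceq J_s$. By \Cref{proposition: J and M}, $p(J_s)_i\in\{0,\ge\}$, so $J_s\neq x$, which gives $x_i<(J_s)_i$ and hence $x\ll_s J_s$. So $\sfs(x,p)\neq\nil$, and by maximality $\calL_s\subseteq\calL_{\sfs(x,p)}$, which forces $i\in\calF(\sfs(x,p))$.

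\emph{Item 1.} This is immediate from the definition of $\sfs(x,p)$ together with safety condition (1) applied at $x\prec J_s$, which yields $p(x)_i=1$ for some $i\in\calF(s)$ with $x_i<(J_s)_i$.

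\emph{Item 3.} The first half is \Cref{proposition: J and M}. For the second half, suppose $t\coloneqq\sfs(J_s,p)\neq\nil$. By item 1 applied to $J_s$, some $i\in\calF(t)$ has $p(J_s)_i=1$, and then $i\notin\calF(s)$ by the first half. Merge $s$ and $t$ into $u$ with free set $\calF(s)\cup\calF(t)$ and fixed coordinates equal to those of $x$ (equivalently, of $J_s$). Then $J_u\succeq J_s\vee J_t$. Check $x\ll_u J_u$ coordinatewise:
\begin{itemize}
\item on $\calF(s)$, $x_j<(J_s)_j\le (J_u)_j$;
\item on $\calF(t)\setminus\calF(s)$, $x_j=(J_s)_j<(J_t)_j\le (J_u)_j$.
\end{itemize}
So $x\ll_u J_u$ with $\calF(u)\supsetneq\calF(s)$, contradicting the maximality of $s$.

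\emph{Item 2.} This is the main obstacle, because $y$ need not satisfy $y\ll_s J_s$: some free coordinates may already be tight. Let $y$ satisfy $x\preceq y\prec J_s$. Then $y\in\calL_s$, and safety (1) gives $p(y)_i=1$ for some $i$, so $t\coloneqq\sfs(y,p)\neq\nil$ by the first statement. Suppose some $j\in\calF(t)\setminus\calF(s)$. I will do three merges.
\begin{enumerate}
\item Let $s'$ be $s$ with the coordinates where $y_i=(J_s)_i$ additionally fixed to $y_i$. Then $J_s\in\calL_{s'}$ is postfixed on $s'$, so $J_{s'}\succeq J_s$ and $y\ll_{s'}J_{s'}$.
\item Merge $s'$ and $t$ around $y$ into a slice $u$. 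This gives $y\ll_u J_u$ and $J_u\succeq J_s\vee J_t$.
\item Merge $u$ with $s$ into $u'$, with free set $\calF(s)\cup\calF(t)$ and fixed coordinates equal to those of $x$. This is valid because $x$ and $y$ agree outside $\calF(s)$. Then check coordinatewise:
\begin{itemize}
\item on $\calF(s)$, $x_j<(J_s)_j\le(J_{u'})_j$;
\item on $\calF(t)\setminus\calF(s)$, $x_j=y_j<(J_t)_j\le(J_{u'})_j$.
\end{itemize}
Hence $x\ll_{u'}J_{u'}$.
\end{enumerate}
Since $j\in\calF(u')\setminus\calF(s)$, this contradicts the maximality of $s=\sfs(x,p)$. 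Therefore $\calF(t)\subseteq\calF(s)$. The delicate point is verifying that the fixed coordinates line up in each merge; I expect that bookkeeping to be where the care is needed.
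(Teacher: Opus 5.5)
Your proof is correct. The overall strategy is the paper's: everything reduces to the maximality of $\sfs(\cdot,p)$ plus joins of postfixed points. Items 2 and 3, however, are handled differently.

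\textbf{Item 3.} The paper never looks at $t=\sfs(J_s,p)$. It takes $j\notin\calF(s)$ with $p(J_s)_j=1$ and notes that the single point $J_s+\mathbf{e}_j$ is postfixed on $s$ with coordinate $j$ freed. That already contradicts maximality. Your merge with $t$ works but is heavier.

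\textbf{Item 2.} Your non-nil argument (safety at $y\prec J_s$, then the first statement) is cleaner than the paper's auxiliary slice. Your proof of $\calF(\sfs(y,p))\subseteq\calF(s)$ is also more careful than the paper's. The paper asserts $x\ll_{\sfs(y,p)}J_{\sfs(y,p)}$ from $x\preceq y$ and then invokes maximality of $\sfs(x,p)$. That step tacitly needs $x\in\calL_{\sfs(y,p)}$, which fails whenever $x_i<y_i=(J_s)_i$ for a coordinate $i$ fixed by $\sfs(y,p)$. A merge is exactly what repairs this.

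Two remarks on your merge step.
\begin{itemize}
\item State it with the hypothesis that actually drives the monotonicity argument: both slices contain a common point $c$ with $c\preceq J_{s^1}$ and $c\preceq J_{s^2}$. Then for $i$ free only in $s^1$ you get $(J_{s^2})_i=c_i\le (J_{s^1})_i$, so $w_i$ is attained by a join in which $i$ is free. Agreement on the fixed coordinates alone is not what makes the argument work. In each of your applications such a point exists ($J_s$ in item 3, $y$ in item 2), so there is no gap.
\item Steps 1 and 2 of item 2 are unnecessary. The slices $s$ and $t$ both contain $y$, and $y\preceq J_s$, $y\preceq J_t$. So one merge of $s$ and $t$ about $y$ already gives a slice with free set $\calF(s)\cup\calF(t)$ and join dominating $J_s\vee J_t$. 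This slice contains $x$, since $x$ and $y$ agree outside $\calF(s)$. Your coordinatewise check then goes through verbatim.
\end{itemize}
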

\begin{proof}
For the first statement, given that $p(x)_i=1$, on the 1D slice $s'$ with $a\in \calL_{s'}$ and $s'_i=*$,   we have $a\ll_{s'} J_{s'}$ and thus, $\sfs(x,p)\ne \nil$ and $i\in \calF(\sfs(x,p))$ by its maximality.

Next we prove the three items, assuming $s\coloneqq\sfs(x,p)\ne \nil$.
The first item is trivial. 

For the second item, 
  let $s'$ be the slice that contains $y$ and $s'_i=*$ iff $y_i<(J_s)_i$.
Then $y\ll_{s'} J_{s'}$ because $J_s\preceq J_{s'}$ and $y\prec J_s$.
From this we know that $s^*\coloneqq\sfs(y,p)\ne \nil$. 

To see $\calF(\sfs(y,p))\subseteq \calF(s)$, note that $y\ll_{\sfs(y,p)} J_{\sfs(y,p)}$. Since $x\preceq y$, we know that $x\ll_{\sfs(y,p)} J_{\sfs(y,p)}$. By the maximality of $\sfs(x,p)$, we know that $\calF(\sfs(y,p))\subseteq \calF(\sfs(x,p))$.

For the last item, the first part follows from \Cref{proposition: J and M}.
For the second part, note by the first item that a necessary condition of $\sfs(J_s,p)\ne \nil$ is that $p(J_s)_j=1$ for some $j\in [k]$. 
Using the first part, this $j$ must be outside of $\calF(s)$. Let $z=J_s+\be_j$. It is easy to argue using monotonicity that $p(z)_i\in \{0,1,\ge\}$ for all $i\in \calF(s)\cup \{j\}$.
On the other hand, $z$ and $x$ lie on the same slice $s'$, where $s'$ is $s$ after setting $s_j=*$, and $x\ll_{s'} z$, a contradiction with the maximality of $s$ as $\sfs(x,p)$. 
\end{proof}

We generalize the definition of interior points on slices as follows.

\begin{definition}[Interior Points on Slices]
	Let $p$ be a safe PI function. Given a slice $s\in([n]\cup\set{*})^k$ and an $i\in \calF(s)$,
    we use $\intt_i^+(p,s)$ to denote the set of $x\in \calL_s$ such that $p(x)_i=1$ and 
\begin{equation}\label{heheeq100}
p\left(x-\sum_{j\neq i,s_j=*, x_j>1}\mathbf{e}_{j}\right)_i=1.
\end{equation}
\end{definition}
Note that given a slice $s$ and $i\in\calF(s)$, a point $x\notin\intt^+_i(p,s)$ implies $x\notin\intt^+_i(p)$.

We prove the following simple claim:

\begin{claim}\label{proposition: s*_{i*}}
Let $p$ be a safe PI function.
Let $a$ and $z$ in $[n]^k$ be such that $z=a+\be_{i^*}$ for some $i^*\in [k]$.
Assume that $\sfs(a,p)=\emph{\nil}$ and $s\coloneqq\sfs(z,p)\ne \emph{\nil}$. 
Then we have
        \begin{flushleft}\begin{itemize}
            \item[(a)] 
            $s_{i^*}\neq *$;
            \item[(b)] $p(z)_i\neq 1$ for all $i\notin \calF(s)$; and  
            \item[(c)]  $z\notin\intt^+_i(p )$ for all $i\in [k]$.
        \end{itemize}\end{flushleft}
    \end{claim}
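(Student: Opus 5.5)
The plan is to prove the three items in order, using (a) inside (c). Every item reduces to one observation: if some point near $a$ had $p(\cdot)_i=1$, then \Cref{claim:simple} would force $\sfs(a,p)\ne\nil$, contradicting the hypothesis.

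For (a), I argue by contradiction and suppose $s_{i^*}=*$. By item 1 of \Cref{claim:simple}, $z\in\calL_s$ and $z\ll_s J_s$. Since $a=z-\be_{i^*}$ differs from $z$ only in the free coordinate $i^*$, we also have $a\in\calL_s$. Moreover $a\preceq z\preceq J_s$, and $a_i\le z_i<(J_s)_i$ for every $i\in\calF(s)$, so $a\ll_s J_s$. Then $s$ is a slice witnessing that $\sfs(a,p)\ne\nil$, which is a contradiction.

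For (b), suppose $p(z)_i=1$ for some $i\notin\calF(s)$. The first statement of \Cref{claim:simple} gives $i\in\calF(\sfs(z,p))=\calF(s)$, a contradiction.

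For (c), I suppose $z\in\intt^+_i(p)$ for some $i\in[k]$, so $p(z^{-i})_i=1$, and split into two cases.
\begin{itemize}
\item If $i\ne i^*$: since $a$ is a valid point, $z_{i^*}=a_{i^*}+1>1$, so $z^{-i}$ subtracts $\be_{i^*}$. Hence $(z^{-i})_{i^*}=a_{i^*}$, and every other coordinate of $z^{-i}$ is at most that of $a=z-\be_{i^*}$, so $z^{-i}\preceq a$. In addition $(z^{-i})_i=z_i=a_i$. Monotonicity condition (1) of \Cref{definition: monotone PI function} then gives $p(a)_i=1$. By \Cref{claim:simple} this yields $\sfs(a,p)\ne\nil$, a contradiction.
\item If $i=i^*$: we have $z^{-i^*}\preceq z$ with equal $i^*$-coordinate, so monotonicity gives $p(z)_{i^*}=1$. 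By \Cref{claim:simple}, $i^*\in\calF(\sfs(z,p))=\calF(s)$, which contradicts (a).
\end{itemize}

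The only delicate step is the case $i\ne i^*$ of (c). There one must check that $z^{-i}$ genuinely lands below $a$ in the $i^*$-coordinate, which needs $z_{i^*}>1$, and that it keeps the same $i$-coordinate as $a$, so that monotonicity condition (1) applies. The rest is direct bookkeeping with \Cref{claim:simple}.
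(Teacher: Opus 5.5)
Your proof is correct and is essentially the paper's argument. Parts (a) and (b) match it exactly, and in (c) you split on $i=i^*$ versus $i\neq i^*$ rather than on $i\notin\calF(s)$ versus $i\in\calF(s)$; both routes rest on the same step of pushing $p(z^{-i})_i=1$ along coordinate $i$ to $a$ (or to $z$) by monotonicity and then invoking \Cref{claim:simple}, and your explicit check that $z_{i^*}>1$ forces $z^{-i}\preceq a$ makes clear why the paper's appeal to (b) in its $i\in\calF(s)$ case is not actually needed.
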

    \begin{proof}For (a),
if  $s_{i^*}=*$, then  
  $a\in \calL_s$ and  $a\prec z\ll_s J_{s}$. This contradicts with $ \sfs(a,p)=\nil$.
        
         Item (b) follows directly from the first sentence of \Cref{claim:simple}.   
        
        For (c), (b) has also implied that $z\notin \intt_i^+(p)$ for every $i\notin \calF(s)$.
        For each $i\in \calF(s)$ (so by (a), $i\ne i^*$), if $z\in\intt^+_i(p )$   then by definition we have that %
        $p(z^{-i})_i=1$. Using (b), it then follows from monotonicity that $p(a)_i=1$, which contradicts with   $\sfs(a,p)=\nil$ by \Cref{claim:simple}.
    \end{proof}

The construction of our path in \Cref{lemma: monotone path!!!} is done by concatenating multiple subpaths built in \Cref{lemma: extremely complicated} below; we delay its proof and first use it to prove \Cref{lemma: monotone path!!!}. (Note that \Cref{lemma: extremely complicated} only uses the safe PI function $p$.)

\begin{restatable}{lemma}{lemmaextremelycomplicated}\label{lemma: extremely complicated}
	 Let $p$ be a safe PI function. Let $a,z\in[n]^k$ be such that $z=a+\mathbf{e}_{i^*}$ for some $i^*\in[k]$,  
$\sfs(a,p)=\emph{\nil}$ and $s\coloneqq\sfs(z,p)\ne \emph{\nil}$. 
	 Then there is~a monotone path $b^1 \cdots b^m$ from $z$ to $J_{s}$ such that
    \begin{itemize}
        \item For each $\ell\in \{2,\ldots, m\}$, we have $b^{\ell}=b^{\ell-1}+\mathbf{e}_i$ for some $i\in \calF(s)$ satisfying $p(b^{\ell-1})_i=1$; 
        \item $p(b^{\ell})_i\neq 1$ for all $i\notin \calF(s)$ and $\ell\in[m]$; and
        \item $b^{\ell}\notin\intt^+_i(p,s^*)$ for all $i\in [k]$ and $\ell\in [m]$, where $s^*$ is  obtained from $s$ by setting $s_{i^*}$ to be $*$.\footnote{Note from \Cref{proposition: s*_{i*}} that originally $s_{i^*}\ne *$.} 
    \end{itemize}
\end{restatable}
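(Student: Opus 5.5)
We plan to build the path greedily inside the slice $\calL_s$, using the safety of $p$ on $s$ and its sub-slices, and to prove correctness by induction on the dimension $|\calF(s)|$. This matches the recursive construction (\Cref{algorithm: find an escape path}) announced at the start of the section. By \Cref{claim:simple} (item 1) we have $z\ll_s J_s$, so by safety condition (1) of \Cref{definition: safe} on $s$, every $y\in\calL_s$ with $z\preceq y\prec J_s$ has $p(y)_i\in\{-1,0,1\}$ on the free coordinates $i$ with $y_i<(J_s)_i$, and at least one of these equals $+1$. Hence there is always a move $y\mapsto y+\be_i$ with $i\in\calF(s)$ and $p(y)_i=1$ that stays below $J_s$. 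Any sequence of such moves reaches $J_s$, which gives the first bullet. The second bullet should follow from \Cref{claim:simple}: item 2 gives $\calF(\sfs(y,p))\subseteq\calF(s)$ for every $y$ on the path, and the first statement of \Cref{claim:simple} then says that $p(y)_i=1$ forces $i\in\calF(\sfs(y,p))\subseteq \calF(s)$. At the endpoint $J_s$ we use item 3 instead: $\sfs(J_s,p)=\nil$, so $p(J_s)_i\ne 1$ for every $i$.

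The real content is the third bullet, which rules out $b^\ell\in\intt^+_i(p,s^*)$. For $i\notin\calF(s^*)$ the set is empty by definition. For $i\notin\calF(s)$ the membership requires $p(b^\ell)_i=1$, which the second bullet excludes. For $i=i^*$ we need $p(b^\ell-\sum_{j\in\calF(s)}\be_j)_{i^*}\ne1$. We will argue this by monotonicity: $p(b^\ell)_{i^*}\neq 1$ by the second bullet, since $i^*\notin\calF(s)$ by \Cref{proposition: s*_{i*}}(a). So if the shifted point had value $1$ in coordinate $i^*$, propagating along coordinates $j\neq i^*$ would give $p(b^\ell)_{i^*}=1$, a contradiction.

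The main obstacle is the case $i\in\calF(s)$. Here the choice of moves matters: a greedy step in direction $i$ may land in a region where the point $b^\ell-\sum_{j\in\calF(s^*)\setminus\{i\}}\be_j$ (which includes a step down in $i^*$) already has $p(\cdot)_i=1$. The plan is a recursion. The start $z$ is fine by \Cref{proposition: s*_{i*}}(c). At the current point $y$ we look at the candidate directions $i$ with $p(y)_i=1$. If taking direction $i$ would make $y+\be_i$ interior in some direction $i'\in\calF(s)$, we instead go back to the point just before $y$ in coordinate $i^*$, namely $y-\be_{i^*}$. That point lies in the slice $s^*$ "below" $\calL_s$ and is not in the region where $\sfs$ has free set containing $\calF(s)$. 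We then consider the smaller slice $\sfs(\cdot,p)$ at the next point, which has strictly fewer free coordinates. This places us again in the hypothesis of \Cref{lemma: extremely complicated} with a pair $(a',z')$ and a lower-dimensional slice, so the inductive hypothesis supplies a subpath to its join. We splice these subpaths together until we return to $\calL_s$ with progress toward $J_s$.

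The induction hypothesis must be strengthened so that the sub-slice paths avoid $\intt^+_\cdot(p,s'^*)$ for the enlarged slice. This should transfer to $s^*$ because the relevant shift vectors are nested, and monotonicity lets interior-ness in a larger slice imply interior-ness in a smaller one at a point reachable along the path. Verifying this nesting, together with the invariant that every point visited satisfies $\sfs(\cdot,p)\neq\nil$ with free set inside $\calF(s)$ and that the process terminates by the potential $\sum_{j\in\calF(s)}(J_s-y)_j$, is where I expect most of the work.
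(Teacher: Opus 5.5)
Your arguments for the first two bullets, and for the cases $i\notin\calF(s)$ and $i=i^*$ of the third bullet, are fine. The recursion you propose for $i\in\calF(s)$, however, cannot be carried out as described, because it is invoked exactly where its hypothesis fails.

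\textbf{The recursion base fails.} You trigger the recursion at a path point $y$ (so $y_{i^*}=z_{i^*}\ge 2$) when some step $y+\be_i$ with $p(y)_i=1$ would lie in $\intt^+_{i'}(p,s^*)$ for some $i'\in\calF(s)$. You then take $a'=y-\be_{i^*}$ as the base of the recursive call. In exactly that situation $\sfs(a',p)\neq\nil$. The shift defining $\intt^+_{i'}(p,s^*)$ at $y+\be_i$ subtracts $\be_{i^*}$, and also subtracts $\be_i$ when $i'\neq i$.
\begin{itemize}
\item If $i'\neq i$, the shifted point is $\preceq a'$ with the same $i'$-th coordinate, so monotonicity gives $p(a')_{i'}=1$.
\item If $i'=i$, the shifted point is $\preceq a'+\be_i$ with the same $i$-th coordinate, so $p(a'+\be_i)_i=1$. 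Then $a'+\be_i$ is postfixed on the 1D slice through $a'$ in direction $i$, so $a'$ lies strictly below that slice's join.
\end{itemize}
Either way the hypothesis $\sfs(a,p)=\nil$ of the lemma fails for the call you want to make. It also fails for \Cref{claim:dim}, which you need for the drop in dimension. Put differently, a step taken from a point whose ``shadow'' $y-\be_{i^*}$ has $\sfs=\nil$ never lands in an interior point. So the trouble only becomes visible one step too late to recurse. A second problem is that the path can never leave $\calL_s$: it is monotone from $z$ to $J_s$ and only steps in $\calF(s)$. So ``returning to $\calL_s$'' is not available, and any subpath built one layer below must be lifted by $\be_{i^*}$.

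\textbf{The missing idea.} Drive the whole construction from the layer below. Build a walk $x^0=a,x^1,\dots$ with $x^t_{i^*}=a_{i^*}$ and the invariant $\sfs(x^t,p)=\nil$, and output its lift by $\be_{i^*}$.
\begin{itemize}
\item From $w=x^t+\be_{i^*}\prec J_s$, pick $j\in\calF(\sfs(w,p))\subseteq\calF(s)$ with $p(w)_j=1$, and set $v=x^t+\be_j$.
\item If $\sfs(v,p)=\nil$, take $x^{t+1}=v$.
\item Otherwise recurse on $(x^t,v)$. This is legitimate precisely because $\sfs(x^t,p)=\nil$, and \Cref{claim:dim} gives $\calF(s')\subseteq\calF(s)\setminus\{j\}$. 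Follow the lifted subpath all the way to $J_{s'}$, where \Cref{claim:simple}(3) restores the invariant.
\end{itemize}
No strengthening of the induction hypothesis is needed. Take $c=y+\be_{i^*}$ with $y$ on the subpath and $i\in\calF(s)$. The $s^*$-shift of $c$ is coordinatewise below the $s''$-shift of $y$, with the same $i$-th coordinate, since $\calF(s'')\subseteq\calF(s)$ and $\be_{i^*}$ is additionally subtracted. Hence the recursive level's own third bullet (for $i\in\calF(s')$) and second bullet (for $i\notin\calF(s')$) transfer directly to $s^*$.
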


We restate \Cref{lemma: monotone path!!!} for convenience.
\lemmamonotonepath*
\begin{proof}[Proof of \Cref{lemma: monotone path!!!} Assuming \Cref{lemma: extremely complicated}]
We use the following path generated by 
     \Cref{algorithm: path from J to x*}:
    \[J(p)=a^0,z^1=b^{(1,1)},\cdots,b^{(1,m_1)}=a^1,z^2=b^{(2,1)},\cdots,b^{(2,m_2)}=a^2, z^3=b^{(3,1)},\cdots,b^{(3,m_1)}=a^3,\cdots\]
where $z^t$ could be the same as $a^t$ if $\sfs(z^t,p)=\nil$.
Recall that we would like to show that:
\begin{flushleft}\begin{quote}
\emph{This is a monotone path from $J(p)$ to $\emph{\Fix}(f)$ such that for every point $x$ along the\\  path, we have 
        $f(x )_i\geq 0$ for all $i\in[k]$ and
 $x\notin\intt^+_i(p)$ for all $i\in[k]$.}
\end{quote}\end{flushleft}
To this end, we prove by induction on $t=1,2,\ldots$ that
    \begin{itemize}
        \item[(i)]  $\sfs(a^{t-1},p)=\nil$, $f(a^{t-1})_i\ge 0$ for all $i\in [k]$ and $a^{t-1}\notin \intt_i^+(p)$ for all $i\in [k]$; and
        \item[(ii)] For every $\ell\in[m_t]$, we have $f(b^{(t,\ell)})_i\geq 0$ for all $i\in [k]$ and $b^{(t,\ell)}\notin\intt^+_i(p)$ for all $i\in[k]$. 
    \end{itemize}
For the base case about $a^0=J(p)$,
  first we have from \Cref{claim:simple2}
  that $p(J(p))_i\in \{0,\ge \}$ for all $i\in [k]$ and thus, given that $f\Rightarrow p$, we have $f(a^0)_i\ge 0$ for all $i\in [k]$.
Note that $p(J(p))_i\ne 1$ for all $i\in [k]$ implies that $a^0\notin \intt_i^+(p)$ for all $i\in [k]$.
We also have $\sfs(J(p),p)=\nil$ by \Cref{claim:simple}'s third item.

For the induction step, assume that (i) holds for $a^{t-1}$.
Given that $f(a^{t-1})_i\ge 0$ for all $i\in [k]$, either $\smash{a^{t-1}=\Fix(f)}$ and the path terminates, or 
  there exists an index $i^*\in [k]$ with $f(a^{t-1})_{i^*}=1$ so that \Cref{algorithm: path from J to x*}
  sets $\smash{z^t=a^{t-1}+\be_{i^*}}$.
Let's consider the easier case when $\sfs(z^t,p)=\nil$
  and $a^t$ is the same as $z^t$.
In this case, we just need to show that $f(z^t)_i\ge 0$ and $z^t\notin \intt_i^+(p)$ for all $i\in [k]$.
\begin{flushleft}\begin{itemize}
\item $f(z^t)_i\ge 0$ follows from $f(a^{t-1})_i\ge 0$, $z^t=a^{t-1}+e_{i^*}$, $f(a^{t-1})_{i^*}=1$ and monotonicity. 
\item On the other hand, it follows from
  $\sfs(z^t,p)=\nil$ and \Cref{claim:simple} that $p(z^t)_i\ne 1$ for any $i\in [k]$. It follows that $z^t\notin \intt_i^+(p)$ for all $i\in [k]$.
\end{itemize}\end{flushleft}

Next we consider the general case when $s\coloneqq\sfs(z^t,p)\ne \nil$, and
\Cref{algorithm: path from J to x*} uses \Cref{lemma: extremely complicated} to build the monotone path $b^{(t,1)},\cdots,b^{(t,m_t)}$ from $z^{t}$ to $J_s$. By the first item of \Cref{lemma: extremely complicated}, together with $f\Rightarrow p$ and \Cref{claim:simple2}, we have 
$f(b^{(t,\ell)})_i\geq 0$ for all $\ell\in [m_t]$ and $i\in[k]$.
Next for each $b^{(t,\ell)}$, $\ell\in [m_t]$, we have from the second item of \Cref{lemma: extremely complicated} that 
  $b^{(t,\ell)}\notin \intt_i^+(p)$ for all $i\notin \calF(s)$.
On the other hand, for each $i\in \calF(s)$, we have from the last item of \Cref{lemma: extremely complicated}  that $b^{(t,\ell)}\notin \intt_i^+(p,s^*)$ for all $i\in \calF(s)$, where $s^*$ is obtained from $s$ by setting $s_{i^*}$ to be $*$.
Clearly $i\in\calF(s^*)$ as well, so $b^{(t,\ell)}\notin \intt_i^+(p,s^*)$ implies $b^{(t,\ell)}\notin \intt_i^+(p)$.
Note that $a^t=J_s$ so by \Cref{claim:simple}
we have $\sfs(a^{t },p)=\nil$.
This finishes the induction on (i) and (ii) for all $t\ge 1$ along the path built by \Cref{algorithm: path from J to x*}.

Given that the path is monotonically increasing and every  $x$ along it satisfies  $f(x)_i\geq 0$ for all $i\in[k]$, we know that it eventually terminates at $\Fix(f)$. This finishes the proof of \Cref{lemma: monotone path!!!}.
\end{proof}

\begin{algorithm}[!t]
\caption{$\texttt{Find-a-Path}(p,f)$}\label{algorithm: path from J to x*}

	\BlankLine
	
    Let $a^0=J(p)$.
    
    \For{$t=1,2,\ldots$}{
        \If{$f(a^{t-1})_i=0$ for all $i\in[k]$}{
            \textbf{Terminate}. \tcp{We reach the unique fixed point of $f$.}
        }

        Pick an arbitrary $i^*\in[k]$ with $f(a^{t-1})_{i^*}=1$ and let 
 $z^{t}\gets a^{t-1}+\mathbf{e}_{i^*}$. 

\If{$s\coloneqq\sfs(z^t,p)=\emph{\nil}$}{
Let $a^t\gets z^t$.
}
\Else{Use \Cref{lemma: extremely complicated} to find a monotone path $b^{(t,1)} \cdots b^{(t,m_t)}$ from $z^{t}$ to $J_{s}$.

Let $a^t\gets b^{(t,m_t)}$.}
}
\end{algorithm}

Finally, we prove \Cref{lemma: extremely complicated}, and we restate it for convenience:
\lemmaextremelycomplicated*

\begin{proof}[Proof of \Cref{lemma: extremely complicated}]
We show that \Cref{algorithm: find an escape path} always returns a monotone path from $z$ to $J_s$ that satisfies all desired properties listed in \Cref{lemma: extremely complicated}. \Cref{algorithm: find an escape path} is recursive so we prove its correctness by induction on the dimension $\texttt{dim}\ge 1$ of $s\coloneqq\sfs(z,p)$, when it runs on $(a,z,p)$.

Before starting the induction proof, we show the following claim:

\begin{claim}\label{claim:dim}
Let $p$ be a safe PI function. Let $a,z\in [n]^k$ be such that $z=a+\be_{i^*}$ for some $i^*\in [k]$, $\sfs(a,p)=\emph{\nil}$ and $s\coloneqq\sfs(z,p)\ne \emph{\nil}$.
Let $x=a+\be_j$ for some $j\in \calF(s)$ that satisfies $p(z)_j=1$. 
(We have from \Cref{proposition: s*_{i*}} that $j\ne i^*$.) Then either $\sfs(x,p)=\emph{\nil}$, or $s'\coloneqq\sfs(x,p)$ satisfies 
\begin{itemize}
\item[(1)] $\calF(s')\subseteq \calF(s)$; \item[(2)] $j\in\calF(s)$ but $j\notin\calF(s')$; and
\item[(3)] $J_{s'}\preceq J_{s}$.
\end{itemize}
In particular, this implies that the dimension of $s'$ must be strictly lower than the dimension of $s$.
\end{claim}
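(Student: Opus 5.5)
The plan is to prove all three items by one trick: take the join of two known postfixed points, such as $J_s$, $J_{s'}$ or $y\coloneqq z+\be_j$. The join lands in a suitable slice and is again postfixed there, and then we invoke either the maximality of a $\sfs(\cdot,p)$ or the fact that $J_r$ dominates every postfixed point of slice $r$. Throughout, assume $s'\coloneqq\sfs(x,p)\ne\nil$. By \Cref{claim:simple} (item 1), $x\ll_{s'}J_{s'}$ and $J_{s'}\in\Post_{s'}(p)$; likewise $z\ll_s J_s$ and $J_s\in\Post_s(p)$. Note also that $x$ and $z$ agree with $a$ outside the coordinates $i^*$ and $j$.

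\textbf{Step 1: $i^*\notin\calF(s')$.} Suppose $i^*\in\calF(s')$. Let $y=z+\be_j=x+\be_{i^*}$. Since $p(z)_j=1$, monotonicity gives $p(y)_j\in\{1,0,\ge\}$ and $p(y)_i\in\{1,0,\geq\}$ wherever $p(z)_i$ was. Set $v=J_{s'}\vee y$ and let $r$ be the slice through $a$ with $\calF(r)=\calF(s')\cup\{j\}$.

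Since $i^*$ is free in $r$ and $J_{s'}$ agrees with $x$ on the coordinates fixed in $s'$, the point $v$ lies in $\calL_r$. We check that $v$ is postfixed on $r$:
\begin{itemize}
\item For $i\in\calF(s')$ with $v_i=(J_{s'})_i$, use $J_{s'}\preceq v$ together with monotonicity.
\item For $i=j$, note $v_j=x_j=y_j$ and use $y\preceq v$.
\end{itemize}
Moreover $a\ll_r v$: we have $v_j=a_j+1$, and $v_i\ge (J_{s'})_i>x_i\ge a_i$ on $\calF(s')$. Hence $a\ll_r J_r$, so $\sfs(a,p)\ne\nil$, a contradiction.

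\textbf{Step 2: $\calF(s')\subseteq\calF(s)$.} Suppose some $i_0\in\calF(s')\setminus\calF(s)$. Let $w=J_s\vee J_{s'}$ and let $r$ be the slice through $z$ with $\calF(r)=\calF(s)\cup\calF(s')$.

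First, $w\in\calL_r$. On coordinates outside $\calF(r)$, $J_s$ equals $z$ and $J_{s'}$ equals $x$. These agree except possibly at $i^*$, which by Step 1 is fixed in $s'$ with value $a_{i^*}<z_{i^*}$, so the max gives $z_{i^*}$.

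Second, $w$ is postfixed on $r$. Each free coordinate of $w$ is attained by $J_s$ (for $i\in\calF(s)$) or by $J_{s'}$ (for $i\in\calF(s')$), and we apply monotonicity from whichever one attains it.

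Third, $z\ll_r w$: on $\calF(s)$ this comes from $J_s$, and on $\calF(s')\setminus\calF(s)$ we have $w_i\ge(J_{s'})_i>x_i=z_i$. This contradicts the maximality of $s=\sfs(z,p)$, because $\calF(r)\ni i_0$ strictly enlarges $\calF(s)$.

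\textbf{Step 3: items (2) and (3).} Item (2) is $j\notin\calF(s')$, which is \Cref{proposition: s*_{i*}}(a) applied to the pair $(a,x)$. For item (3), reuse $w=J_s\vee J_{s'}$. Since now $\calF(s')\subseteq\calF(s)$, the computation in Step 2 shows $w\in\calL_s$ and $w\in\Post_s(p)$. Therefore $w\preceq J_s$, and so $J_{s'}\preceq J_s$. Finally, items (1) and (2) give $\calF(s')\subseteq\calF(s)\setminus\{j\}$, so the dimension drops strictly.

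The main obstacle is Step 1. Without it, the two joins do not lie in a slice through $z$, because the coordinate $i^*$ can disagree. The key idea is to use $y=z+\be_j$ to transfer the $j$-information to a slice through $a$, and then contradict $\sfs(a,p)=\nil$.
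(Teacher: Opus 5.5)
Your proof is correct and follows essentially the paper's route. You rule out $i^*\in\calF(\sfs(x,p))$ by carrying $p(z)_j=1$ up to $J_{s'}$ to contradict $\sfs(a,p)=\nil$, you get $j\notin\calF(s')$ from $\sfs(a,p)=\nil$, and you derive (1) and (3) from the maximality of $s=\sfs(z,p)$. The only difference is that you use the explicit join $J_s\vee J_{s'}$ as the postfixed witness on a slice through $z$, which makes item (3) fully explicit, whereas the paper uses $J_{s'}+\be_{i^*}$ and leaves the join implicit; also, in your Step 1, $v$ is simply $J_{s'}$, since $x+\be_{i^*}\preceq J_{s'}$.
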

\begin{proof}
We start with (2). By the choice of $j$ we have $j\in \calF(s)$ trivially.
Assume for a contradiction that $j\in\calF(s')$.
Given that $x=a+\be_j$ and $x\ll_{s'} J_{s'}$, we have $a\ll_{s'} J_{s'}$, which contradicts with the assumption that $\sfs(a,p)=\nil$.

Next we prove (1). First we show that $i^*\notin\calF(s')$.

Assume for contradiction that $i^*\in\calF(s')$. We will show that $\sfs(a,p)\neq\nil$.

If $i^*\in\calF(s')$, then by definition we know that $x_{i^*}<(J_{s'})_{i^*}$. This implies that $x+\be_{i^*}\preceq J_{s'}$, which is $(a+\be_j+\be_{i^*})\preceq J_{s'}$. Recall that by (2) we know that $j\notin\calF(s')$, which implies that $x_j=(J_{s'})_{j}$. Equivalently, we have $(a+\be_j+\be_{i^*})_j = (J_{s'})_j$.

Since $p(z)_{j}=1$, we know that $p(z+\be_{j})\in\set{1,0,\geq}$, which is $p(a+\be_{i^*}+\be_{j})\in\set{1,0,\geq}$. By the monotonicity of $p$, we know that $(J_{s'})_{j}\in \set{1,0,\geq}$ as well. This implies that $(J_{s'})_{i}\in \set{1,0,\geq}$ for all $i\in\calF(s')\cup\set{j}$.

Notice that by definition, $x_i<(J_{s'})_i$ for all $i\in\calF(s')$. This means $a_i<(J_{s'})_i$ for all $i\in\calF(s')$. But we also have $a_j<(J_{s'})_j$. Together with $(J_{s'})_{i}\in \set{1,0,\geq}$ for all $i\in\calF(s')\cup\set{j}$, we know that $\sfs(a,p)\neq \nil$, a contradiction.

Given that $i^*\notin\calF(s')$, consider $J_{s'}+\be_{i^*}$ and we know that $(J_{s'}+\be_{i^*})_i\in\set{1,0,\geq}$ for all $i\in\calF(s')$. Also $z=a+\be_{i^*}$ so $(a+\be_{i^*})_i<(J_{s'}+\be_{i^*})_i$ for all $i\in\calF(s')$. This implies $\calF(s')\subseteq \calF(s)$.

For the last item, we note that $z \ll_{s'} J_{s'}+\be_{i^*}$
and $p(J_{s'}+\be_{i^*})_i\in\set{1,0,\geq} $ for all $i\in \calF(s')$. It follows from the maximality of $s$ as $\sfs(z,p)$.
\end{proof}

We now start the induction on  $\texttt{dim}$, which we recall is defined as the dimension of $s\coloneqq \mathsf{s}(z,p)$.

    \paragraph{Base Case: $\texttt{dim}=1$.}
We use the base case of $\texttt{dim}=1$ as a warmup.

  Let $j$ be the unique index in $\calF(s)$, where $s\coloneqq\sfs(z,p)$.  The path starts with $x^0+\be_{i^*}=z$ given that $x^0=a$.
  Given that $w^1=z\prec J_s$, we reach line~\ref{line: vvv} to set $v^1=x^0+\be_j$ given that $p(w^1)_j=1$ by the safety condition and that $w^1\prec J_s$.
  By \Cref{claim:dim}, we have $\sfs(v^1,p)=\nil$ and thus, the next point along the path is $v^1+\be_{i^*}$
  and \Cref{algorithm: find an escape path} starts a new loop, with $x^1=v^1$.
  But we are back in the same situation because $\sfs(x^1,p)=\nil$.
  Either $w^2=J_s$, in which case the path terminates, or given $w^2\prec J_s$, $p(w^2)_j=1$ by safety condition and thus, $v^2=x^1+\be_j$.
It becomes clear that the path returned is $b^0=x^0+\be_{i^*},b^1=x^0+\be_{i^*}+\be_j ,b^2=x^0+\be_{i^*}+2\be_j,\cdots$ until it reaches $J_s$.
Below we check the three conditions about this path:
\begin{flushleft}\begin{itemize}
\item For every $\ell\ge 1$, we have
  $p(b^{\ell-1})_j=1$ by discussion above because $b^{\ell-1}$ is just $w^{\ell-1}$.
\item For every $\ell$ we have $\smash{p(b^\ell)_i\ne 1}$ for all $i\notin \calF(s)$.
Assuming this is not the case, then by monotonicity we have $p(J_s)_i=1$ for some $i\notin \calF(s)$ but this contradicts with item 3 of \Cref{claim:simple} that $\sfs(J_s,p)=\nil$.

\item For every $\ell$ we have $\smash{b^{\ell}\notin\intt^+_j(p,s^* )}$, where $s^*$ is obtained from $s$ by setting $s_{i^*}=*.$
To see this is the case, assume for a contradiction that $\smash{b^{\ell}\in \intt^+_j(p,s^*)}$. By definition, we have $p(x^\ell)_j=1$ but this contradicts with  $\sfs(x^\ell,p)=\nil$.
\end{itemize}
\end{flushleft}

    Before the induction step, we first stress our induction hypothesis:
    \paragraph{Induction Hypothesis:} 
    When $(a,z,p)$ satisfy all conditions of the lemma with the dimension of $ \sfs(z,p)$ smaller than $\texttt{dim}$, 
    \Cref{algorithm: find an escape path} returns a monotone path from $z$ to $J_{\sfs(z,p)}$ that satisfies all desired properties.

    \begin{algorithm}[!t]
\caption{$\texttt{Find-an-Escape-Path}(a,z,p)$}\label{algorithm: find an escape path}

	\BlankLine
    
    Let $i^*\in [k]$ be the index with $z=a+\mathbf{e}_{i^*}$ and $s\coloneqq\sfs(z,p)\ne \nil$.
	
    Let $x^0\gets a$.

    \For{$t=1,2,\ldots$}{

        Let $w^{t}\gets x^{t-1}+\mathbf{e}_{i^*}$.
    
        \If{$w^{t}=J_s$}{
    	\textbf{go to line \ref{line: returnnnn}}.
        }

        Pick arbitrarily a
   $j\in \calF(\sfs(w^t,p))$ with $p(w^{t})_j=1$,
   and let $v^t\gets x^{t-1}+\mathbf{e}_j$.\label{line: vvv}

        \If{$\sfs(v^t,p)=\emph{\nil}$}{Set $m_t=1$, $y^{(t,1)}\gets v^t$, $x^t\gets v^t$ and start the next loop}

        \tcp{Otherwise,  $s'\coloneqq\sfs(v^t,p)\ne \nil$ and its dim is smaller than that of $s$.}

        Run \texttt{Find-an-Escape-Path}$(x^{t-1},v^t,p)$ recursively.
        
        Let the monotone  path returned be
 $\smash{y^{(t,1)},\cdots,y^{(t,m_t)}}$ from $v^t$ to $J_{s'}$, and $x^t\gets  y^{(t,m_t)}$. %
    }

    \Return the path $\smash{\big(x^0+\mathbf{e}_{i^*},y^{(1,1)}+\mathbf{e}_{i^*},\cdots,y^{(1,m_1)}+\mathbf{e}_{i^*},y^{(2,1)}+\mathbf{e}_{i^*},\cdots,y^{(2,m_2)}+\mathbf{e}_{i^*},\cdots\big).}\vspace{0.06cm}$ \label{line: returnnnn}
    	
\end{algorithm}

    \paragraph{Induction Step: $\texttt{dim}>1$.} 
    We are now given $p,a$ and $z$ with $s\coloneqq\sfs(z,p)$ of dimension $\texttt{dim}>1$.
    We assume that $\sfs(a,p)=\nil$ and $z=a+\be_{i^*}$ (so $i^*\notin \calF(s)$).
    
Then \Cref{algorithm: find an escape path} provides the following sequence of points:
    \[x^0,z^1=y^{(1,1)},\cdots, y^{(1,m_1)}=x^1,z^2=y^{(2,1)},\cdots,y^{(2,m_2)}=x^2,\cdots\]
    from which we obtain a path $b^0b^1 \cdots $ after adding $\be_{i^*}$ to every point: 
    \[ x^0+\mathbf{e}_{i^*},y^{(1,1)}+\mathbf{e}_{i^*},\cdots,y^{(1,m_1)}+\mathbf{e}_{i^*},y^{(2,1)}+\mathbf{e}_{i^*},\cdots,y^{(2,m_2)}+\mathbf{e}_{i^*},\cdots.\]
Note that $x^0=a$ so the path starts with $z$.

We prove by induction on $t\ge 1$ that these points satisfy the following properties: 
        \begin{itemize}
            \item[(a)] For $x^{t-1}$, we have $x^{t-1}+\be_{i^*}\preceq J_s$, $\sfs(x^{t-1},p)=\nil$,
            $p(x^{t-1}+\be_{i^*})_i\neq 1$ for all $i\notin \calF(s)$,\\ and
        $x^{t-1}+\be_{i^*}\notin\intt^+_i(p ,s^*)$ for all $i\in \calF(s)$, where $s^*$ is obtained from $s$ by setting $s_{i^*}=*$; %
            \item[(b)] Points along the subpath $y^{(t,1)},\ldots,y^{(t,m_t)}$ satisfy the following properties: 
            \begin{itemize}
            \item[(b0)] $p(x^{t-1}+\be_{i^*})_i=1$ for the $i\in \calF(s)$ such that $y^{(t,1)}=x^{t-1}+\be_i$;\vspace{0.05cm}
            \item[(b1)] $p(y^{(t,\ell-1)}+\mathbf{e}_{i^*})_i=1$ for the $i\in\calF(s)$ such that $y^{(t,\ell)}=y^{(t,\ell-1)}+\mathbf{e}_i$ when $\ell\in\set{2,\cdots,m_t}$;\vspace{0.05cm}
            \item[(b2)] $p(y^{(t,\ell)}+\mathbf{e}_{i^*})_i\neq 1$ for all $i\notin \calF(s)$; and\vspace{0.05cm}
            \item[(b3)] 
            $y^{(t,\ell)}+\mathbf{e}_{i^*}\notin\intt^+_i(p,s^*)$ for all $i\in \calF(s)$, where $s^*$ is given as above.
        \end{itemize} 
        \end{itemize}
   For the base case about (a) on $x^0$, given that $x^0=a$, we have $\sfs(x^0,p)=\nil$.
   It is easy to see that $x^0+\be_{i^*}=z\ll_s J_s$.
    The third condition $p(x^{0}+e_{i^*})\ne 1$ for any $i\notin \calF(s)$ holds because otherwise, such an $i$ should be in $\calF(s)$.   
   The last condition of $x^{0}+\be_{i^*}\notin \intt_i^+(p,s^*)$ follows from $\sfs(x^0,p)=\nil$ which implies $p(x^0)_i\ne 1$ for all $i\in [k]$, and that $i^*\notin \calF(s)$.

   For the induction step, assume $x^{t-1}$ satisfies all conditions listed in (a). Then $w^{t}=x^{t-1}+\mathbf{e}_{i^*}\preceq J_s$ and assume without loss of generality  $w^t\prec J_s$; otherwise the path terminates and we are done.

    Because $z\preceq w^t\prec J_s$, we have by item 2 of \Cref{claim:simple} that $\sfs(w^t,p)\ne \nil$ and $\calF(\sfs(w^t,p))\subseteq \calF(s)$.
    So there exists a $j\in \calF(\sfs(w^t,p))$ such that 
    $p(w^t)_j=1$ by the safety condition.
    Let $v^t=x^{t-1}+\be_j$.
For the special case when $\sfs(v^t,p)=\nil$, there is only one point in the subpath: $v^t+\be_{i^*}$, and $x^t$ for the next round is set to be $v^t$.    
\begin{itemize}
    \item (b0) follows from how $j$ is picked;
    \item (b1) is trivial given the subpath is of length $1$;
    \item (b2) uses that $\calF(\sfs(w^t,p))\subseteq \calF(s)$ and how it is proved in the base case;
    \item (b3) can also be proved similarly to the base case, using that $i^*\notin \calF(\sfs(w^t,p))$.
    \item To prove (a) for $x^t=v^t$, we note that given $j\in \sfs(w^t,p)$,
    we have $x^{t}+\be_{i^*}=w^t+\be_j\prec J_s$.
\end{itemize}
    
   Now we work on the general case when  $s'\coloneqq\sfs(v^t,p)\ne \nil$.
  First it follows from \Cref{claim:dim}   that the dimension of $s'$ is strictly smaller than that of $\sfs(w^t,p)$ so it is also strictly smaller than $\texttt{dim}$ as well, given that $\calF(\sfs(w^t,p))\subseteq \calF(s)$.  We write $y^{(t,1)},\ldots,y^{(t,m_t)}$ to denote the subpath from $v^t$ to $J_{s'}$ returned by the recursive call on $(x^{t-1},v^t,p)$ and assume by the inductive hypothesis that it satisfies all properties of the lemma (where $a$ is $x^{t-1}$, $z$ is $v^t$, $i^*$ is $j$ and $s$ is the $s'=\sfs(v^t,p)$).
  For convenience, we list properties of this subpath below and use them to prove (b) and then (a): 
\begin{flushleft}\begin{enumerate}
\item[(d1)]  For each $\ell\in \{2,\ldots,m_t\}$, $y^{(t,\ell)}=y^{(t,\ell-1)}+\mathbf{e}_i$ for some $i\in \calF(s')$ with 
 $p(y^{(t,\ell-1)})_i=1$; 
        \item[(d2)] $p(y^{(t,\ell)})_i\neq 1$ for all $i\notin \calF(s')$ and $\ell\in[m]$; and
        \item[(d3)] $y^{(t,\ell)}\notin\intt^+_i(p,s'' )$ for $i\in \calF(s')$ and $\ell\in[m]$, where $s''$ is obtained from $s'$ by setting $s'_j=*$.
\end{enumerate}\end{flushleft}

We now use them to prove items in (b).
For simplicity of notation, let $c^{(t,\ell)}\coloneqq y^{(t,\ell)}+\mathbf{e}_{i^*}$.
Note that (b0) is always checked earlier in the special case so we start with (b1) below:\medskip 

\noindent\textbf{Proof of (b1).} This follows from (d1).  First we note that $i\in \calF(s')$ implies $i\in \calF(s)$.
In addition, given that $i^*\notin \calF(s)$ and thus, $i^*\notin \calF(s')$,
  $p(y^{(t,\ell-1)})_i=1$ implies $p(c^{(t,\ell-1)})_i=1$ when $i\ne i^*$.\medskip

    \noindent\textbf{Proof of (b2).} Note that for every $\ell\in [m_t]$, we have that $c^{(t,\ell)}\in\calL_{\sfs(z,p)}$. In particular, we know that $c^{(t,\ell)}\preceq J_{\sfs(z,p)}$. If there was $i\in[k]$ with $\hat{s}(z,p)_i\neq *$ such that $p(c^{(t,\ell)})_i= 1$, then by monotonicity of $p$, we know that $p(J_{\sfs(z,p)})_i=1$. But this leads to a contradiction since for every $i\in[k]$ such that $\sfs(z,p)_i\neq *$, we should have $p(J_{\sfs(z,p)})_i\notin\set{1,0,\geq}$.\medskip

\noindent\textbf{Proof of (b3).}
Assume for a contradiction that there is an $i\in \calF(s)$ such that $c^{(t,\ell)}\in \intt^+_i(p,s^*)$. Then by definition we have
    \[p\left(c^{(t,\ell)}-\sum_{i'\neq i,s^*_{i'}=*,c^{(t,\ell)}_{i'}>1}\mathbf{e}_{i'}\right)_i=1.\]

Next, we show that $\calF(s'')\subseteq \calF(s)$ and $\calF(s^*)=\calF(s)\cup \{i^*\}$.

By \Cref{proposition: s*_{i*}}, we have $i^*\notin \calF(s)$. By \Cref{claim:dim}, we know that $\calF(s')\subset \calF(s)$ with at least $j\in \calF(s)\setminus \calF(s')$. 
We also have from the definition of $s''$ and $s^*$ that $\calF(s'')=\calF(s')\cup \{j\}$ and $\calF(s^*)=\calF(s)\cup \{i^*\}$. All together, we have $\calF(s'')\subseteq \calF(s)$ and $\calF(s^*)=\calF(s)\cup \{i^*\}$.

Note also that $i\in\calF(s)$, so $i\neq i^*$.

Recall that $c^{(t,\ell)}=y^{(t,\ell)}+\mathbf{e}_{i^*}$, namely, $i^*$ is the only coordinate where $c^{(t,\ell)}$ and $y^{(t,\ell)}$ differ. So we conclude that 
\[c^{(t,\ell)}-\sum_{i'\neq i,s^*_{i'}=*,c^{(t,\ell)}_{i'}>1}\mathbf{e}_{i'}\preceq y^{(t,\ell)}-\sum_{i'\neq i,s''_{i'}=*,y^{(t,\ell)}_{i'}>1}\mathbf{e}_{i'}.\]

Recall that from $c^{(t,\ell)}\in \intt^+_i(p,s^*)$ we have
$$p\left(c^{(t,\ell)}-\sum_{i'\neq i,s^*_{i'}=*,c^{(t,\ell)}_{i'}>1}\mathbf{e}_{i'}\right)_i=1.$$ 

Now, since $c^{(t,\ell)}_i=y^{(t,\ell)}_i$ (because $i\neq i^*$), we know that the $i$-th coordinate of the following points are all the same:
$$c^{(t,\ell)}_i=\left(c^{(t,\ell)}-\sum_{i'\neq i,s^*_{i'}=*,c^{(t,\ell)}_{i'}>1}\mathbf{e}_{i'}\right)_i=\left(y^{(t,\ell)}-\sum_{i'\neq i,s''_{i'}=*,y^{(t,\ell)}_{i'}>1}\mathbf{e}_{i'}\right)_i=y^{(t,\ell)}_i.$$

So we conclude from the monotonicity of $p$ that 
\[p(y^{(t,\ell)})_i=1\quad\text{and}\quad p\left(y^{(t,\ell)}-\sum_{i'\neq i,s''_{i'}=*,y^{(t,\ell)}_{i'}>1}\mathbf{e}_{i'}\right)_i=1.\]

When $i\notin\calF(s')$, the former contradicts (d2). When $i\in\calF(s')$, the latter implies $y^{(t,\ell)}\in \intt^+_i(p,s'')$, contradicting to (d3).
    
This finishes the induction and the proof of \Cref{lemma: extremely complicated}.

\end{proof}

\section{Proof of \Cref{lemma: reduce constant fraction}}
\label{section: reduce constant fraction}
We prove \Cref{lemma: reduce constant fraction} in this section.
For this purpose we need two lemmas. 
The first lemma, \Cref{lemma: balanced point!}, will be used in the proof of \Cref{lemma: reduce constant fraction} to select the query points $q^1,\ldots,q^7$.
The second lemma, \Cref{lemma: delete}, will help us reason about safe PI functions in the proof of \Cref{lemma: reduce constant fraction}.

 We start with some notation. Given a set $S\subseteq [n]^k$, a point $q\in [n]^k$, a nonempty subset of indices $I\subseteq [k]$ and a sign vector $\phi\in \{\pm 1\}^I$, we write 
  $S(q,\phi)$ to denote the following subset of $S$:
$$
\big\{x\in S:\phi_i(x_i- q_i)\geq 0\text{ for all }i\in I\big\}. 
$$

For convenience, we start with the following lemma, from which we can deduce \Cref{lemma: balanced point!} easily.

\begin{lemma}\label{lemma: balanced point with boundary}
    Given $S\subseteq [n]^{k}$ and any nonempty subset of indices $I\subseteq [k]$, there exist a point $q\in[n]^{k}$ and a sign vector $\phi\in\set{\pm 1}^I$ such that
$$
\big|S(q,\phi)\big|\ge \frac{|S|}{2^{|I|}}\quad\text{and}\quad 
\big|S(q,-{\phi})\big|\ge \frac{|S|}{2^{|I|}}.
$$
\end{lemma}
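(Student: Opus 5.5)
The plan is to induct on $|I|$, fixing one new coordinate of $q$ at each step: choose it as a median of the heavier of the two current opposite orthant pieces, so that this piece splits evenly in both directions. Coordinates of $q$ outside $I$ play no role in $S(q,\phi)$, so they can be set arbitrarily, say to $1$. Throughout I use that for any finite multiset of integers there is a median value $m$ such that at least half the elements are $\ge m$ and at least half are $\le m$. This is the reason the orthants are closed in the definition of $S(q,\phi)$.

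\textbf{Base case} $|I|=1$, $I=\{i\}$. Let $q_i$ be a median of $\{x_i:x\in S\}$. Then $S(q,+1)=\{x\in S:x_i\ge q_i\}$ and $S(q,-1)=\{x\in S:x_i\le q_i\}$ both have size at least $|S|/2$, as required, with $\phi=(+1)$.

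\textbf{Induction step.} Write $I=I'\cup\{i\}$ with $I'$ nonempty and $i\notin I'$. Apply the induction hypothesis to $S$ and $I'$. This gives $q'$ and $\phi'\in\{\pm1\}^{I'}$ such that $A\coloneqq S(q',\phi')$ and $B\coloneqq S(q',-\phi')$ both have size at least $|S|/2^{|I'|}$. Define $q$ to agree with $q'$ on $I'$, and let $q_i$ be a median of $\{x_i:x\in A\}$. Then
\[
A^{\ge}\coloneqq\{x\in A:x_i\ge q_i\}\quad\text{and}\quad A^{\le}\coloneqq\{x\in A:x_i\le q_i\}
\]
both have size at least $|A|/2$. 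For $B$, the two closed half-sets $\{x\in B:x_i\le q_i\}$ and $\{x\in B:x_i\ge q_i\}$ cover $B$, so one of them has size at least $|B|/2$. If the ``$\le$'' half is large, set $\phi_i=+1$; otherwise set $\phi_i=-1$. Let $\phi$ extend $\phi'$ by this $\phi_i$.

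With this choice, $S(q,\phi)$ contains the appropriate half of $A$ and $S(q,-\phi)$ contains the large half of $B$. For example, when $\phi_i=+1$ we have $S(q,\phi)\supseteq A^{\ge}$ and $S(q,-\phi)\supseteq\{x\in B:x_i\le q_i\}$. Each of these has size at least $\tfrac12\cdot|S|/2^{|I'|}=|S|/2^{|I|}$. The case $\phi_i=-1$ is symmetric.

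\textbf{Main obstacle.} The only delicate point is the asymmetric choice in the induction step. A naive approach takes the coordinatewise median and then pigeonholes over the $2^{|I|}$ orthants, but that guarantees only one heavy orthant, not a heavy antipodal pair. The fix is to choose the median relative to one member $A$ of the already-balanced pair, so that $A$ is heavy on both sides of $q_i$. Then whichever side of $q_i$ the other member $B$ prefers, we can orient $\phi_i$ to match. Closedness of the orthants is what makes both the median property and the covering of $B$ valid even when many points have $x_i=q_i$.
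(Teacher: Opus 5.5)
Your proof is correct and follows the paper's argument. Both induct on $|I|$, apply the hypothesis to $I\setminus\{i\}$, and then choose a single threshold $q_i$ so that $S(q',\phi')$ and $S(q',-\phi')$ each keep at least half their mass on opposite closed sides of $q_i$. The only difference is the threshold rule, which is cosmetic: the paper takes the smallest $t$ at which either of the two sets has half its mass in $\{x_i\le t\}$ and orients $\phi_i$ by which set reached half first, whereas you take a median of $A$ and orient $\phi_i$ by whichever closed half of $B$ is heavier.
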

\begin{proof}
 We prove the lemma by an induction on $|I|$.
 
For the base case when $|I|=1$, let $\set{i}=I$.    
    Let $t\in[n]$ be the smallest number such that $$\Big|\big\{x\in S:x_i- t\leq 0\big\}\Big|\geq \frac{|S|}{2}.$$ Let $q$ be any point with $q_i=t$. Let $\phi_i=1$. It is easy to verify $q$ and $\phi$ satisfy our desired property.

    For the induction step, suppose that $|I|>1$. Fix an arbitrary $i\in I$ and let $I'=I\setminus\set{i}$. Then we know that there exist $q'$ and sign vector $\smash{\phi'\in\set{\pm 1}^{I'}}$ for which both 
    \begin{align*}
    \big|S(q',\phi')\big|\ge \frac{|S|}{2^{|I'|}}\quad\text{and}\quad
        \big|S(q',-{\phi'})\big|\ge \frac{|S|}{2^{|I'|}}.
   \end{align*}

    Let $t\in[n]$ be the smallest number such that \emph{either} $$\Big|\big\{x\in S(q',\phi'):x_i- t\leq 0\big\}\Big|\geq \frac{|S(q',\phi')|}{2}\quad\text{or}\quad \Big| \big\{x\in S(q',-{\phi'}):x_i- t\leq 0\big\}\Big|\geq \frac{|S(q',-{\phi'})|}{2}.$$ If it is the former case, we set $\phi$ to be $\phi'$ after adding $\phi_i=-1$; otherwise, we set $\phi$ to be $\phi'$ after adding  $\phi_i=1$. Let $q=q'$ except with $q_i$ replaced by $t$. It is easy to verify the constructed $q$ and $\phi$ satisfy our desired property.
\end{proof}

Now we are ready to prove \Cref{lemma: balanced point!}:

 \begin{restatable}{lemma}{lemmabalancedpoint}\label{lemma: balanced point!}
    Given any $S\subseteq [n]^3$, at least one of the following is true:
    \begin{itemize}
        \item There exist a 1D slice $s$  with $s_i=*$ and a point $q\in\calL_s$ such that 
        \[\Big|\big\{x\in S\cap \calL_s:x_i\leq q_i\big\}\Big|\geq \frac{|S|}{1600} \quad\text{and}\quad \Big|\big\{x\in S\cap \calL_s:x_i\geq q_i\big\} \Big|\geq \frac{|S|}{1600}.\]
        \item There exist a 2D slice $s$  with  $s_i=s_j=*$ and  $q\in\calL_s$ together with $\phi_i,\phi_j\in\set{\pm 1}$ such that 
\begin{align*}\Big|\big\{x\in S\cap \calL_s:\phi_i(x_i- q_i)>0\text{\ and\ }\phi_j(x_j- q_j)>0\big\}  \Big|&\geq \frac{|S|}{400}\quad\text{and}\\[1ex]
\Big|\big\{x\in S\cap \calL_s:\phi_i(x_i- q_i)<0\text{\ and\ }\phi_j(x_j- q_j)<0\big\}\Big|&\geq \frac{|S|}{400}.
\end{align*}
        \item There exist a point $q\in[n]^3$ and a sign vector $\phi\in\set{\pm 1}^3$ such that 
\begin{align*}
\Big|\big\{x\in S :\phi_i(x_i- q_i)>0\text{ for all }i\in [3]\big\} \Big|&\geq \frac{|S|}{16}\quad\text{and}\\[1ex]
\Big|\big\{x\in S:\phi_i(x_i- q_i)<0\text{ for all }i\in[3]\big\} \Big|&\geq \frac{|S|}{16}.
\end{align*}
\end{itemize}
\end{restatable}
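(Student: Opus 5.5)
Apply \Cref{lemma: balanced point with boundary} with $I=[3]$ to obtain $q\in[n]^3$ and $\phi\in\{\pm1\}^3$ with $|S(q,\phi)|\ge |S|/8$ and $|S(q,-\phi)|\ge |S|/8$. These orthants are closed (weak inequalities), whereas the third alternative asks for open orthants. So the plan is to separate the points of $S(q,\pm\phi)$ lying on the boundary hyperplanes $\{x_i=q_i\}$ from those in the open orthants. If both open orthants retain at least $|S|/16$ points, the third alternative holds with the same $q,\phi$. Otherwise, without loss of generality, the closed orthant $S(q,\phi)$ has at least $|S|/16$ of its points on the boundary. Hence some coordinate plane $P_i=\{x:x_i=q_i\}$ carries at least $|S|/48$ points of $S$, all in the closed quadrant determined by $\phi$ restricted to the other two coordinates.

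In that case, work inside the 2D slice $s$ fixing coordinate $i$ at $q_i$, with $T:=S\cap\calL_s$ and $|T|\ge |S|/48$. Apply \Cref{lemma: balanced point with boundary} to $T$ with $I$ the two free coordinates $\{j,l\}$. This gives $q'\in\calL_s$ and signs $\psi$ such that both closed quadrants of $T$ have at least $|T|/4\ge |S|/192$ points. Repeat the same dichotomy. If both open quadrants have at least $|T|/8\ge |S|/384\ge |S|/400$ points, the second alternative holds. Otherwise, some closed quadrant has at least $|T|/8$ points on one of the two boundary lines $\{x_j=q'_j\}$ or $\{x_l=q'_l\}$. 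Then a 1D slice (a line) contains at least $|T|/16\ge |S|/768$ points of $S$.

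Finally, on that line $\calL_{s''}$ with free coordinate $m$, take $q''$ to be a median of $S\cap\calL_{s''}$ in coordinate $m$. Both weak sides then contain at least half of the points on the line, roughly $|S|/1536\ge|S|/1600$, which gives the first alternative.

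The main obstacle is bookkeeping of the constants. In particular, each step halves the mass when it splits between the open part and the boundary, and there is a factor of $2$ or $3$ for choosing which hyperplane or line carries the boundary mass. The weak versus strict inequality conventions in the statement match this scheme exactly: the third and second alternatives use strict inequalities, while the first uses weak ones, so the median on a line suffices.

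The constants need care. In the 3D step, the boundary of a closed orthant is the union of three planes, so one plane gets at least $(|S|/16)/3=|S|/48$ points. In the 2D step, the boundary of a closed quadrant is the union of two lines, so one line gets at least $|T|/16$ points. With $|T|\ge |S|/48$, this gives at least $|S|/768$ on the line, and half of that is $|S|/1536$, within the $1/1600$ bound. If the arithmetic is tight, I would instead split the threshold unevenly, for example requiring open mass at least $|S|/16$ in 3D and at least $|S|/400$ in 2D, and recompute; the stated constants appear to leave enough slack for this.
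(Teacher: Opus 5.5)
Your proposal is correct and follows the paper's proof essentially step for step. Like the paper, you apply the boundary version of the balancing lemma in 3D, fall back to a heavy coordinate plane when an open orthant is light, repeat in 2D, fall back to a heavy line, and finish with a median (the paper's $|I|=1$ case of the same lemma). Your constants $|S|/48$, $|S|/768$, $|S|/1536$ are slightly sharper than the paper's $|S|/50$, $|S|/800$, $|S|/1600$, so the uneven rebalancing you mention at the end is not needed.
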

\begin{proof}
    By \Cref{lemma: balanced point with boundary}, we know that there exist $q^1\in [n]^k$ and a sign vector $\phi^1\in\set{\pm 1}^3$ such that
    $$
    \big|S(q^1,\phi^1)\big|\ge \frac{|S|}{8}\quad\text{and}\quad
    \big|S(q^1,-{\phi^1})\big|\ge \frac{|S|}{8}.
    $$
    If they also satisfy  
    \begin{align*}\Big|\big\{x\in S:\phi^1_i(x_i- q^1_i)> 0\text{ for all }i\in [3]\big\}\Big|&\geq \frac{|S|}{16}\quad\text{and}\\[1ex]
\Big|\big\{x\in S:\phi^1_i(x_i- q^1_i)< 0\text{ for all }i\in [3]\big\}\Big|&\geq \frac{|S|}{16},\end{align*}
    then we are done.
    Otherwise, we know that there exists a 2D slice $s^2$ such that $|\calL_{s^2}\cap S|\geq |S|/{50}$.
    
    Let $i\neq j$ be such that $s_i=s_j=*$. Applying \Cref{lemma: balanced point with boundary} again, we know that there exist a point $q^2\in\calL_{s^2}$ and signs $\phi^2_i,\phi^2_j\in\set{\pm 1}$ such that  
\begin{align*}
\Big|\big\{x\in S\cap \calL_{s^2}:\phi^2_i(x_i- q^2_i)\geq 0\text{\ and\ }\phi^2_j(x_j- q^2_j)\geq 0\big\} \Big|&\geq \frac{|S|}{200}\quad\text{and}\\[1ex]
\Big|\big\{x\in S\cap \calL_{s^2}:\phi^2_i(x_i- q^2_i)\leq 0\text{\ and\ }\phi^2_j(x_j- q^2_j)\leq 0\big\} \Big|&\geq \frac{|S|}{200}.
\end{align*}
    Similarly, if $\smash{s^2,q^2}$ and $\smash{\phi^2_i,\phi^2_j}$ satisfy the property specified in the second item of the lemma, then we are done. Otherwise, we know that there is a 1D slice $s^3$ with $\smash{|\calL_{s^3}\cap S|\geq |S|/800.}$ 
    
    In this case, the last item of the lemma follows by another application of \Cref{lemma: balanced point with boundary} on $\calL_{s^3}\cap S$.
    This finishes the proof of \Cref{lemma: balanced point!}.
\end{proof}

We introduce some notation for the statement of 
\Cref{lemma: delete}:

\def\bone{\mathbf{1}}
\def\ba{\mathbf{a}}

\paragraph{Notation.} Given a slice $s\in ([n]\cup\set{*})^k$, $q\in\calL_s$, and $\phi\in\set{\pm 1}^{\calF(s)}$, we let ($\calB$ stands for box)
\begin{align*}\calB_s(q,\phi)&\coloneqq \big\{x\in\calL_s: \phi_i(x_i-q_i)\geq 0 \text{ for all } i\in \calF(s)\big\}\quad\text{and}\\[1ex]
\calB_s^{\mathrm{in}}(q,\phi)&\coloneqq \big\{x\in\calL_s: \phi_i(x_i-q_i)>0 \text{ for all } i\in \calF(s)\big\}.
\end{align*}
When $s$ is the all-$*$ string, we drop the subscript $s$ in the notation  above for simplicity. 
Recall that we write $\bone_k$ to denote the all-$1$ $k$-dimensional vector.
In the rest of this section, we use $\ba^+_i$ to denote the sign vector in $\{\pm 1\}^k$ that is $+1$ everywhere except for $-1$ at the $i$-th coordinate; similarly $\ba^-_i$ denotes the opposite of $\ba^+_i$, which has $-1$ everywhere except for $1$ at the $i$-th coordinate.

We prove the following lemma for $\Cand^+(p)$; a similar lemma holds for $\Cand^-(p)$:

\begin{lemma}\label{lemma: delete}
Let $p : [n]^k \to \Gamma_k$ be a safe PI function. The following are true:
\begin{flushleft}\begin{enumerate}
    \item[(1)] Let $x \in [n]^k$ and $i \in [k ]$. Then we have 
    \begin{itemize}
        \item  If $p(x)_i\in \{0,1\}$, then  $\calB^{\mathrm{in}}(x, \ba_{i}^+) \cap \emph{\Cand}^+(p)=\emptyset$; 
        \item if $p(x)_i\in \{-1,0\}$, then %
        $\calB^{\mathrm{in}}(x, \ba^-_i) \cap \emph{\Cand}^+(p)=\emptyset$;  
        \item if $p(x)_i=-1$, then $\calB(x, \ba_i^-) \cap \emph{\Cand}^+(p)=\emptyset$.
    \end{itemize}
    \item[(2)]  Let $x\in [n]^k$. Then we have
    \begin{itemize}
        \item If $p(x)_{k+1}=1$, then we have $\calB^{\mathrm{in}}(x,\bone_k)\cap \emph{\Cand}^+(p)=\emptyset$;
        \item If $p(x)_{k+1}=-1$, then
        $\calB (x,-\bone_k)\cap \emph{\Cand}^+(p)=\emptyset$
    \end{itemize}
    \item[(3)] Let $s$ be a $(k-1)$-dimensional slice with $s_i\neq *$, and $x\in \calL_s$. Let $\phi$ be the all-$1$ vector over $[k]\setminus\set{i}$. If $p(x)_i\in\set{0,1,\geq}$ and $p(x)_{k+1}=+1$, then $\calB_s^{\mathrm{in}}(x, \phi) \cap \emph{\Cand}^+(p)=\emptyset$.
    \item[(4)] Let $s$ be a 1-dimensional slice with $s_i=*$, and $x\in\calL_s$. If $p(x)_j=-1$ for some $j\in[k+1]$, then either $\calB_s(x, +1 ) \cap \emph{\Cand}^+(p)=\emptyset$ or $\calB_s(x, -1 ) \cap \emph{\Cand}^+(p)=\emptyset$.
\end{enumerate}\end{flushleft}

\end{lemma}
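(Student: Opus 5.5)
We verify each item directly from the definition of $\Cand^+(p)$, which requires $J(p)\preceq y\preceq M(p)$, $p(y)_i\ne -1$ for all $i\in[k+1]$, and $y\notin\intt^+_i(p)$ for all $i\in[k+1]$. In every case we take a point $y$ in the relevant box and show it violates one of these requirements. We use only monotonicity of $p$ (\Cref{definition: monotone PI function}) and, in a few places, the safety condition on 1D slices.

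\textbf{Item (1).} Suppose $p(x)_i\in\{0,1\}$ and $y\in\calB^{\mathrm{in}}(x,\ba_i^+)$, so that $y_j>x_j$ for $j\neq i$ and $y_i<x_i$. Then $z\coloneqq y^{-i}$ satisfies $z_j\ge x_j$ for $j\ne i$ (note $y_j>x_j\ge 1$), and $z_i=y_i<x_i$. On the 1D slice through $z$ in direction $i$, let $z'$ be the point with $z'_i=x_i$. By monotonicity $p(z')_i\in\{0,1,\ge\}$, since $z'\succeq x$ and they agree in coordinate $i$. We want $p(z)_i=1$ strictly below, and for this we use safety on the 1D slice: the join $J$ of that slice has $J_i\ge z'_i>z_i$, so by \Cref{definition: safe}(1) we get $p(z)_i=+1$. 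Hence $y\in\intt^+_i(p)$ and $y\notin\Cand^+(p)$.

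The case $p(x)_i\in\{-1,0\}$ with $y\in\calB^{\mathrm{in}}(x,\ba_i^-)$ is symmetric. Here $y_j<x_j$ for $j\ne i$ and $y_i>x_i$, and $y^{+i}\preceq$ the appropriate shift of $x$. By safety condition (2) (using $M$ of the 1D slice) we get $p(y)_i=-1$, so $y$ is excluded because $p(y)_i=-1$. For $p(x)_i=-1$ and $y\in\calB(x,\ba_i^-)$ (non-strict), monotonicity item (2) gives $p(y)_i=-1$ directly if $y_i=x_i$. If $y_i>x_i$, we again apply the 1D safety argument at the point $y'$ with $y'_i=x_i$, where $p(y')_i=-1$ forces $p(y)_i=-1$ (or, alternatively, $y\not\preceq M$).

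\textbf{Item (2).} If $p(x)_{k+1}=1$ and $y\in\calB^{\mathrm{in}}(x,\bone_k)$, then $y^{-(k+1)}\succeq x$, so $p(y^{-(k+1)})_{k+1}=1$ and $y\in\intt^+_{k+1}(p)$. If $p(x)_{k+1}=-1$ and $y\preceq x$, then $p(y)_{k+1}=-1$ by monotonicity.

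\textbf{Item (3).} Here $y\in\calL_s$ with $y_i=x_i$ and $y_j>x_j$ for $j\ne i$. Then $y^{-i}\succeq x$ and $(y^{-i})_i=x_i$, so monotonicity gives $p(y^{-i})_i\in\{0,1,\ge\}$ and $p(y^{-i})_{k+1}=1$. This is exactly clause (2) of $\intt^+_{k+1}$, so $y$ is excluded.

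\textbf{Item (4).} Take a 1D slice in direction $i$ and $p(x)_j=-1$. If $j=k+1$, the lower half-line $\calB_s(x,-1)$ has $p=-1$ in the last coordinate by monotonicity. If $j\in[k]$ with $j\ne i$, the lower half-line also inherits $p(\cdot)_j=-1$, since it lies $\preceq x$ with the same $j$-coordinate. If $j=i$, we use safety on the slice: $x$ lies strictly above $M_s$, so $x\succ M_s$ in coordinate $i$. Every point above $x$ on the slice is then $\succ M_s$ and, by \Cref{definition: safe}(2), has $p(\cdot)_i=-1$. So the upper half-line $\calB_s(x,+1)$ is excluded (and $x$ itself has $p(x)_i=-1$).

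The main obstacle is item (1). The other items follow from one-line monotonicity arguments, but there the strict inequality $p(y^{-i})_i=1$ must be derived from a weaker value at a point with larger $i$-coordinate. This requires the 1D safety condition, and one has to check carefully that the join of that 1D slice indeed lies to the right.
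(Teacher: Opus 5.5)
Your proof is correct and follows essentially the same route as the paper: you use monotonicity plus the 1D safety condition to force $p(y^{-i})_i=1$ (resp.\ $p(y)_i=-1$) in item (1), direct monotonicity for items (2) and (3), and for item (4) the same case split the paper obtains by invoking the third bullet of (1) and the second bullet of (2). The only blemishes are harmless asides: the mention of $y^{+i}$ in the second bullet of (1) is unnecessary; the parenthetical ``alternatively, $y\not\preceq M$'' does not hold in general (the global $M(p)$ can lie above such $y$) and should be dropped; and in (4) you need not claim $x\succ M_s$, since every $y$ on the slice with $y_i>x_i$ already satisfies $y\succ x\succeq M_s$.
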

\begin{proof} 
We start with the first item of (1).
Suppose that $p(x)_i\in\set{0,1}$. Then by monotonicity we know that $p(z)_i\in\set{0,1,\geq}$ for every $z\succeq x$ with $z_i=x_i$. Now fix an arbitrary $z\succeq x$ with $z_i=x_i$ and consider the 1D slice $s$ where $s_i=*$ and $s_j=z_j$ for all $j\neq i$. We apply the safety condition of $p$ on this particular slice, which implies that $p(y)_i=+1$ for every $y\in\calL_s$ with $y_i<z_i$. Overall, this implies that $p(y)_i=1$ for all $y\in\calB(x-\mathbf{e}_i,\ba_i^+)$. So we have $\calB^{\mathrm{in}}(x,\ba^{+}_i)\subseteq \intt_i^+(p)$. 

    By a symmetric argument, if $p(x)_i\in\set{-1,0}$, we have $p(y)_i=-1$ for every $y\in\calB(x+\mathbf{e}_i,\ba^{-}_i)$. This implies  every $y\in \calB^{\mathrm{in}}(x, \ba^-_i)$ has $p(y)_i=-1$ and thus, $y\notin \Cand^+(p)$. Furthermore, if we~have $p(x)_i=-1$, then by monotonicity $p(y)_i=-1$ 
    for all $y\preceq x$ with $y_i=x_i$.
    Combining this with that $y\in \calB(x+\mathbf{e}_i,\ba_i^-)$ having $p(y)_i=-1$, we have 
 $y\notin \Cand^+(p)$
       for all $y\in\calB(x,\ba_i^-)$.

For the first item of (2), we have by monotonicity that $p(y)_{k+1}=1$ for all $y\succeq x$ and thus, every $y\in \calB^{\mathrm{in}}(x,\bone_k)$ is in $\intt^+_{k+1}(p)$.
For the second item, we have by monotonicity that $p(y)_{k+1}=-1$ for all $y\preceq x$ and thus, every $y\in \calB(x,-\bone_k)$ is not in $\Cand^+(p)$.

    For (3), by monotonicity we know that $p(y)_i\in\set{0,1,\geq}$ and $p(y)_{k+1}=+1$ for all $y\in \calB_s(x,\phi)$. By the definition of $\intt^+_{k+1} $, we have $\calB_s^{\mathrm{in}}(x,\phi)\subseteq \intt^+_{k+1}(p)$, which implies  $\calB_s^{\mathrm{in}}(x,\phi)\cap \Cand^+(p)=\emptyset$.

Finally, (4) is implied by the third item of (1) and the second item of (2).
\end{proof}

Now we are ready to prove \Cref{lemma: reduce constant fraction}, which we restate below.
\lemmareduceconstantfraction*
\begin{proof}[Proof of \Cref{lemma: reduce constant fraction}]
Let $q$ and $s$ be the point and slice provided by \Cref{lemma: balanced point!} applied on $S$ set to be $\Cand^+(p)$. Then let $q^1=q$ and $q^2,\cdots,q^7$ be $q \pm \mathbf{e}_1$, $q \pm \mathbf{e}_2$ and $q \pm \mathbf{e}_3$ respectively.  Assuming that $p'$ is an arbitrary safe PI function such that $p'(q^j)\in\set{-1,0,1}^3\times\set{\pm 1}$ for all $j\in[7]$, $p'\Rightarrow p$, and $\Sol(p')=\emptyset$. By working separately on the three cases of \Cref{lemma: balanced point!}, we will show that 
\begin{equation}\label{eq:heheeq}
\big|\Cand^+(p')\big|\leq \big(1-1/1600\big)\cdot\big|\Cand^+(p )\big|.
\end{equation}

\noindent\textbf{The dimension of $s$ is 3.}
    Let $\phi\in\set{\pm 1}^3$ be the sign vector given by \Cref{lemma: balanced point!}. We consider two cases: (i) $\phi_1=\phi_2=\phi_3$; (ii) $\phi_1\neq \phi_2=\phi_3$ (where the order is without loss of generality). 

    For case (i), assume without loss of generality that $\phi_1=\phi_2=\phi_3=1$. By (2) of \Cref{lemma: delete} and using $p'(q^1)_{k+1}\in \{\pm 1\}$, we have that either 
    \begin{equation}\label{eq:heheeq2}
    \calB^{\mathrm{in}}(q^1,\phi)\cap \Cand^+(p')=\emptyset\quad \text{or}\quad \calB^{\mathrm{in}}(q^1,-\phi)\cap\Cand^+(p')=\emptyset.\end{equation}
    from which \Cref{eq:heheeq} follows (with $1/16$ instead of $1/1600$).

    For case (ii), assume without loss of generality that $\phi_1=-1$ and $\phi_2=\phi_3=+1$. By the first item of \Cref{lemma: delete} and using $p'(q^1)_1\in \{-1,0,1\}$, we have \Cref{eq:heheeq2} and thus, %
    \Cref{eq:heheeq}.\medskip

\noindent\textbf{The dimension of $s$ is 2.} Assume that $s_2=s_3=*$ without loss of generality.  Let $\phi\in\set{\pm 1}^{\calF(s)}$ be the sign vector given by \Cref{lemma: balanced point!}. Consider two cases: (i) $\phi_2=\phi_3=1$; (ii) $\phi_2=1$,  $\phi_3=-1$.

    For case (i), by the first item of \Cref{lemma: delete}, if $p'(q^1)_1=-1$, then we have  $$\calB(q^1, (1,-1,-1)) \cap \Cand^+(p')=\emptyset;$$ if $p'(q^1)_4=-1$, then we have $$\calB(q^1, (-1,-1,-1)) \cap \Cand^+(p')=\emptyset.$$ Because $$\calB_{s}^{\mathrm{in}}(q^1, -\phi)\subseteq \calB(q^1, (1,-1,-1))  \quad\text{and}\quad\calB_{s}^{\mathrm{in}}(q^1, -\phi)\subseteq \calB(q^1, (-1,-1,-1)),$$ either case implies  $\calB_{s}^{\mathrm{in}}(q^1, -\phi) \cap \Cand^+(p')=\emptyset$ from which \Cref{eq:heheeq} follows.

    So we are left with the case when $p'(q^1)_1\in \{0,1\}$ and $p'(q^1)_4=+1$. By (3)  of \Cref{lemma: delete}, we know that $\calB_{s}^{\mathrm{in}}(q^1,\phi)\cap\Cand^+(p')=\emptyset$
    from which \Cref{eq:heheeq} follows. %

    For case (ii), by the first item of \Cref{lemma: delete}, if  $p'(q^1)_3=-1$, then  $$\calB(q^1, (-1,-1,1)) \cap \Cand^+(p')=\emptyset,$$ which implies $\calB_{s}^{\mathrm{in}}(q^1, -\phi) \cap \Cand^+(p')=\emptyset$;  if $p'(q^1)_2=-1$, then  
    $$\calB(q^1, (-1,1,-1)) \cap \Cand^+(p')=\emptyset,$$ which implies $\calB_{s}^{\mathrm{in}}(q^1, \phi) \cap \Cand^+(p')=\emptyset$. In either case, \Cref{eq:heheeq} follows.

    So we can assume below that $p'(q^1)_2\in \{0,1\}$ and $p'(q^1)_3\in \{0,1\}$. If we have $p'(q^1)_1\in \{0,1\}$ as well, then $q^1\preceq J(p')$, which means $J(p')\not\preceq x$ for all $x\in \calB_{s}^{\mathrm{in}}(q^1,\phi)\cup \calB_{s}^{\mathrm{in}}(q^1,-\phi) $ and thus, 
    $$\big(\calB_{s}^{\mathrm{in}}(q^1,\phi)\cup \calB_{s}^{\mathrm{in}}(q^1,-\phi)\big) \cap\Cand^+(p'),$$
    from which \Cref{eq:heheeq} follows.

    Now assume that $p'(q^1)_2\in \{0,1\}$, $p'(q^1)_3\in \{0,1\}$, and $p'(q^1)_1=-1$, in which case we define $q^*\coloneqq q^1-\mathbf{e}_1$. Note that $q^*$ is one of the points in $q^2,\ldots,q^7$ so we have $p'(q^*)\in \{-1,0,1\}^3\times \{\pm 1\}$. If $p'(q^*)_2\in \{0,1\}$, then by (1) of \Cref{lemma: delete}, we have  $$\calB^{\mathrm{in}}(q^*,(1,-1,1))\cap\Cand^+(p')=\emptyset,$$ which implies  $\calB_{s}^{\mathrm{in}}(q^1,-\phi)\cap\Cand^+(p')=\emptyset$. If $p'(q^*)_3\in \{0,1\}$, then by (1) of \Cref{lemma: delete}, we have  $$\calB^{\mathrm{in}}(q^*,(1,1,-1))\cap\Cand^+(p')=\emptyset,$$ which implies  $\calB_{s}^{\mathrm{in}}(q^1,\phi)\cap\Cand^+(p')=\emptyset$. In either case, \Cref{eq:heheeq} follows. 

    Finally, assume that $p'(q^*)_2=-1$ and $p'(q^*)_3=-1$. Using $p'(q^1)_1=-1$ and  monotonicity, we have that $\smash{p'(q^*)_1\in \{-1,0\}}$ and thus, $M(p')\preceq q^*\prec q^1$. %
    As a result, we have %
    $$\big(\calB_{s}^{\mathrm{in}}(q^1,\phi)\cup \calB_{s}^{\mathrm{in}}(q^1,-\phi)\big)\cap\Cand^+(p')=\emptyset,$$ 
    from which \Cref{eq:heheeq} follows.\medskip

    \noindent\textbf{The dimension of $s$ is 1.} Assume without loss of generality that $s_1=*$. Note that $p'(q^1)_i=-1$ for some $i\in[4]$;
    otherwise $q^1$ is a solution to  $\Tarski^*(n,3)$ which contradicts with our assumption that  $\Sol(p')=\emptyset$. By (4) of \Cref{lemma: delete}, we have either $$\calB_s(q^1, +1 ) \cap \Cand^+(p')=\emptyset\quad\text{or}\quad \calB_s(q^1, -1 ) \cap \Cand^+(p')=\emptyset,$$
    from which \Cref{eq:heheeq} follows.\medskip

We finish the proof of \Cref{lemma: reduce constant fraction} by combining all three cases.
\end{proof}

\section{Conclusion and Discussion}\label{sec:conclusion}
Our results tighten the query complexity bounds for $\Tarski(n,4)$ to $\Theta(\log^2 n)$ and $\Tarski^*(n,3)$ to $\Theta(\log n)$. 
Unlike previous algorithms for $\Tarski$ \cite{dang2011computational,FPS22,CL22,haslebacher2025levelset}, our algorithm is not time-efficient due to two independent reasons: (1) the framework based on safe functions (i.e., the safe PI function game) is currently not time-efficient~\cite{CLY23}, and (2) the existence of a good query point (i.e., )) is via a brute-force counting argument. 
It is an important open problem whether the algorithm can be implemented efficiently in terms of time complexity.

The recent line of work on the complexity of computing a Tarski fixed point collectively shows that the tight {query} complexity bounds are $\Theta(\log^2 n)$ for $k=2,3$~\cite{dang2011computational,EPRY20,FPS22}, and now $4$. Conceptually, this series of results deepens the mystery of what the correct query complexity is for $\Tarski(n,k)$ for general constants $k$; in particular, it highlights the possibility of a generic $O(f(k)\cdot \log^2 n)$ upper bound. Technically, the characterization of candidate sets provided in our paper works for arbitrary $k$.
Currently our algorithm works for $k=3$ because we can prove a geometric lemma (\Cref{lemma: reduce constant fraction}) that works against any given set of points $S\subseteq [n]^3$.
It may be possible to prove an analogous lemma in higher dimensions by exploiting structural properties of candidate sets. We believe that our characterization of candidate sets is an important ingredient for further progress in understanding the query complexity of $\Tarski(n,k)$.

\newpage
\appendix

\begin{algorithm}[!t]
\caption{$\ST$ for $\Tarski^*(n,k)$ given a PI-to-Query map $\Pi$}\label{algorithm:framework}

	\BlankLine
    
 Let $t\leftarrow 1$ and  $p^0$ be the safe PI function 
 used to initialize the game.

\While{true}{

Let $\smash{q^t\leftarrow \Pi(p^{t-1})}$ and
  $\smash{p^{(t,0)} \leftarrow  p^{t-1}}$; 
\cyan{query} $f(q^t)$.\label{line: queryyyy}

\For{each $i$ from $1$ to $k$}{
 If $p^{(t,i-1)}(q^t)_i\in  \{-1, 0, 1\}$, let $p^{(t,i)}\leftarrow p^{(t,i-1)}$.

Otherwise, let $$p^{(t,i)}\leftarrow \GPF\left(p^{(t,i-1)},q^t,i,f\big|_{p^{(t,i-1)}}(q^t)_i \right).$$

Note that $f\big|_{p^{(t,i-1)}}(q^t)_i$ can be computed using $f(q^t)$ queried earlier on line \ref{line: queryyyy}.
}

Let $$p^{(t,k+1)}\leftarrow \textsc{Update-Last-Coordinate}\left(
p^{(t,k)},q^t,f(q^t)_{k+1}\right)$$

Let $J\leftarrow J(p^{(t,k+1)})$ and $M\leftarrow M(p^{(t,k+1)})$.

Return $J $ if $p^{(t,k+1)}(J)=+1$, or return $M $ if $p^{(t,k+1)}(M)=-1$.

Otherwise, let $p^t\leftarrow p^{(t,k+1)}$ and $t\leftarrow t + 1$. 
}
\end{algorithm}
\begin{algorithm}[!t]
\caption{\textsc{Update-Last-Coordinate}$(p,q,b)$}\label{algorithm:framework2}

	\BlankLine
If $b=+1$, set $p(q')\leftarrow +1$ for all $q'\succeq q$ (including $q$ itself).

If $b=-1$, set $p(q')\leftarrow -1$ for all $q'\preceq q$ (including $q$ itself).

Return $p$.
\end{algorithm}

\section{Proof of \Cref{theorem:framework}}\label{appendix:framework}

We recall \Cref{theorem:framework} from \Cref{sec:overview}:

\theoremframework*

Let $\Pi$ be a PI-to-query map. 
The algorithm $\ST$ is presented in \Cref{algorithm:framework}.
Given a monotone function $f:[n]^k\rightarrow \{-1,0,1\}^k\times \{\pm 1\}$ that is the input to $\Tarski^*(n,k)$ (which is not necessarily safe), 
$\ST$ always maintains a PI function $p^t$ over $[n]^k$ that is safe. 
But, somewhat unnaturally, $p^t$ in general is \emph{not necessarily consistent with $f$ in the first $k$ coordinates}, as more and more queries are made by the algorithm; surprisingly, despite the inconsistency, we show that as long as one can find a solution to $\Tarski^*$ in the PI function $p$ (which evolves query by query), that point must be a solution to $\Tarski^*$ in $f$ as well.

We need the following definition to work with inconsistencies between $f$ and a PI function $p$. 

\begin{definition}Given a function $f:[n]^k\rightarrow \{-1,0,1\}^k\times \{\pm 1\}$  and a PI function $p:[n]^k\rightarrow \Gamma_k$~such that $f$ and $p$ {are consistent in the last coordinate},
  we define function $f|_{p}:[n]^k\rightarrow \{-1,0,1\}^k\times \{\pm 1\}$ as follows:
For any $x\in [n]^k$ and $i\in [k]$,
\begin{equation*}
f|_{p}(x)_i=
\left\{ \begin{array}{ll}
p(x)_i& \text{if $p(x)_i\in\{-1,0,1\}$};\\[0.8ex]
\max(f(x)_i,0) & \text{if $p(x)_i=\geq$};\\[0.8ex]
\min(f(x)_i,0) & \text{if $p(x)_i=\leq$};\\[0.9ex]
f(x)_i,& \text{if $p(x)_i=\diamond$},
\end{array}
\right.
\end{equation*}
and for any $x\in [n]^k$, 
{we have $f|_p(x)_{k+1}=f(x)_{k+1}$.}
\end{definition}

Note that $f|_p$ may disagree with $f$ in general in the first $k$ coordinates.
We record the following lemma about $f|_p$, stated as Lemma 10 in \cite{CLY23}:

\begin{lemma}
Let $f:[n]^k\rightarrow \{-1,0,1\}^k\times \{\pm 1\}$ be any monotone function and $p$ be any monotone PI function over $[n]^k$ such that $f$ and $p$ are consistent in the last coordinate.
Then $f|_p$ is  monotone and is consistent with $p$.
\end{lemma}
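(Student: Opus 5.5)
We need to prove that $f|_p$ is monotone and consistent with $p$, where $f$ is monotone and $p$ is a monotone PI function consistent with $f$ in the last coordinate.

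Consistency is immediate from the definition. For $i\in[k]$:
\begin{itemize}
\item if $p(x)_i\in\{-1,0,1\}$, then $f|_p(x)_i=p(x)_i$;
\item if $p(x)_i=\,\geq$, then $\max(f(x)_i,0)\in\{0,1\}$;
\item if $p(x)_i=\,\leq$, then $\min(f(x)_i,0)\in\{-1,0\}$;
\item if $p(x)_i=\diamond$, there is nothing to check.
\end{itemize}
The last coordinate agrees with $f$, which is consistent with $p$ there by assumption. Monotonicity in the last coordinate, i.e.\ conditions (4),(5) of \Cref{observation: alternative monotonicity}, is inherited directly from $f$.

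The main work is monotonicity in each coordinate $i\in[k]$. Fix $i$ and a point $x$. I verify conditions (1)--(3) of \Cref{observation: alternative monotonicity} for $h=f|_p$, together with the boundary requirements ($h(x)_i\ge 0$ when $x_i=1$ and $h(x)_i\le 0$ when $x_i=n$). The boundary requirements follow from items (6),(7) of \Cref{definition: monotone PI function}: there $p(x)_i$ already forces the right sign, and otherwise $f$ being monotone (as a sign function) forces it.

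The cleanest route is to observe that $h(\cdot)_i$ is obtained pointwise from $f(\cdot)_i$ by overwriting with $p$'s information. It then suffices to check the two "order" conditions, with $y\succeq x$ and $y_i=x_i$:
\begin{itemize}
\item[(A)] $h(x)_i\le h(y)_i$ (weak monotonicity along non-$i$ directions);
\item[(B)] $h(x)_i=1$ implies $h(x+\mathbf{e}_i)_i\ge 0$, and symmetrically $h(x)_i=-1$ implies $h(x-\mathbf{e}_i)_i\le 0$.
\end{itemize}
These are equivalent to (1)--(3).

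For (A), I case-split on the pair $(p(x)_i,p(y)_i)$, which is constrained by \Cref{definition: monotone PI function} (1)--(5). Suppose $h(x)_i=1$ and $h(y)_i\le 0$, to get a contradiction.
\begin{itemize}
\item If $p(x)_i=1$, then $p(y)_i=1$, a contradiction.
\item If $p(x)_i=\,\geq$, then $p(y)_i\in\{1,\geq\}$ and $h(y)_i\ge\max(f(y)_i,0)\ge 0$. Here $f(x)_i=1$, so $f(y)_i=1$ by monotonicity of $f$, hence $h(y)_i=1$.
\item If $p(x)_i=\diamond$, then $f(x)_i=1$ gives $f(y)_i=1$. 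Now check each possible value of $p(y)_i$:
\begin{itemize}
\item $p(y)_i=-1$ or $0$ would force $p(x)_i\in\{-1,0,\leq\}$ by (2)/(3a), contradicting $p(x)_i=\diamond$;
\item $p(y)_i=\,\leq$ would force $p(x)_i\in\{-1,\leq\}$ by (4), again a contradiction;
\item $p(y)_i\in\{1,\geq,\diamond\}$ gives $h(y)_i=1$.
\end{itemize}
\end{itemize}
The case $h(x)_i=0$, $h(y)_i=-1$ is symmetric.

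For (B), suppose $h(x)_i=1$ and $h(x+\mathbf{e}_i)_i=-1$.
\begin{itemize}
\item If $p(x+\mathbf{e}_i)_i=-1$, then by (2) $p(x)_i\in\{-1,0,\leq\}$, so $h(x)_i\le 0$, a contradiction.
\item Otherwise $f(x+\mathbf{e}_i)_i=-1$ and $p(x+\mathbf{e}_i)_i\in\{\leq,\diamond\}$. If also $p(x)_i\ne 1$, then $f(x)_i=1$, contradicting monotonicity of $f$.
\item If $p(x)_i=1$, item (1) gives $p(x+\mathbf{e}_i)_i\in\{1,0,\geq\}$, again a contradiction.
\end{itemize}

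The main obstacle is bookkeeping: making sure the case analysis over the six symbols is exhaustive. I expect no conceptual difficulty. Since this is Lemma 10 of \cite{CLY23}, one could alternatively cite it directly.
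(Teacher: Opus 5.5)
Your proof is correct. The paper does not prove this statement itself: it records it as Lemma 10 of \cite{CLY23} and uses it as a black box. Your write-up is therefore a self-contained replacement for that citation rather than a variant of an argument in the paper.

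Your route is the natural one. You reduce monotonicity of $f|_p$ in coordinate $i$ to the same-level comparison (A), the one-step condition (B), and the boundary conditions. You then do an exhaustive case split over the six symbols using items (1)--(7) of \Cref{definition: monotone PI function}. Every case checks out. This includes the ``symmetric'' case $h(x)_i=0$, $h(y)_i=-1$, which goes through by casing on $p(y)_i$ rather than $p(x)_i$.

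Two small remarks:
\begin{itemize}
\item The reduction to (A) and (B) is valid because a violation on a pair $a\preceq b$ with $b_i=a_i+1$ factors through $a\preceq a+\mathbf{e}_i\preceq b$.
\item Your explicit boundary check is genuinely needed. As literally stated, \Cref{observation: alternative monotonicity} does not include the requirement that $x+h(x)_{[k]}$ stay in $[n]^k$. In that check your ``otherwise $f$ forces it'' clause is vacuous, since items (6) and (7) already force $p(x)_i\in\{0,1,\geq\}$ (resp.\ $\{0,-1,\leq\}$) on the boundary.
\end{itemize}

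Compared with the citation, your version buys transparency. It shows that monotonicity of $f|_p$ uses only the monotonicity of $f$ and of $p$ in the first $k$ coordinates. The hypothesis of last-coordinate consistency enters solely in the consistency conclusion.
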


$\ST$ is based on a map called 
  \GPF$(p,q,\ell,b)$ built in \cite{CLY23}.
It takes four inputs: a PI function $p:[n]^k\rightarrow \Gamma_k$ that is safe,
  a point $q\in [n]^k$, and index $\ell$, and a sign $b\in \{-1,0,1\}$
  such that $p(q)_\ell$ and $b$ satisfy certain conditions, and returns a new PI function that remains safe.

We need a definition before stating the main
  technical theorem on $\GPF$: %

\begin{definition}\label{definition: respect}
    Given a PI function $p:[n]^k\rightarrow \Gamma_k$ and a function $f:[n]^k\rightarrow\{-1,0,1\}^k\times \{\pm 1\}$, we say $p$ \emph{respects} $f$ if for all $x\in[n]^k$ such that $p(x)_i\notin\set{\pm 1}$ for all $i\in[k]$, we have
    \begin{itemize}
        \item for each $i\in[k]$, if $p(x)_i=\hspace{0.08cm}\geq$, then $f(x)_i\geq 0$;
        \item for each $i\in[k]$, if $p(x)_i=\hspace{0.08cm}\leq$, then $f(x)_i\leq 0$;
        \item for each $i\in[k]$, if $p(x)_i=0$, then $f(x)_i= 0$; and
        \item if $p(x)_{k+1}\in \{\pm 1\}$, then $f(x)_{k+1}=p(x)_{k+1}$.
    \end{itemize}
\end{definition}
Note that $p$ respecting $f$ is  a much weaker condition than
  $p$ being consistent with $f$ (or $f\Rightarrow p$).

With \Cref{definition: respect}, we prove the following theorem about $\GPF$, which is a slightly stronger variant of Theorem 15 in \cite{CLY23}. We delay the proof to the end.

\begin{restatable}{theorem}{CLYTheoremfifteen}\label{theorem: CLYTheoremfifteen}
	Given a safe PI function $p$, a point $q\in[n]^k$, an index $\ell\in[k]$, and a sign $b\in\{-1,0,1\}$ such that $(p(q)_\ell,b)$ satisfies the following condition:
\begin{equation}\label{hehe100}
\text{$p(q)_{\ell}\in\{\geq,\diamond\}$ if $b=1$;\ \  $p(q)_{\ell}\in\{\leq,\diamond\}$ if $b=-1$;\ \  $p(q)_{\ell}\in\{\leq,\geq,\diamond\}$ if $b=0$,}
\end{equation}
the function $p^r$ returned by $\emph{\GPF}(p,q,\ell,b)$ satisfies the following properties: 
	\begin{enumerate}
		\item $p^r$ remains a safe PI function; 
		\item $p^r\Rightarrow p$; and
		\item $p^r(q)_{\ell}=b$.
	\end{enumerate}	
    {Additionally, if $f:[n]^k\rightarrow \{-1,0,1\}^k\times \{\pm 1\}$ is a monotone function such that (1)  $p$ respects $f$~and (2) $f|_p(q)_{\ell}=q_{\ell}+b$}, then $p^r$ must respect $f$ as well.
\end{restatable}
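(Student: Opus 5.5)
The plan is to take properties 1--3 directly from Theorem 15 of \cite{CLY23}, checking only that our PI functions carry the extra $(k+1)$-th coordinate. $\GPF$ only rewrites entries in the first $k$ coordinates, so $p^r(x)_{k+1}=p(x)_{k+1}$ for every $x$. Conditions (8)--(9) of \Cref{definition: monotone PI function} are therefore inherited from $p$. The safety conditions of \Cref{definition: safe} only involve the first $k$ coordinates, so the conclusion of Theorem 15 (a safe PI function $p^r\Rightarrow p$ with $p^r(q)_\ell=b$) lifts verbatim. Most of the work is the additional ``respects'' statement.

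For the respect statement, I would first reduce the set of points to check. Fix $x$ with $p^r(x)_i\notin\{\pm1\}$ for all $i\in[k]$. Since $p^r\Rightarrow p$, and $\pm1$ are maximal in the information order, $p(x)_i\notin\{\pm1\}$ for all $i\in[k]$ as well. So $x$ is a point at which $p$ already respects $f$. The last-coordinate condition is immediate because $p^r(x)_{k+1}=p(x)_{k+1}$. For $i\in[k]$ there are two cases. If $p^r(x)_i=p(x)_i$, the constraint on $f(x)_i$ is inherited. Otherwise $p^r(x)_i$ is a strict refinement of $p(x)_i$ inside $\{\ge,\le,0\}$, and I must show that $f(x)_i$ obeys the new symbol. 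Note that the $\pm1$ entries that $\GPF$ introduces through the safety closure never need to be checked.

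Next I would translate hypothesis (2) into information about $f$ itself. From the definition of $f|_p$ and condition \eqref{hehe100}, $f|_p(q)_\ell$ equals the sign $b$ (the statement writes $q_\ell+b$ in the simple-function convention). If $p(q)_\ell=\diamond$, this means $f(q)_\ell=b$. If $p(q)_\ell=\ge$, it gives $f(q)_\ell=1$ when $b=1$ and $f(q)_\ell\le 0$ when $b=0$. The case $p(q)_\ell=\le$ is symmetric. In each case the newly revealed fact ``$\ge$'' or ``$\le$'' at $(q,\ell)$ is true of $f$ itself, and the fact ``$0$'' is true up to the part already recorded in $p$.

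The main step is to open up the construction of $\GPF$ in \cite{CLY23} and classify the non-$\pm1$ entries it creates. I expect two types. Type (i) entries come from monotone propagation of the new value at $(q,\ell)$ along the $\ell$-th coordinate: $\ge$ at points $y\succeq q$ with $y_\ell=q_\ell$, and so on. These are handled by monotonicity of $f$ together with the previous paragraph. Type (ii) entries are $0$, $\ge$ or $\le$ values that $\GPF$ writes at points of the modified region to keep $J_s,M_s$ and the safety invariants intact. For these I would aim to show that every such entry is either placed at a point that simultaneously receives a $\pm1$ in some other coordinate, so it is exempt from respect, or is implied by monotonicity from $(q,\ell)$ or from a pre-existing non-$\pm1$ entry of $p$. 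This invariant is exactly why ``respects'' is restricted to points with no $\pm1$ entry.

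I expect this classification to be the main obstacle. It requires tracking each branch of $\GPF$ (the $b=1$, $b=-1$ and $b=0$ cases, and the recursive safety repair on lower-dimensional slices). I would attempt it by induction over the sequence of elementary updates performed by $\GPF$, maintaining ``the current PI function respects $f$'' as the invariant at each update.
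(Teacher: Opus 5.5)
Your outer structure is the paper's. You import items 1--3 from \cite{CLY23}; your remark that the $(k+1)$-th coordinate is untouched is a fine way to justify the lift. You restrict to points $z$ with $p^r(z)_i\notin\{\pm1\}$ for all $i\in[k]$. You then check the refinements that are monotone consequences of the new value at $(q,\ell)$ against $f$. You do this by reading off $f(q)_\ell$ from $f|_p(q)_\ell$, using monotonicity of $f$, and using that $p$ respects $f$ at $z$ for the $\le\to0$ and $\ge\to0$ cases. That part is fine and equivalent to the paper's use of the monotonicity of $f|_p$.

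The gap is the step you flag as the main obstacle. You never show that entries written by the safety repair cannot occur at such a point $z$. The invariant you propose is not one you can expect to maintain. A point is exempt only because safety forces a $+1$ there, and that is guaranteed only for the final, safe $p^r$. The intermediate functions inside \texttt{Generate-PI-Function} are not safe. The repair also writes information derived from safety that the non-safe $f$ need not satisfy. So ``the current function respects $f$'' cannot be carried update by update through the recursive repair.

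The missing idea is to argue once about the output, using item 1 as a black box. The paper does this for \texttt{Generate-PI-Function-Plus}, the other two subroutines being symmetric. Let $p'$ be the function after writing the value at $(q,\ell)$ and its monotone consequences, before the repair. Fix $z$ with $p^r(z)_i\notin\{\pm1\}$ for all $i\in[k]$, and suppose $p^r(z)_{i^*}\neq p'(z)_{i^*}$.
\begin{enumerate}
\item The repair changes $(z,i^*)$ only because of some $y\succeq z$ with $z_{i^*}\le y_{i^*}$ and $z_j=y_j$ for every $j$ with $p'(y)_j\notin\{1,0,\ge\}$.
\item Let $s$ be the slice fixing exactly those $j$. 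Then $y$ is a postfixed point of $p'$ on $s$, so $z\preceq y\preceq J_s(p')$.
\item If $z=J_s(p')$, then $p'(z)_{i^*}$ is already in $\{0,\ge\}$ and the repair leaves it unchanged.
\item Otherwise $z\prec J_s(p')\preceq J_s(p^r)$, since $p^r\Rightarrow p'$. Condition (1) of \Cref{definition: safe} for the safe $p^r$ then gives $p^r(z)_i=1$ for some $i$, contradicting the choice of $z$.
\end{enumerate}
Hence every refinement at such a $z$ is already present in $p'$, i.e.\ it is one of your type-(i) entries, and your argument for those finishes the proof. Without this step, your proposal reduces the theorem to an unproved classification claim.
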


\def\Alg{\texttt{Point-to-Query}}

\begin{proof}[Proof of \Cref{theorem:framework}]%
First, $p^0$ is the safe PI function that only satisfies boundary conditions, namely, for all $x\in[n]^k$ and $i\in[k+1]$
\begin{itemize}
    \item $p^0(x)_i=\geq$ if $i\in[k]$ and $x_i=1$;
    \item $p^0(x)_i=\leq$ if $i\in[k]$ and $x_i=n$; and
    \item $p^0(x)_i=\diamond$ otherwise.
\end{itemize}
Thus, as our base case, $p^0$ is a safe PI function and $p^0$ respects $f$.

It is easy to prove by induction using \Cref{theorem: CLYTheoremfifteen} and 
  how \textsc{Update-Last-Coordinate} works that for 
  all $t\ge 1$:
\begin{enumerate}
    \item $p^{t}$ is a safe PI function that respects $f$;
    \item $p^t(q^t)\in \{-1,0,1\}^k\times \{\pm 1\}$; and 
    \item $p^t\Rightarrow p^{t-1}$.
\end{enumerate}
As a result, the main while loop must end within $Q$ rounds; otherwise, it is a contradiction with the assumption that $\Pi$ can win the safe PI function game within $Q$ rounds, given (1--3) above.
  
Therefore, within $Q$ rounds it must be the case that 
  either $$f\left(J\left(p^{(t,k+1)}\right)\right)_{k+1}=+1\quad\text{or}\quad f\left(M\left(p^{(t,k+1)}\right)\right)_{k+1}=-1$$ 
  for some $t\le Q$.
Assume that it is the former without loss of generality and let $J=J(p^{(t,k+1)})$.
As $p^{(t,k+1)}$ is safe, 
  we have from \Cref{proposition: J and M} that $\smash{p^{(t,k+1)}(J)_{[k]}\in \{0,\ge \}^k}$.
It then follows from $p^{(t,k+1)}$ respecting $f$ and $f(J)=+1$
  that $J$ must be a solution to $\Tarski^*(n,k)$ in $f$.

This finishes the proof that $\ST$ always finds a solution to $\Tarski^*(n,k)$ in $f$. For its query complexity, we just note that in each while loop, $\ST$ makes three queries on $f$ so its query complexity is $Q$.
\end{proof}

Finally, we prove \Cref{theorem: CLYTheoremfifteen}:

\begin{proof}[Proof of \Cref{theorem: CLYTheoremfifteen}]
    Items (1), (2), and (3) in this theorem were proved in \cite{CLY23}. 
    
    Recall that in \cite{CLY23}, the procedure $\GPF$ consists of three subroutines: $\texttt{Generate-PI-Function-Plus}$, $\texttt{Generate-PI-Function-Minus}$, and $\texttt{Generate-PI-Function-Zero}$. 

    Our goal here is to show the following for the subroutine $\texttt{Generate-PI-Function-Plus}$. The other two subroutines can be shown similarly.
\begin{lemma}\label{lemma: fixed point plus}
	Given a monotone function $f:[n]^k\rightarrow \set{-1,0,1}^k\times\set{\pm 1}$, a PI function $p$ over $[n]^k$, a point $q\in [n]^k$ and a coordinate $\ell\in[k]$ such that
	\begin{itemize}
		\item $p$ is safe; 
		\item $p(q)_{\ell}\in\{\geq,\diamond\}$; 
		\item $f|_p(q+\mathbf{e}_{\ell})_{\ell}\geq 0$; and
		\item $p$ respects $f$,
	\end{itemize}
	the function $p^r$ returned by $\emph{\texttt{Generate-PI-Function-Plus}}(p,q,\ell)$ also respects $f$.
\end{lemma}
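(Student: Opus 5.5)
We need to show that \texttt{Generate-PI-Function-Plus}$(p,q,\ell)$ returns a $p^r$ that respects $f$. Since \Cref{definition: respect} constrains only points $x$ where $p^r(x)_i\notin\{\pm 1\}$ for all $i\in[k]$, the plan is to classify the points at which $p^r$ differs from $p$.

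First I would recall from \cite{CLY23} the explicit form of the output. The procedure sets $p^r(q)_\ell=1$ and propagates this value by monotonicity and safety. Concretely, it sets coordinate $\ell$ to $+1$ on a region of the form $\{y: y\succeq q',\ y_\ell=q_\ell\}$ and on the points below it in the relevant 1D slices. It then upgrades some entries to $\ge$, or to $0$ at the join $J_s$ of certain slices. It also re-resolves the slices whose join $J_s$ changed, and this may add new $+1$ or $0$ entries in other coordinates.

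At any point where $p^r(x)$ contains some $\pm1$ entry in $[k]$, there is nothing to check. For the last coordinate, \texttt{Generate-PI-Function-Plus} does not touch coordinate $k+1$, and $p$ already respects $f$ there. So the only points to verify are those $x$ at which every first-$k$ entry of $p^r(x)$ lies in $\{0,\le,\ge,\diamond\}$, and at which some entry was either newly set or changed relative to $p$.

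Next, for each such $x$ and each coordinate $i$, I would trace the new symbol back to a derivation. A new $\ge$ at $x$ comes from a $+1$ or $0$ at some $y\preceq x$ with $y_i=x_i$, typically $q$ itself or a point of the propagated region. I would argue $f(x)_i\ge 0$ in two steps. First, the hypothesis $f|_p(q+\mathbf e_\ell)_\ell\ge 0$, together with $q\prec q+\mathbf e_\ell$, gives information on $f$ in coordinate $\ell$ near $q$. Second, monotonicity of $f|_p$, which holds by the lemma above, transfers this to $x$. Wherever $p(x)$ itself has no $\pm1$, $f|_p$ agrees in sign with $f$, and this is where the "respects" hypothesis on $p$ is used.

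A new $0$ can arise only at a join point $J_s(p^r)$ with no $\pm1$ entries. There I would show that $J_s(p^r)$ is a fixed point of $f$ on $s$. The join is postfixed for $f|_p$. If some free coordinate had $f=+1$, the safety structure of $p^r$ would force a $+1$ entry at that point, contradicting the assumption that $x$ carries no $\pm1$ entry. This would be done with the join-semilattice argument of \Cref{proposition: semilattice} and \Cref{proposition: J and M}, applied to $f$ restricted to points with no $\pm1$ entries.

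The main obstacle is that $f$ itself is not consistent with $p$, so monotonicity chains of $f$ may pass through points where $p$ has $\pm1$ entries that contradict $f$. I would handle this by working throughout with $f|_p$, which is monotone and consistent with $p$. The argument then needs that every chain used to justify a symbol at a "clean" point $x$ yields an inequality $f|_p(x)_i\ge0$. Because $p(x)_i\notin\{\pm1\}$, this implies $f(x)_i\ge 0$ directly from the definition of $f|_p$. The delicate case is the new $0$ entries, since these require both inequalities. I expect to need the safety condition of $p^r$ on the relevant slice to rule out $f|_p(x)_i=-1$, by showing that otherwise $p^r$ would have recorded $-1$ there.
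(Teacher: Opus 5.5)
\textbf{There is a genuine gap.} Restricting to points $z$ where $p^r(z)$ has no $\pm1$ entry in $[k]$ is the right start. But the step that carries the proof is missing, and the plan you put in its place would not work.

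You propose to justify each new symbol at such a clean point by tracing its derivation back and applying monotonicity of $f|_p$. This can succeed only for symbols derived directly from the new $+1$ entries on $\{x\succeq q: x_\ell=q_\ell\}$, because the hypothesis $f|_p(q+\mathbf{e}_\ell)_\ell\ge 0$ is the only new information that $f|_p$ reflects. Entries created by the safety-propagation stage (your ``re-resolving'' of slices whose join moved) have no counterpart in $f|_p$. The function $f|_p$ is consistent with $p$, not with $p^r$, and $f$ is not safe. So a derivation chain through a safety-derived $+1$ gives no inequality on $f|_p$ or on $f$.

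What you need, and what the paper proves, is that such changes never land on a clean point. Let $p'$ be the monotone closure of $p$ after setting coordinate $\ell$ to $+1$ on $\{x\succeq q: x_\ell=q_\ell\}$. If the safety stage changes coordinate $i^*$ at $z$, then $z\preceq J_s(p')$ for a suitable slice $s$.
\begin{itemize}
\item The case $z=J_s(p')$ produces no change, since there $p'(z)_{i^*}\in\{0,\ge\}$ already.
\item If $z\prec J_s(p')\preceq J_s(p^r)$, then safety of $p^r$ forces $p^r(z)_i=1$ for some $i$, so $z$ is exempt from the respect condition.
\end{itemize}
Consequently the only changes at clean points are in coordinate $\ell$, at $z=x+\mathbf{e}_\ell$ with $x\succeq q$ and $x_\ell=q_\ell$. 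Each is either $\diamond\to{\ge}$ or ${\le}\to 0$.

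Your handling of new $0$ entries is also misdirected. At clean points they do not come from joins of slices. They come exactly from the upgrade ${\le}\to 0$ at such a $z$, and no safety argument is needed there:
\begin{itemize}
\item Since $p^r\Rightarrow p$, $p(z)$ has no $\pm1$ entry, so the fact that $p$ respects $f$ gives $f(z)_\ell\le 0$.
\item Because $z\succeq q+\mathbf{e}_\ell$ with equal $\ell$-th coordinate, monotonicity of $f|_p$ and $f|_p(q+\mathbf{e}_\ell)_\ell\ge 0$ give $f|_p(z)_\ell\ge 0$. Together with $f|_p(z)_\ell=\min(f(z)_\ell,0)$, this yields $f(z)_\ell\ge 0$.
\end{itemize}
The $\diamond\to{\ge}$ case is the same argument with $f|_p(z)_\ell=f(z)_\ell$.

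The argument you sketch instead is not valid. It claims $J_s(p^r)$ is a fixed point of $f$ on $s$ because otherwise safety of $p^r$ would force a $+1$ there. But safety of $p^r$ is a constraint on $p^r$, not on $f$, and it says nothing about the sign of $f(J_s(p^r))_i$. Likewise, your assertion that this join is postfixed for $f|_p$ is unjustified, since $f|_p$ only reflects $p$.
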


\begin{proof}
    Fix an arbitrary $z\in[n]^k$ such that $p^r(z)_i\neq 1$ for all $i\in[k]$. Fix an ${i^*}\in[k]$ such that $p^r(z)_{i^*}\neq p(z)_{i^*}$. Then there are only two possibilities: (1) $p(z)_{i^*}=\diamond$ and $p^r(z)_{i^*}=\geq$ and (2) $p(z)_{i^*}=\leq$ and $p^r(z)_{i^*}=0$. 
    
    This means either $p^r(z)_{i^*}= p'(z)_{i^*}\neq p(z)_{i^*}$ or $p^r(z)_{i^*}\neq p'(z)_{i^*}= p(z)_{i^*}$. We first show that it must be the case where $p^r(z)_{i^*}= p'(z)_{i^*}\neq p(z)_{i^*}$.

    To see it, assume that there is $y$ such that (a) $z\preceq y$; (b) $z_{i^*}-1< y_{i^*}$ ($z_{i^*}\leq y_{i^*}$); and (c) $z_j=y_j$ for all $j$ with $p'(y)_j\not\in\{1,0,\geq\}$ on line~5. Define a slice $s$ as follows:
        \begin{equation*}
            s_j\coloneqq
            \left\{ \begin{array}{ll}
            y_j & {p'(y)_j\not\in\{1,0,\geq\}};\\[0.5ex]
            * & {\text{otherwise}}.
            \end{array}
            \right.
        \end{equation*}
    Then we have $z,y\in \calL_s$ and $z\preceq J_s(p')$ (given that $z\preceq y$ and $y\preceq J_s(p')$). On the one hand, if $z= J_s(p')$, then since $p'(J_s(p'))_{i^*}\in\set{0,\geq}$, we know that $p^r(z)_{i^*}=p'(z)_{i^*}$. On the other hand, if $z\prec J_s(p')$, then $z\prec J_s(p')\preceq J_s(p^r)$. Since $p^r$ is safe, we know that $p^r(z)_i=1$ for some $i\in[k]$, leading to a contradiction.

    Thus from now assume that $p^r(z)_{i^*}=p'(z)_{i^*}\neq p(z)_{i^*}$. We know it must be the case where ${i^*}=\ell$ and $z=x+e_{\ell}$ for some $x\succeq q$ and $x_{\ell}=q_{\ell}$. Then we have $f|_p(z)_{\ell}\geq 0$ by the third condition. We check the two possibilities:
    \begin{itemize}
        \item If $p(z)_{\ell}=\diamond$, then $f|_p(z)_{\ell}\geq 0$ implies $f(z)_{\ell}\geq 0$. In this case, we know that $p^r(z)_{\ell}=\geq$, which respects $f(z)_{\ell}\geq 0$.
        \item If $p(z)_{\ell}=\leq$, since $p$ respects $f$, we know that $f(z)_{\ell}\leq 0$. Moreover, $p(z)_{\ell}=\leq$ and $f|_p(z)_{\ell}\geq 0$ imply $f(z)_{\ell}\geq 0$. Thus we conclude that $f(z)_{\ell}=0$. In this case, we know that $p^r(z)_{\ell}=0$, which respects $f(z)_{\ell}=0$.\qedhere
    \end{itemize}
\end{proof}
This finishes the proof of 
\Cref{theorem: CLYTheoremfifteen}.
\end{proof}

\begin{flushleft}
\bibliographystyle{alpha}
\bibliography{ref}
\end{flushleft}

\end{document}